\newcommand{\grap}[2]{\ensuremath{\langle #1,#2 \rangle}}
\newcommand{\expa}{\ensuremath{\mathit{exp}}}
\newcommand{\unexpa}{\ensuremath{\mathit{unexp}}}
\newcommand{\update}{\ensuremath{\mathit{update}}}
\newcommand{\succe}{\ensuremath{\mathit{succ}}}
\newcommand{\ct}{\ensuremath{\mathsc{ct}}}
\newcommand{\st}{\ensuremath{\mathsc{st}}}
\newcommand{\rl}{\ensuremath{\mathsc{rl}}}
\newcommand{\bl}{\ensuremath{bl}}
\DeclareMathSymbol{\FORALL}   {\mathord}{symbols}{"38}
\DeclareMathSymbol{\EXISTS}   {\mathord}{symbols}{"39}
\DeclareMathSymbol{\SUCHTHAT} {\mathbin}{symbols}{"01}
\DeclareMathAlphabet{\mathsc} {OT1}{cmr}{m}{sc}
\def\Exists#1#2{{\EXISTS#1}: #2}
\DeclareSymbolFont{AMSb}{U}{msb}{m}{n}
\DeclareMathSymbol{\N}{\mathbin}{AMSb}{"4E}
\DeclareSymbolFontAlphabet{\mathbb}{AMSb}
\newcommand{\set}[1]{\ensuremath{\{#1\}}}
\newcommand{\mset}[1]{\ensuremath{\{\mathit{#1}\}}}
\newcommand{\setmin}[2]{\ensuremath{#1\!\setminus\!#2}}
\newcommand{\clos}[1]{\ensuremath{\mathit{clos}(#1)}}
\newcommand{\SHOQ}{$\mathcal{SHOQ}$}
\newcommand{\ALCHOQ}{$\mathcal{ALCHOQ}$}
\newcommand{\SHQ}{$\mathcal{SHQ}$}
\newcommand{\SHIQ}{\ensuremath{\mathcal{SHIQ}}\xspace}
\newcommand{\SHIQD}{\ensuremath{\mathcal{SHIQ}(\mathbf{D})}\xspace}
\newcommand{\ALC}{\ensuremath{\mathcal{ALC}}\xspace}
\newcommand{\ALCH}{\ensuremath{\mathcal{ALCH}}\xspace}
\newcommand{\Ind}{\mathbf{I}}
\newcommand{\I}{\mathcal{I}}
\newcommand{\Int}{\mathcal{I}}
\newcommand{\DeltaI}{{\Delta^{\I}}}
\newcommand{\C}{\mathbf{C}}
\newcommand{\sqs}{\sqsubseteq}
\newcommand{\trans}{\mathsf{Trans}}
\newcommand{\sqsast}{\mbox{$\sqs\!\!\!\!\!{\scriptstyle {}^\ast}\,$}}
\newcommand{\Lc}[1]{\ensuremath{\mathcal{L}(#1)}}
\newcommand{\roo}{\varepsilon}
\newenvironment{program}{\[\begin{array}{rll}}{\end{array}\]}
\newcommand{\tsrule}[2]{\ensuremath{\mathit{#1} &\gets& \mathit{#2}\\}}
\newcommand{\prule}[2]{\ensuremath{\mathit{#1}\gets\mathit{#2}}}
\newcommand{\psrule}[2]{\ensuremath{\mathit{#1\hspace{-0.1cm}}\gets\hspace{-0.1cm}\mathit{#2}}}
\newcommand{\nprule}[3]{\ensuremath{\mathit{#1}: \mathit{#2} \gets \mathit{#3}}}
\newenvironment{programn}{\[\begin{array}{rrll}}{\end{array}\]}
\newenvironment{programm}{\normalsize \[\begin{array}{llll}}{\end{array}\] \normalsize}
\newcommand{\nsrule}[3]{\ensuremath{\mathit{#1\hspace{-0.1cm}:}&\hspace{-0.4cm}\mathit{#2}&\hspace{-0.3cm}\gets&\hspace{-0.35cm}\mathit{#3} \\}}
\newcommand{\nrule}[3]{\ensuremath{\mathit{#1:}&\mathit{#2}&\gets&\mathit{#3} \\}}
\newcommand{\naf}[1]{\ensuremath{\mathit{not}~ #1}}
\newcommand{\NAF}{\ensuremath{\mathit{not}}}
\newcommand{\pred}[1]{{#1}}
\newcommand{\ground}[2]{{#1}_{#2}}
\newcommand{\preds}[1]{\ensuremath{\mathit{preds}(#1)}}
\newcommand{\upreds}[1]{\ensuremath{upreds(#1)}}
\newcommand{\bpreds}[1]{\ensuremath{bpreds(#1)}}
\newcommand{\lits}[1]{\ensuremath{lits(#1)}}
\newcommand{\atoms}[1]{\ensuremath{atoms(#1)}}
\newcommand{\vars}[1]{\ensuremath{\mathit{vars}(#1)}}
\newcommand{\exptime}{\textsc{exptime}}
\newcommand{\np}{\textsc{np}}
\newcommand{\nexptimenp}[1]{\ensuremath{\textsc{nexptime}^{\textsc{np}}}}
\newcommand{\nexptime}[1]{\ensuremath{\textsc{nexptime}}}
\newcommand{\xnexptime}[1]{\ensuremath{{#1}\mbox{-}\textsc{nexptime}}}
\newcommand{\CS}{\ensuremath{\mathit{CS}}}
\newcommand{\numberrestless}[3]{(\leq #1~#2.#3)}
\newcommand{\numberrestgreater}[3]{(\geq #1~#2.#3)}
\newcommand{\posi}[1]{\ensuremath{{#1}^{+}}}
\newcommand{\nega}[1]{\ensuremath{{#1}^{-}}}
\newcommand{\lit}[1]{\ensuremath{\mathit{#1}}}
\newcommand{\axiom}[2]{\ensuremath{\mathit{#1} \sqs \mathit{#2}}}
\newcommand{\taxiom}[2]{\ensuremath{\mathit{#1} &\sqs &\mathit{#2}\\}}
\newenvironment{knowb}{\[\begin{array}{rll}}{\end{array}\]}
\newcommand{\cts}[1]{\ensuremath{\mathit{cts}{(#1)}}}
\newcommand{\dlrom}{\ensuremath{\mathcal{DLRO}^{-\set{\leq}}}}
\newenvironment{programxy}{\begin{array}{rll}}{\end{array}}
\newcommand{\dlpluslog}{\ensuremath{\mathcal{DL}\text{+}\mathit{log}}}
\newcommand{\EF}{\ensuremath{\mathit{EF}}}
\newcommand{\ef}{\ensuremath{\mathit{ef}}}
\newcommand{\F}{\ensuremath{\mathit{F}}}
\newcommand{\ES}{\ensuremath{\mathit{ES}}}
\newtheorem{definition}{Definition}
\newtheorem{example}{Example}
\newtheorem{lemma}{Lemma}
\newtheorem{claim}{Claim}
\newtheorem{proposition}{Proposition}
\newtheorem{corollary}{Corollary}
\newcommand{\qed}{\nobreak \ifvmode \relax \else
      \ifdim\lastskip<1.5em \hskip-\lastskip
      \hskip1.5em plus0em minus0.5em \fi \nobreak
      \vrule height0.75em width0.5em depth0.25em\fi}
\begin{document}
\title[Reasoning with FoLPs and f-hybrid
KBs]{Reasoning with Forest Logic Programs and f-hybrid Knowledge Bases\footnote{A preliminary version of this paper appeared in the
proceedings of the  \emph{European Semantic Web Conference 20009
(ESWC2009). We extended that paper with detailed examples, a more detailed description of the algorithm and of the fragment of f-hybrid knowledge bases, a detailed characterisation of simple FoLPs, as well as with proofs for all theorems.}
\cite{feier+heymans-HybridReasoningForestLogicPrograms:09}.}}

\author[Cristina Feier and Stijn Heymans]{CRISTINA FEIER, STIJN HEYMANS\thanks{This work is partially supported by the Austrian Science Fund (FWF) under the projects P20305 and P20840, and by the European Commission under the project OntoRule (IST-2009-231875).}\\
Knowledge-Based Systems Group, Institute of Information Systems\\
Vienna University of Technology \\
Favoritenstrasse 9-11, A-1040 Vienna, Austria \\
\email{\{feier,heymans\}@kr.tuwien.ac.at}}
\maketitle

\begin{abstract}
Open Answer Set Programming (OASP) is an undecidable framework for integrating ontologies and rules. Although several decidable fragments of OASP have been identified, few reasoning procedures exist. In this article, we provide a sound, complete, and terminating algorithm for satisfiability checking w.r.t. Forest Logic Programs (FoLPs), a fragment of OASP where rules have a tree shape and allow for inequality atoms and constants. The algorithm establishes a decidability result for FoLPs. Although believed to be decidable, so far only the decidability for two small subsets of  FoLPs, local FoLPs and acyclic FoLPs, has been shown. We further introduce f-hybrid knowledge bases, a hybrid framework where \SHOQ{} knowledge bases and forest logic programs co-exist, and we show that reasoning with such knowledge bases can be reduced to reasoning with forest logic programs only. We note that f-hybrid knowledge bases do not require the usual (weakly) DL-safety of the rule component, providing thus a genuine alternative approach to current integration approaches of ontologies and rules.
\end{abstract}

\begin{keywords}
Forest Logic Programs, finite model property, f-hybrid knowledge
bases, open answer sets, integration of rules and ontologies
\end{keywords}

 \section{Introduction}
\label{sec:intro}

Integrating Description Logics (DLs) with rules for the Semantic Web has received considerable attention. Such approaches for combining rules and ontologies are \emph{Description Logic Programs} \cite{grosof}, \emph{DL-safe rules} \cite{motik}, \dlpluslog{} \cite{rosati-kr2006}, \emph{dl-programs} \cite{eiter-ai2008}, \emph{Description Logic Rules} \cite{Krotzsch+Rudolph+Hitzler-DLRules:2008}, and Open Answer Set Programming (OASP) \cite{heymans-acm2008}. OASP combines attractive features from the DL and the Logic Programming (LP) world: an open domain semantics from the DL side allows for stating generic knowledge, without the need to mention actual constants, and a rule-based syntax from the LP side supports nonmonotonic reasoning via \emph{negation as failure}. Concretely, Open Answer Set Programming is an extension of (unsafe) function-free Answer Set Programming \cite{gelfond88stable} with open domains, i.e., the syntax remains the same, the semantics is still stable-model based, but programs are interpreted w.r.t. open domains, i.e., non-empty arbitrary domains which extend the Herbrand universe.
\begin{example}
Consider the following program:
\begin{program}
\tsrule{fail(X)}{\naf{pass(X)}}
\tsrule{pass(john)}{}
\end{program}
 Although the predicate $fail$ is not satisfiable under the ordinary answer set semantics -- the only answer set being $\{pass(john)\}$ -- it is satisfiable under the open answer set semantics.  If one considers, for example, the universe $\mset{\lit{john},x}$, with $x$ some individual which does not belong to the Herbrand universe, there is an open answer set $\mset{pass(john),fail(x)}$ which satisfies $fail$.
\end{example}

Open Answer Set Programming is undecidable. One way to obtain decidable fragments is to impose syntactical restrictions while carefully safe-guarding enough expressiveness for integrating rule- and ontology-based knowledge. Such restrictions typically ensure the \textit{tree-model property}: predicates are either unary or binary, and if a unary predicate $p$ is satisfiable then there is a model which can be seen as a labeled tree such that: each node of the tree is labeled with a set of unary predicates, the label of the root includes $p$, and each arc is labeled with a set of binary predicates.

Such a restriction led to \emph{Conceptual Logic Programs (CoLPs)} \cite{heymans-amai2006} which are able to simulate reasoning in the DL \SHQ{}. CoLPs make use only of unary and binary predicates and disallow the presence of constants in programs. They also impose some constraints on the shape of rules: unary and binary rules are tree-shaped rules which have as head a single unary atom and binary atom, respectively. The tree-like structure of rules refers to the chaining pattern of rule variables: one variable can be seen as the root of a tree and the others as successors of the root such that for every arc in the tree there is a positive binary literal in the body which connects the two corresponding variables. Inequalities between `successor' variables can also appear in the body of such a rule; we will refer to the set of literals in the body of a rule formed only with the help of the `root' variable as the `local part' of the rule and to the remaining part of the rule body as the `successor part' of the rule. Constraints, i.e., rules with empty head, are also allowed, but their body also has to be tree-shaped, so that they can be simulated via unary rules. Another type of rules which can appear in CoLPs are so-called \emph{free rules} which have one of the following shapes: $\prule{a(X)\lor \naf{a(X)}}{}$ or $\prule{f(X,Y)\lor \naf{f(X,Y)}}{}$, where $a$ is a unary predicate and $f$ is a binary predicate. Conceptual Logic Programs were proved to be decidable by a reduction of satisfiability checking to checking non-emptiness of two-way alternating tree automata \cite{heymans-amai2006}.
\begin{example} The following program $P$ is a CoLP which describes the fact that somebody is happy if she meets a friend who is happy or an enemy who is unhappy, and somebody is unhappy if she meets an enemy who is happy or a friend who is not happy. This is expressed by means of four unary tree-shaped rules, $r_1$-$r_4$, each of these rules having $X$ as the root variable and $Y$ as the successor of $X$. Furthermore, somebody is happy if she has at least two different friends: rule $r_5$ captures this knowledge in a tree-style fashion, $X$ being the root of a tree, and $Y$ and $Z$ its distinct successors (expressed by the inequality in the body of the rule). The binary predicates $sees$, $friend$, and $enemy$ are free predicates, i.e., they are defined only via free rules. The last two rules are constraints which disallow that somebody is friend and enemy with the same person, or that somebody is at the same time both happy and unhappy.

\begin{programm}
\label{colp}
\nsrule{r_1}{happy(X)}{sees(X,Y),friend(X,Y),happy(Y)}
\nsrule{r_2}{happy(X)}{sees(X,Y), enemy(X,Y), unhappy(Y)}
\nsrule{r_3}{unhappy(X)}{sees(X,Y), friend(X,Y), \naf{happy(Y)}}
\nsrule{r_4}{unhappy(X)}{sees(X,Y), enemy(X,Y), happy(Y)}
\nsrule{r_5}{happy(X)}{friend(X,Y), friend(X,Z), Y \neq Z}
\nsrule{r_6}{sees(X,Y)\lor \naf{sees(X,Y)}}{}
\nsrule{r_7}{friend(X,Y)\lor \naf{friend(X,Y)}}{}
\nsrule{r_8}{enemy(X,Y)\lor \naf{enemy(X,Y)}}{}
\nsrule{r_9}{}{happy(X), unhappy(X)}
\nsrule{r_{10}}{}{friend(X,Y), enemy(X,Y)}
\end{programm}

Next figure describes a tree-shaped open answer set with universe $\{x,y,z,t\}$ and interpretation $\{unhappy(x), sees(x,y), enemy(x,y), happy(y),$ $friend (y,z),$ $friend(y,t)\}$ -- one can see from this that $unhappy$ is tree-satisfiable: $x$ is unhappy as she sees an enemy $y$ which in turn is happy, as she has at least two different friends, $z$ and $t$. Note that there are no empty labels on the arcs of the tree and $y$ does not see either of her friends $z$ and $t$; otherwise, as it is not known either about $z$ or about $t$  that they are happy, seeing them would render $y$ unhappy (according to rule $r_3$), and that would lead to an inconsistency (according to rule $r_9$).

\begin{center}
\begin{tikzpicture}[ auto]
    \node (x) {$x$}
          child{
                node (y) {$y$}
		edge from parent [->]
                child{
			node (z) {$z$}
			edge from parent
			node[above,near end,xshift=-0.6cm] {$\{friend\}$}
		}
    		child{
			node (t) {$t$}
			edge from parent
			node[above,near end,xshift=0.6cm] {$\{friend\}$}
		}
          node[above,near end,xshift=1.1cm] {$\{sees,enemy\}$}};
     \node[right, xshift=-0.2cm] at (x.east) {$\{unhappy\}$};
    \node[right, xshift=-0.2cm] at (y.east) {$\{happy\}$};
    \node[left, xshift=0.2cm] at (z.west) {$\{\}$};
    \node[right][xshift=0.2cm] at (t.west) {$\{\}$};

\end{tikzpicture}
\end{center}
\normalsize
\end{example}

Another fragment of OASP, called \emph{Forest Logic Programs (FoLPs)}, has, as its name suggests, the \textit{forest-model property} \cite{heymans-jal2007}. The \textit{forest-model property} is a generalization of the tree-model property: if a unary predicate $p$ is satisfiable then it is satisfied by a model that can be seen as a special type of labeled forest, where the forest contains for each constant in the program a tree having as root the corresponding constant, and possibly an additional tree with an anonymous root. The forest is special in the sense that it can contain additional arcs from any node in the forest to one of the roots, standing for constants. FoLPs implement the forest-model property by allowing also for constants in the programs. Rules have practically the same tree-shape as CoLPs, with the exception of constants not being treated as successors in the tree\footnote{This means that the `root' term does not necessarily have to be linked with a successor term which is a constant via a binary atom.}. As such, FoLPs are generalizations of CoLPs and are expressive enough to deal with the DL \SHOQ{} (the presence of constants allows the simulation of DL nominals).

\begin{example}
Consider a slightly modified version of the CoLP $P$, $P^{\prime}$:

\[\begin{array}{llll}
\label{local}
\ensuremath{r_{1}}: & & & \\
\ensuremath{\ldots} & & &\\
\ensuremath{r_{10}}: & & & \\
\nrule{r_{11}}{unhappy(j)}{hungry(j)}
\nrule{r_{12}}{hungry(j)}{}
\end{array}
\]

\vspace{7mm}
Two new rules, $r_{11}$ and $r_{12}$, both referencing a constant $j$, have been added to the CoLP. The figure below describes a forest-shaped open answer set with universe $\{j,x,y\}$ and interpretation $\{unhappy(j),$ $hungry(j),$ $ happy(x), $ $sees(x,y),$ $friend(x,y),$ $happy(y),$ $enemy(y,j),$  $sees (y, j)\}$ -- one can see from this that $happy$ is forest-satisfiable: $x$ is happy as it sees a friend $y$ which at its turn is happy, as it sees an enemy, $j$, who is unhappy because it is hungry. The forest is composed of two trees, one with root $j$, the constant appearing in the program, and the other one with root $x$, where $x$ is an anonymous individual, whose content contains the predicate checked to be satisfiable, $happy$.

\small
\begin{center}
\begin{tikzpicture}[level distance=2.5cm,scale=0.9, >=stealth',bend angle=45]
    \node(x){$x$}
          child{
                node (y) {$y$}
		        edge from parent [->]
                child[grow=left, above, yshift=2.2cm, xshift=-1.1cm]{
			         node (z) {$j$}
			         edge from parent [bend right]
			         node[below,near end, yshift=-0.12cm, xshift=-0.6cm] {$\{sees,enemy\}$}
                }
          node[above,near end,yshift=0.3 cm,xshift=1.1cm] {$\{sees,friend\}$}
    };
    \node[right, xshift=-0.2cm] at (x.east) {$\{happy\}$};
    \node[right, xshift=-0.2cm] at (y.east) {$\{happy\}$};
    \node[right, xshift=-0.2cm] at (z.east) {$\{unhappy, hungry\}$};
    \end{tikzpicture}
\end{center}
\normalsize
\end{example}

A serious shortcoming of both CoLPs and FoLPs is their lack of effective reasoning procedures. Furthermore, it has not been known so far whether satisfiability checking w.r.t. Forest Logic Programs (FoLPs) is decidable. The decidability of two closely-related fragments of FoLPs, local FoLPs, and acyclic FoLPs, together with a reasoning procedure (for both fragments) based on a reduction to ordinary ASP reasoning has been provided in \cite{heymans-jal2007}. Both fragments are quite inexpressive compared to the whole FoLP fragment. For example, local FoLPs allow only the presence of negated atoms in the successor part of the tree structure of the unary or binary rules\footnote{This restriction does not apply to literals who have a constant as argument.}.

The reduction of reasoning to the ordinary ASP case has been made possible by the fact that local and acyclic FoLPs have the \emph{bounded finite model property}, i.e., if there is an open answer set, then there is an open answer set with a universe that is bounded by a number of elements that can be specified in function of the program at hand.
\begin{example}
The FoLP $P^{\prime}$ can be `adapted' into a local FoLP as follows:

\begin{programm}
\nrule{r_1}{happy(X)}{sees(X,Y),friend(X,Y),}
&&&\ensuremath{\naf{unhappy(Y)}}\\
\nrule{r_2}{happy(X)}{sees(X,Y), enemy(X,Y), }
&&&\ensuremath{\naf{happy(Y)}}\\
\nrule{r_3}{unhappy(X)}{sees(X,Y), friend(X,Y),}
&&&\ensuremath{\naf{happy(Y)}}\\
\nrule{r_4}{unhappy(X)}{sees(X,Y), enemy(X,Y), }
&&&\ensuremath{\naf{unhappy(Y)}}\\
\nrule{r_5}{happy(X)}{friend(X,Y), friend(X,Z), }
&&&\ensuremath{Y \neq Z}\\
\nrule{r_6}{sees(X,Y)\lor \naf{sees(X,Y)}}{}
\nrule{r_7}{friend(X,Y)\lor \naf{friend(X,Y)}}{}
\nrule{r_8}{enemy(X,Y)\lor \naf{enemy(X,Y)}}{}
\nrule{r_9}{}{happy(X), unhappy(X)}
\nrule{r_{10}}{}{friend(X,Y), enemy(X,Y)}
\nrule{r_{11}}{unhappy(j)}{hungry(j)}
\nrule{r_{12}}{hungry(j)}{}
\end{programm}
\normalsize

Note that the two programs, the original FoLP and the local FoLP, are not equivalent: for example, the infinite universe $\{x_1, x_2, x_3, \ldots\}$ and the infinite interpretation $\{happy(x_1),$ $friend(x_1, x_2),$ $sees(x_1,x_2),$ $happy(x_2),$ $friend (x_2, x_3),$ $sees(x_2, x_3),$ $\ldots\}$ form an open answer set of the local FoLP, but they do not form an open answer set of the general FoLP.
\end{example}

Finally, another fragment with reasoning support consists of simple CoLPs. Simple CoLPs are CoLPs that disallow the use of inequality and impose a restriction as concerns predicate recursion, but that are still expressive enough to simulate the DL \ALCH{}. In  \cite{feier+heymans-SoundComplAlgSimpleCoLPs:08}, a sound and complete tableaux-algorithm for \emph{simple CoLPs} has been devised. 
The algorithm constructs so-called completion structures, which are finite representations of (partial) models. The particular restriction on predicate recursion is a sufficient condition to establish the bounded finite model property and to enable the usage of a simple subset blocking condition to ensure the termination of the algorithm.
As is usual in Description Logics \cite{dlbook}, subset blocking consists in checking whether the label of a node of the forest is a subset of the label of one of its ancestors; if this is the case, the initial node is said to be `blocked' by its ancestor, and it is no longer expanded as the content of its label can be justified in a similar way as the content of the label of its ancestor.

In this article, we provide a tableaux-based algorithm for reasoning with the full fragment of FoLPs, and thus implicitly also with full CoLPs: in order to check whether a unary predicate is satisfiable, the algorithm tries to construct a forest model which satisfies the predicate. This is done by evolving a so-called completion structure which essentially is a forest shaped structure which describes a forest model in construction. When certain conditions are met, such a structure is said to be \emph{complete} and \emph{clash-free} and can be unraveled to an actual forest model. The algorithm can be seen as an extension of the algorithm for reasoning with simple CoLPs \cite{feier+heymans-SoundComplAlgSimpleCoLPs:08}; however, due to the lack of any restriction concerning predicate/literal recursion, things get significantly more complex. Unlike in the case of simple CoLPs, termination can no longer be ensured by a classical subset blocking condition; using only such a condition for stopping the expansion of a branch can lead to unsound results: the interpretation obtained by unraveling a clash-free complete completion structure may contain infinite chains of atoms, where the presence of each atom in the interpretation is justified by the presence of next atom. This violates a result regarding OASP which says that every atom in an open answer set has to be finitely justified \cite[Theorem 2]{heymans-amai2006}. A more complex blocking condition has been devised, which when applied guarantees soundness, but which no longer ensures termination, as it may never be fulfilled in the expansion process. However, it turns out that FoLPs, like local and acyclic FoLPs, also have the bounded finite model property: termination is then ensured by exploring forest branches only up to a certain depth.

The algorithm runs in the worst case in double exponential time, one exponential level higher than the algorithm for reasoning with simple CoLPs. The increase in complexity (compared to the algorithm for simple CoLPs, but also compared to tableaux procedures for reasoning with \SHOQ{}) is due to the interaction between the requirement concerning the minimality of open answer sets and the unrestricted recursion in rules which leads to a double exponential bound on the number of individuals which might be needed to satisfy a certain predicate.

We also define simple FoLPs as a particular kind of FoLPs which are in a similar relationship with FoLPs as simple CoLPs with CoLPs: there is a similar restriction on predicate recursion, but unlike the case of simple CoLPs we allow also the presence of constants and inequalities in rule bodies. The algorithm can be simplified in such a case and the worst case complexity drops one exponential level. Simple FoLPs can be seen as a generalization of local FoLPs and acyclic FoLPs.

As already mentioned, FoLPs serve well as an underlying integration vehicle for ontologies and rules. In order to illustrate this, we define \emph{f-hybrid knowledge bases (fKBs)}, consisting of a \SHOQ{} knowledge base and a rule component that is a FoLP, with a non-monotonic semantics similar to the semantics of \dlpluslog{} \cite{rosati-kr2006}, \emph{r-hybrid knowledge bases} \cite{rosati-rr2008}, and \emph{g-hybrid knowledge bases} \cite{heymans-tplp2008}. Our approach differs in two points with current other proposals:

\begin{itemize}
\item In contrast with Description Logic Programs, DL-safe rules, and Description Logic Rules, f-hybrid knowledge bases have, in line with traditional logic programming paradigms, a minimal model semantics for the rule component, thus allowing for nonmonotonic reasoning.
\item To ensure effective reasoning, our approach does not rely on a (weakly) DL-safeness condition such as \cite{motik,rosati-kr2006,rosati-rr2008}, which restricts the interaction of the rule component with the DL component. Instead, we rely on a translation of the hybrid knowledge to FoLPs.
\end{itemize}

The major contributions of the paper can be summarized as follows:

\begin{itemize}

\item We define in Section \ref{sec:algorithm} an algorithm for deciding satisfiability w.r.t. FoLPs, inspired by tableaux-based methods from DLs. We show that this algorithm is terminating, sound, and complete, and runs in double exponential time. The algorithm is non-trivial from two perspectives: both the minimal model semantics of OASP, compared to the model semantics of DLs, as well as the open domain assumption, compared to the closed domain assumption of ASP \cite{gelfond88stable}, pose specific challenges.

\item We show in Section \ref{sec:fhybrid} that FoLPs are expressive enough to simulate the DL \SHOQ{} with \emph{fKBs} as an alternative characterization for hybrid representation and (nonmonotonic) reasoning of knowledge, that supports a tight integration of ontologies and rules.

\end{itemize}

The article is organized as follows. A short overview of Open Answer Set syntax and semantics together with some notations are presented in Section \ref{sec:prelim}. Next, Section \ref{sec:FLP} formally introduces  FoLPs and the \emph{forest model property}. The actual tableaux algorithm for reasoning with FoLPs is described in Section \ref{sec:algorithm}. A new hybrid formalism, \emph{f-hybrid KBs}, which combines \SHOQ{} KBs with FoLPs, is introduced in Section \ref{sec:fhybrid}. Reasoning with the new formalism is enabled by a concept satisfiability preserving translation from  \SHOQ{} KBs to FoLPs, the translation being described in the same section. A less expressive fragment of FoLPs, simple Forest Logic Programs, is described in Section \ref{sec:simpleFoLPs}. Finally, Section \ref{sec:discussion} discusses some related work, while Section \ref{sec:conclusions} draws some conclusions. Detailed proofs can be found in the Appendix. 

\section{Preliminaries}
\label{sec:prelim}
We recall the open answer set semantics from \cite{heymans-jal2007}. A term is either a \textit{constant} or a \textit{variable}\footnote{No function symbols are allowed.}, and is denoted by a string of letters where a constant starts with a lower-case letter and a a variable with an upper case letter. An atom is of the form $p(t_1,\ldots,t_n)$, where $p$ is a predicate name, and $t_1, \ldots, t_n$ are terms. We further allow for \emph{equality atoms} $s=t$, where $s$ and $t$ are terms.  A \textit{literal} is an atom $L$ or a negated atom $\naf{L}$.  An \emph{inequality literal} $\naf{(s=t)}$ will often be denoted with $s\neq t$. An atom (literal) that is not an equality atom (inequality literal) will be called a \emph{regular atom (literal)}. For a regular literal $L$, $pred(L)$, and $args(L)$ denote the predicate, and the (tuple of) arguments of $L$\footnote{If the literal $L$ has just one argument, $args(L)$ will return the argument itself.}, respectively. For a set $\alpha$ of literals or (possibly negated) predicates, $\posi{\alpha} = \set{l \mid l \in \alpha, l \text{ an atom or a predicate}}$ and $\nega{\alpha} = \set{l \mid \naf{l} \in \alpha, l \text{ an atom or a predicate}}$. For example, $\posi{\{a,\naf{b},c\neq d\}} = \{a\}$ and $\nega{\{a,\naf{b},c \neq d\}} =\{b,c=d\}$. For a set $S$ of atoms, $\naf{S} = \set{\naf{L} \mid L \in S}$. For a set of (possibly negated) predicates $\alpha$, we will often write $\alpha(x)$ for $\{a(x) \mid a \in \alpha\}$ and $\alpha(x,y)$ for $\{a(x,y) \mid a \in \alpha\}$.

\par
A \textit{program} is a countable set of rules \prule{\alpha}{\beta},  where $\alpha$ is a finite set of regular literals and $\beta$ is a finite set of literals. The set $\alpha$ is the \textit{head} of the rule and represents a disjunction, while $\beta$ is called the \textit{body} and represents a conjunction.  If $\alpha = \emptyset$, the rule is called a \textit{constraint}. \emph{Free rules} are rules ${q(t_1,\ldots,t_n)\lor\naf{q(t_1,\ldots,t_n)}\gets}{}$ for terms $t_1,\ldots,t_n$; they enable a choice for the inclusion
of atoms.  We call a predicate $q$ \emph{free} in a program if there is a free rule ${q(X_1,\ldots,X_n)\lor\naf{q(X_1,\ldots,X_n)}\gets}{}$ in the program, where $X_1,\ldots,X_n$ are variables. Atoms, literals, rules, and programs that do not contain variables are \textit{ground}. For a rule or a program $P$, let $\cts{P}$ be the constants in $P$, $\vars{P}$ its variables, and \preds{P} its predicates, with $\upreds{P}$ and $\bpreds{P}$, the unary and binary predicates, respectively. For every predicate $q$ and program $P$, let $P_q$ be the set of definite (i.e., disjunction free) rules of $P$ that have $q$ as a head predicate.  A \emph{universe} $U$ for a program $P$ is a non-empty countable superset of the constants in $P$: $\cts{P} \subseteq U$.  We call $\ground{P}{U}$ the ground program obtained from $P$ by substituting every variable in $P$ by every possible element in $U$.  Let $\atoms{P}$ ($\lits{P}$) be the set of regular atoms (literals) that can be formed from a ground program $P$. 

An \textit{interpretation} $I$ of a ground $P$ is a subset of $\atoms{P}$.  We write $I \models p(t_1,\ldots,t_n)$ if $p(t_1,\ldots,t_n) \in I$ and $I \models \naf{p(t_1,\ldots,t_n)}$ if $I\not\models p(t_1,\ldots,t_n)$. Furthermore, for ground terms $s$ and $t$ we write $I\models s=t$ if $s=t$ and $I\models\naf{s=t}$ or $I\models s\neq t$ if $s\neq t$. For a set of ground literals $X$, $I\models X$ if $I \models l$ for every $l \in X$.  A ground rule $r:\prule{\alpha}{\beta}$ is \textit{satisfied} w.r.t. $I$, denoted $I \models r$, if $I \models l$ for some $l \in \alpha$ whenever $I \models \beta$. A ground constraint \prule{}{\beta} is satisfied w.r.t. $I$ if $I \not\models \beta$.

For a ground program $P$ without \NAF, an interpretation $I$ of $P$ is a \textit{model} of $P$ if $I$ satisfies every rule in $P$; it is an \textit{answer set} of $P$ if it is a subset minimal model of $P$. For ground programs $P$ containing \NAF, the \textit{GL-reduct} \cite{gelfond88stable} w.r.t. $I$ is defined as $P^I$, where $P^I$ contains \prule{\posi{\alpha}}{\posi{\beta}} for \prule{\alpha}{\beta} in $P$, $I \models \naf{\nega{\beta}}$, and $I \models \nega{\alpha}$.  $I$ is an \textit{answer set} of a ground $P$ if $I$ is an answer set of $P^I$.

In the following, a program is assumed to be a finite set of rules; infinite programs only appear as byproducts of grounding a finite program with an infinite universe. An \textit{open interpretation} of a program  $P$ is a pair $(U, M)$ where $U$ is a universe for $P$ and $M$ is an interpretation of $P_{U}$.  An \textit{open answer set} of $P$ is an open interpretation $(U,M)$ of $P$ with $M$ an answer set of $\ground{P}{U}$. An $n$-ary predicate $p$ in $P$ is \emph{satisfiable w.r.t. $P$}  if there is an open answer set $(U,M)$ of $P$ and a $(x_1,\ldots,x_n) \in U^n$ such that $p(x_1,\ldots,x_n) \in M$.

We introduce some notations for trees which extend those in \cite{Vardi98reasoningabout}. Let $\cdot$ be a concatenation operator between different symbols such as constants or natural numbers. A \emph{tree} $T$ with root $c$ (also denoted as $T_c$), where $c$ is a specially designated constant, is a set of nodes, where each node is a sequence of the form $c \cdot s$, where $s$ is a (possibly empty) sequence of positive integers formed with the help of the concatenation operator; for $x \cdot d\in T$, $d\in \N^{*}$\footnote{$\N^{*}$ is the set of positive integers}, we must have that $x\in T$.  For example a tree with root $c$ and 2 successors will be denoted as $\{c, c\cdot 1,c\cdot 2\}$ or $\{c,c1,c2\}$ \footnote{By abuse of notation, we consider that there are at most 9 successors for every node, so we can abbreviate $a \cdot b$ with $ab$}.

For a node $x \in T$, we call $\succe_{T}(x)=\{x \cdot n\in T \mid n \in \N^{*}\}$, \textit{successors}\index{successor} of $x$ in $T$. As the successorship relation is captured in the codification of the nodes, a tree is literally the set of its nodes. The \emph{arity} of a tree is the maximum amount of successors any node has in the tree.  The set $A_{T}=\{(x,y) \mid x, y\in T, \Exists{n\in\N^{*}}{y=x \cdot n}\}$ denotes the set of arcs of a tree $T$. We define a partial order $\leq_T$ on a tree $T$ such that for $x,y\in T$, $x \leq_T y$ iff $x$ is a prefix of $y$. As usual, $x <_T y$ if $x \leq_T y$ and $y \not \leq_T x$. A \emph{path from} $x$ \emph{to} $y$ in $T$, where $x<_T y$, denoted with $path_{T}(x,y)$, is a subset of $T$ which contains all nodes which are at the same time greater or equal to $x$ in $T$ and lesser or equal to $y$ in $T$ according to the partial order relation, i.e., $path_{T}(x,y)=\{z\mid x\leq_T z \leq_T y\}$. 
A branch $B$ in a tree $T_c$ is a maximal path (there is no path in $T_c$ which strictly contains it).
We denote the \emph{subtree}\index{subtree} of $T$ at $x$ by $T[x]$, i.e., $ T[x] = \set{y\in T\mid x\leq_T y}$.

\par
A \emph{forest} $F$ is a set of trees $\{T_c \mid c \in C\}$, where $C$ is a finite set of arbitrary constants. The set of nodes $N_{F}$ of a forest $F$ and the set of arcs $A_{F}$ of $F$ are defined as follows: $N_F=\cup_{T\in F} T$ and $A_F=\cup_{T\in F}A_T$. For a node $x \in N_F$, we denote with $\succe_F(x)=\succe_T(x)$, where $x\in T$ and $T \in F$, the set of successors of $x$ in $F$. Also, as for trees, we define a partial order relationship $\leq_F$ on the nodes of a forest $F$ where $x \leq_F y$ iff $x \leq_T y$ for some tree $T$ in $F$.

An extended forest $\EF$ is a tuple $\langle F, \ES \rangle$ where $F =\{T_c \mid c\in C\}$ is a forest and $\mathit{ES}$ is a binary relation which contains tuples of the form $(x,y)$ where $x \in N_F$ and $y \in C$, i.e., $\mathit{ES}$ relates nodes of the forest with roots of trees in the forest. $\mathit{ES}$ extends the  successorship relation: $\succe_{\EF}(x)=\{y \mid y \in \succe_F(x) \text{ or } (x,y) \in \ES\}$.

Figure \ref{figure:extforest} depicts an extended forest.

\begin{figure}[htbp]
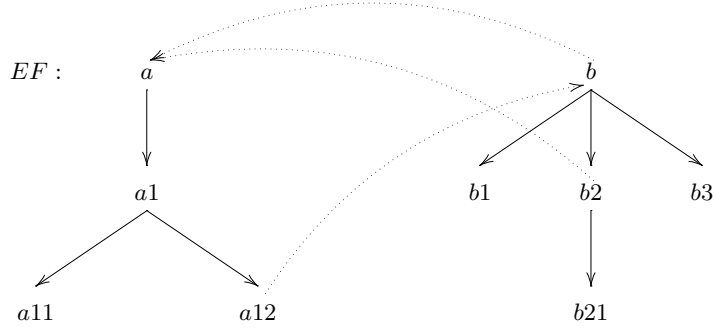

\vspace{5mm}
\begin{center}
\Treek[1]{3}{\K{$EF:$} &\K{$a$} \AR{d} &&&& \K{$b$}\AR{d}\AR{dl}\AR{dr}\UpLinkC[.>]{0,-4}&\\
&\K{$a1$} \AR{dl} \AR{dr} &&&\K{$b1$}&\K{$b2$}\AR{d}\UpLinkC[.>]{-1,-4}&\K{$b3$}\\
\K{$a11$}&&\K{$a12$}\LinkA[.>]{-2,3}&&&\K{$b21$}\\
}
\vspace{3mm}
\caption{An extended forest}
\label{figure:extforest}
\end{center}
\end{figure}

The presence of $\ES$ gives rise to so-called extended trees in $\EF$, where such a tree (actually, a particular type of graph) is one of $T_c \in F$, extended with the arcs $\{(x,y)\mid (x,y) \in \ES, x \in T_c\}$ and with the nodes $\{y \mid (x,y) \in \ES, x \in T_c\}$. The extension of $T_c$ in $\EF$  is denoted with $T_{c}^{\EF}$. For example, the extension of $T_a$ in  $\EF$ from Figure \ref{figure:extforest} contains the extra arc $(a12,b)$ and the extension of $T_b$ in $\EF$ contains the extra arcs $(b,a)$ and $(b2,a)$. An extended subtree with root $x$ of an extended tree $T_{c}^{\EF}$  is denoted with $T_{c}^{\EF}[x]$: it is defined (as a graph) as the extension of $T_c[x]$ with  the arcs $\{(y,z)\mid (y,z) \in \ES, y \in T_c[x]\}$ and with the nodes $\{z \mid (y,z) \in \ES,  y \in T_c[x]\}$. Finally, by $N_{\EF}=N_F$ we denote the set of nodes of an extended forest $\EF$ and by $A_{\EF}=A_F \cup \ES$ the set of arcs of $\EF$.

Finally, a directed graph $G$ is defined as usual by its sets of nodes $V$ and arcs $A$. We introduce two graph-related notations: $paths_G$ denotes the set of paths in $G$, where each path is a tuple of nodes from $V$: $paths_G=\{(x_1, \ldots, x_n) \mid ((x_i, x_{i+1}) \in A)_{1 \leq i <n}\}$, and $conn_G$ denotes the set of pairs of connected nodes from $V$: $conn_G=\{(x,y) \mid\ \Exists{Pt=(x_1, \ldots, x_n) \in paths_G}{x_1=x \wedge x_n=y}\}$. As an extended forest is a particular type of graph, these notations apply also to extended forests.
Additional notation needed for the proofs is introduced in the appendix.

\section{Forest Logic Programs}
\label{sec:FLP}

As mentioned in the introduction, \emph{Forest Logic Programs (FoLPs)} are a fragment of OASP which have the forest model property. In this section we formally introduce the fragment and the notions of \emph{forest satisfiability} and \emph{forest model property}.

\begin{definition}\label{def:FoLP} A \emph{forest logic program (FoLP)} is a program with only unary and binary predicates, and such that a rule is either:
\begin{itemize}
 \item a \emph{free rule}: \begin{equation}\label{eq:free}\prule{a(s)\lor \naf{a(s)}}{}\end{equation}  or,  \begin{equation}\prule{f(s,t)\lor \naf{f(s,t)}}{}\end{equation} where $s$ and $t$ are terms; 
\item a \textit{unary rule}: \begin{equation} \label{eq:unary}\prule{a(s)}{\beta(s), (\gamma_m(s,t_m),\delta_m(t_m))_{1\leq m \leq k},\psi}\end{equation} with $\psi\subseteq \bigcup_{1\leq i\neq j\leq k}\set{t_i\neq t_j}$ and $k \in \N$, or a \emph{binary rule}: \begin{equation}\label{eq:binary}\prule{f(s,t)}{\beta(s), \gamma(s,t), \delta(t)}\end{equation} where $a \in \upreds{P}$ and $f \in \bpreds{P}$, $s$, $t$, and $(t_m)_{1\leq m \leq k}$ are terms,  
$\beta,$ $\delta,$ $(\delta_m)_{1\leq m \leq k} \subseteq \upreds{P} \cup \naf(\upreds{P})$ (sets of (possibly negated) unary predicates), $\gamma$, $(\gamma_m)_{1\leq m \leq k} \subseteq \bpreds{P} \cup \naf(\bpreds{P})$ (sets of possibly negated binary predicates), and
\begin{enumerate}
\item equality and inequality do not appear in any $\gamma$: $\set{=, \neq} \cap \gamma_m = \emptyset$, for $1\leq m \leq k$, and $\set{=, \neq} \cap \gamma = \emptyset$;
\item there is a positive atom that connects the head term $s$ with any successor term which is a variable: $\gamma_m^+\neq\emptyset$, if $t_m$ is a variable, for $1\leq m \leq k$, and $\posi{\gamma} \neq \emptyset$, if $t$ is a variable;
\end{enumerate}
\item a \textit{constraint}: $\prule{}{a(s)} \mbox{ or } \prule{}{f(s,t)}$, where $s$ and $t$ are terms.
\end{itemize}
In every rule, all terms which are variables are distinct\footnote{This restriction precludes the presence in rules of literals of the form $f(X,X)$ or $\naf f(X, X)$ which would break the forest model property.}.
\end{definition}

\begin{example}
Consider again rule $r_5$ from Example 2: \nprule{r_5}{happy(X)}{friend(X,Y), friend(X,Z),} $Y \neq Z$. This rule is a unary rule with head term $X$, and $k=2$, i.e., there are two successor terms, variables $Y$ and $Z$. In this case $\beta=\emptyset$, $\gamma_1=\gamma_2=\{friend\}$, $\delta_1=\delta_2=\emptyset$, and $\psi=\{Y \neq Z\}$. There is an atom which links $X$ with each of the successor terms $Y$ and $Z$: $friend(X, Y)$ and $friend(X, Z)$, respectively.
\end{example}

Constraints can be left out of the fragment without losing expressivity. Indeed, a constraint $\prule{}{body}$ can be replaced by a rule of the form $\prule{constr(x)}{\naf{constr}(x),body}$, for a new predicate $\mathit{constr}$.

We denote with $degree(r)$, where $r$ is a unary rule as in (\ref{eq:unary}), the number $k$. Intuitively, $k$ indicates the maximum number of successor individuals needed to make the rule true. The degree of a free rule is 0.

For a unary predicate $p$, $degree(p)=max\{degree(r) \mid p \in head(r)\}$. Finally, the rank of a FoLP $P$ is defined as: $rank(P)=\sum_{p \in \upreds{P}} degree(P)$.

As already mentioned FoLPs have the \emph{forest model property}: if a unary predicate $p$ is satisfiable then there is a model which satisfies $p$ that can be seen as an extended forest. The forest contains for each constant in the program a tree having the constant as root, and possibly an additional tree with an anonymous root. The predicate checked to be satisfiable, $p$, belongs to the label of one of the root nodes. While the constants appearing in the program are mandatorily part of the universe of any model, having an anonymous root tree is considered as $p$ might be satisfied only in conjunction with an anonymous individual, and not a constant.

\begin{example}
Consider a program with two rules: \prule{q(a)}{p(a), \naf q(a)}, and \prule{p(X) \lor \naf p(X)}{}. While $p$ is satisfiable, $p(a)$ does not appear in any open answer set.
\end{example}

\begin{definition}\label{def:forest-sat1}
Let $P$ be a program.
A predicate $p\in \upreds{P}$ is \emph{forest satisfiable}\index{forest satisfiable} w.r.t. $P$ if there is an open answer set $(U,M)$ of $P$ and there is an extended forest
$\EF \equiv (\set{T_{\roo}} \cup \set{T_a \mid a \in \cts{P}},\ES)$, where $\roo$ is a constant, possibly one of the constants appearing\footnote{Note that in this case $T_{\roo} \in \set{T_a \mid a \in \cts{P}}$. Thus, the extended forest contains for every constant from $P$ a tree which has as root that specific constant and possibly, but not necessarily, an extra tree with unidentified root node.} in $P$, and a labeling function
$\mathcal{L}:\set{T_{\roo}} \cup \set{T_a \mid a \in \cts{P}}\cup A_{\EF} \to 2^{\preds{P}}$ such that
\begin{itemize}
\item $p \in \mathcal{L}(\roo)$,
\item $U = N_{\EF}$, and 
\item $M= \{\mathcal{L}(x)(x)\mid x \in N_{\EF}\} \cup \{\mathcal{L}(x,y)(x,y)\mid (x,y) \in A_{\EF}\}$\footnote{Remember that $\mathcal{L}(x)$ and $\mathcal{L}(x, y)$ are sets of unary and binary predicates, resp., and thus for every $p \in upreds(P)$: $p(x) \in M$ iff $p \in \mathcal{L}(x)$ and for every $f \in bpreds(P)$: $f(x,y) \in M$ iff $f \in \mathcal{L}(x, y)$.}, and
\item for every $(z, z \cdot i) \in A_{\EF}$: $\mathcal{L}(z, z \cdot i),  \neq \emptyset$.
\end{itemize}

We call such a $(U,M)$ a \emph{forest model}\footnote{Note that technically, a forest model is a subset minimal model as it is an open answer set.} and a program $P$ has the \emph{forest model property} if the following property holds:

If $p \in \upreds{P}$ is satisfiable w.r.t. $P$ then $p$ is forest satisfiable w.r.t. $P$.
\end{definition}

\begin{proposition}[\cite{heymans-jal2007}]\label{prop:forestmodelproperty1}
FoLPs have the forest model property.
\end{proposition}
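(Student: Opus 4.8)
The plan is to establish the forest model property by \emph{unraveling} an arbitrary open answer set into a forest-shaped one. Suppose $p \in \upreds{P}$ is satisfiable, so there is an open answer set $(U, M)$ of $P$ and an element $x_0 \in U$ with $p(x_0) \in M$. I would construct an extended forest $\EF$ together with a projection $\pi : N_{\EF} \to U$ such that the labeling induced by pulling $M$ back along $\pi$ yields a forest model of $P$ with $p$ in the label of a root.

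For the construction, the roots of $\EF$ are the constants of $P$ (each constant $a$ is the root of a tree $T_a$ with $\pi(a) = a$), together with an additional anonymous root $\roo$ with $\pi(\roo) = x_0$ in case $x_0 \notin \cts{P}$ (otherwise $T_{\roo}$ coincides with $T_{x_0}$). Successors are added inductively: whenever a node $z$ with $\pi(z) = u$ has been placed, and $M$ together with the tree-shaped rules requires $u$ to have a successor $u'$ (i.e., some binary atom $f(u, u') \in M$ participates in justifying an atom at $u$), I add a fresh successor node $z \cdot i$ with $\pi(z \cdot i) = u'$ \emph{unless} $u'$ is a constant, in which case I instead record an arc $(z, u') \in \ES$ pointing to the existing root $T_{u'}$. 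The labeling is defined by $\mathcal{L}(z) = \set{a \in \upreds{P} \mid a(\pi(z)) \in M}$ and $\mathcal{L}(z, z') = \set{f \in \bpreds{P} \mid f(\pi(z), \pi(z')) \in M}$ for arcs $(z,z') \in A_{\EF}$. Because rules have finite degree, this produces a countable (possibly infinite) extended forest, and the duplication of anonymous elements along distinct paths breaks all cycles and multiple-predecessor situations in $(U,M)$, while constants are never duplicated but shared through $\ES$.

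Next I would verify that the resulting open interpretation $(N_{\EF}, M')$, with $M'$ read off from $\mathcal{L}$ as in Definition~\ref{def:forest-sat1}, is again an open answer set of $P$. That $M'$ is a model of $\ground{P}{N_{\EF}}$ follows rule by rule: because every rule is tree-shaped and all its variables are distinct, any ground instance relevant at a forest node $z$ mentions only $z$, its successors in $\EF$, and constants, so its satisfaction transfers from the satisfaction of the corresponding instance at $\pi(z)$ in $(U, M)$; the condition $\set{=,\neq} \cap \gamma_m = \emptyset$ ensures inequalities are respected, since distinct forest successors are used for distinct successor terms. Free rules and constraints transfer in the same pointwise manner. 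The forest shape and the non-empty arc labels are immediate from the construction, and $p \in \mathcal{L}(\roo)$ holds since $p(x_0) \in M$.

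The hard part will be establishing \emph{minimality}, i.e., that $M'$ is a subset-minimal model of the GL-reduct $\ground{P}{N_{\EF}}^{M'}$ rather than merely a model. Here I would exploit that $M$ is an answer set, so every atom in $M$ is \emph{finitely justified} via a well-founded derivation over the reduct $\ground{P}{U}^{M}$. The tree-shape of the rules guarantees that the justification of a unary atom $a(u)$ draws only on local atoms at $u$ and on atoms at successors $u'$ with $f(u,u') \in M$; consequently each such justification can be mirrored along $\pi$, turning a finite derivation of $a(\pi(z))$ in the original reduct into a finite derivation of $a(z)$ in the unraveled reduct, and symmetrically any atom derivable in $(N_{\EF}, M')$ projects to one derivable in $(U, M)$, so no strictly smaller model of the reduct can exist. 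Care is needed precisely at the $\ES$ arcs to constants, where distinct forest nodes share the same constant target: one must check that collapsing these back onto the single root under $\pi$ neither creates nor destroys justifications, which again follows because constants are interpreted identically, i.e., $\pi$ is the identity on roots corresponding to constants.
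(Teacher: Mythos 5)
The paper does not prove this proposition itself but imports it from \cite{heymans-jal2007}; your unraveling construction --- duplicating anonymous elements along justification paths, routing all arcs to constants through $\ES$ rather than duplicating them, pulling the labeling back along $\pi$, observing that ground instances mentioning non-successors are vacuously satisfied because their positive $\gamma$-atoms fail, and transferring finite well-founded justifications to establish minimality of the model of the GL-reduct --- is essentially the argument given in that source, and it is correct in approach. The one point worth making explicit is that every copy $z$ of an anonymous element $u$ must be given fresh successors witnessing the justifications of \emph{all} atoms in $\mathcal{L}(z)$, not only the atom that caused $z$ to be created, since your pullback labeling forces the full label of $u$ onto every copy and each of those atoms must then be re-derivable at $z$.
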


\begin{example}
Let $\EF$ be the extended forest depicted in Example 3: $\EF=(\{T_{\roo}, T_j\}, \{(y,j)\})$, where $\roo=x$. According to the notation we introduced for trees, the successor of $x$ in $T_x$, $y$, has the form $x \cdot i$, with $i \in \N^{*}$. One can see that $happy$, the predicate checked to be satisfiable, is in the label of $\roo$: $happy \in \mathcal{L}(x)$, the universe $U$ of the open answer set is indeed equal to $N_{\EF}=\{x,y,j\}$, and every predicate symbol corresponding to some atom in $M$ is in the label of the argument of the atom, e.g.: $unhappy \in \mathcal{L}(j)$. The reciprocal also holds: every node/arc of the extended forest in conjunction with every predicate symbol in its label forms an atom which is part of the interpretation. It also holds that $x$ and $y=x \cdot i$ are linked by a positive binary predicate: $\mathcal{L}(x, y)^+=\{sees, friend\}\neq \emptyset$.
\end{example}

In \cite{feier+heymans-SoundComplAlgSimpleCoLPs:08}, we introduced the class of simple Conceptual Logic Programs. It is easy to see that every simple CoLP is an FoLP. As satisfiability checking w.r.t. simple Conceptual Logic Programs is \exptime-hard, the following property follows:

\begin{proposition}\label{prop:compforest}
Satisfiability checking w.r.t. FoLPs is \exptime-hard.
\end{proposition}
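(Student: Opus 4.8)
The plan is to prove \exptime-hardness not by a fresh reduction from a generic \exptime-complete problem, but by inheriting hardness from a subclass whose complexity is already known. Concretely, I would exhibit the class of simple Conceptual Logic Programs from \cite{feier+heymans-SoundComplAlgSimpleCoLPs:08} as a syntactic subfamily of FoLPs, and then argue that the \emph{identity} map is a (trivially polynomial) reduction from simple-CoLP satisfiability to FoLP satisfiability. Since satisfiability checking w.r.t. simple CoLPs is \exptime-hard (which in turn follows from their ability to simulate the DL \ALCH{}, for which concept satisfiability is \exptime-hard), composing that reduction with the inclusion yields \exptime-hardness for FoLPs.

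The first step is to verify the inclusion ``every simple CoLP is a FoLP'' against Definition \ref{def:FoLP}. A simple CoLP uses only unary and binary predicates, and its rules come in exactly the shapes a FoLP admits: tree-shaped unary and binary rules, free rules, and constraints with tree-shaped bodies. I would check each FoLP side-condition in turn: predicates are unary/binary by construction; the chaining pattern of rule variables guarantees that each variable successor term is connected to the head term by a positive binary atom, which is precisely condition (2) of the unary/binary rule clauses; and equality/inequality do not occur in the $\gamma$-components, matching condition (1). The fact that simple CoLPs additionally \emph{forbid} inequality atoms and impose a predicate-recursion restriction is harmless here, because FoLPs only \emph{permit} these features rather than requiring them. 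Likewise, CoLPs contain no constants, which is compatible with FoLPs since the presence of a constant-rooted tree in the forest-model is optional (cf. the anonymous-root tree $T_{\roo}$ in Definition \ref{def:forest-sat1}). Hence a simple CoLP satisfies all clauses of Definition \ref{def:FoLP} and is a FoLP.

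The second, semantic, step is to observe that the satisfiability problem is fragment-independent: whether a program is viewed as a simple CoLP or as a FoLP, its open answer sets $(U,M)$ are defined by the very same open-answer-set semantics of Section \ref{sec:prelim}, which refers only to the rules themselves and not to any class membership. Consequently, for a unary predicate $p$ and a program $P$ that is a simple CoLP, $p$ is satisfiable w.r.t.\ $P$ in the CoLP sense iff $p$ is satisfiable w.r.t.\ $P$ in the FoLP sense. Therefore the identity map $P \mapsto P$ is a correct and trivially linear reduction, and \exptime-hardness transfers from simple CoLPs to FoLPs. I do not expect any genuine obstacle in this argument; the only point demanding care is the bookkeeping of the first step, namely confirming that the precise definition of simple CoLP given in \cite{feier+heymans-SoundComplAlgSimpleCoLPs:08} meets every syntactic clause of Definition \ref{def:FoLP}, which is routine.
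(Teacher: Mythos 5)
Your proposal is correct and follows exactly the paper's own argument: the paper derives Proposition~\ref{prop:compforest} by noting that every simple CoLP is a FoLP and that satisfiability checking w.r.t.\ simple CoLPs is already \exptime-hard (via their simulation of \ALCH{}). Your more careful verification of the syntactic inclusion against Definition~\ref{def:FoLP} and the observation that satisfiability is fragment-independent simply make explicit what the paper leaves as ``easy to see.''
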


Note that, at present, we do not have a tight complexity characterization for FoLPs: we have a lower bound (\exptime) established by the inclusion of simple CoLPs in FoLPs, while the algorithm described in this article runs in the worst case in double exponential time, thus establishing an upper bound.

\section{An Algorithm for Forest Logic Programs}\label{sec:algorithm}

In this section, we define a sound, complete, and terminating algorithm for satisfiability checking w.r.t. FoLPs. In \cite{heymans-jal2007} it has been shown that several restrictions of FoLPs which have the finite model property are decidable, but there was no result so far regarding the whole fragment. Thus, the algorithm described in this section also establishes a decidability result for FoLPs.

The basic data structure for our algorithm is a \emph{completion structure}. A completion structure describes a forest model in construction. As such, the main components of the structure are an extended forest $EF$, the forest-shaped universe of the constructed open answer set, and a labeling function $\ct$, which assigns to every node, resp. arc of $EF$, a set of possibly negated unary, resp. binary predicates, called a \textit{content}. The presence of such a predicate symbol/negated predicate symbol in the content of some node or arc indicates the presence/absence in the forest model in construction of the atom formed with that predicate and the current node or arc as argument.
Note that unlike the labeling function $\mathcal{L}$ in definition \ref{def:forest-sat1} which describes which atoms are in the forest model, the labeling function $\ct$ keeps track also of which atoms are not in the forest model. This is needed as the forest model is updated by justifying the presence or absence of a certain atom in itself.

The presence (absence) of an atom in a forest model in construction is justified by imposing that the body of at least one ground rule which has the respective atom in the head is satisfied (no body of a rule which has the respective atom in the head is satisfied). In order to keep track which (possibly negated) predicate symbols in the content of some node or arc have already been justified a so-called status function is introduced. The status function $\st$ assigns the value $\unexpa$ to pairs of nodes/arcs and possibly negated unary/binary predicates which have not yet been `expanded', i.e. justified, and the value $\expa$ to such pairs which have already been considered.

Furthermore, in order to ensure that the constructed forest model is a well-supported one \cite{fages-NewfixpointSem:91}, or in other words, no atom in the model is circularly justified (does not depend on itself) or infinitely justified (does not depend on an infinite chain of other atoms), a graph $G$ which keeps track of dependencies between atoms in the model is maintained.

In the following, for a predicate $p$, $\pm p$ denotes $p$ or $\naf p$, whereby multiple occurrences of $\pm p$ in the same context refer to the same symbol (either $p$ or $\naf p$). The negation of $\pm p$ (in a given context)  is $\mp p$, that is, $\mp p = \naf{p}$ if $\pm p=p$ and $\mp p = p$ if $\pm p=\naf{p}$.

\begin{definition}\label{def:completionstructure} A \textit{completion structure for a FoLP $P$} is a tuple $\langle \EF,$ $\ct,$ $\st,$ $G \rangle$ where:
\begin{itemize}
\item $\EF=\langle F,\ES \rangle$ is an extended forest, its set of nodes being the universe of the forest model in construction,
\item $\ct:N_{\EF} \cup A_{\EF} \to 2^{preds (P) \cup \naf (preds(P))}$ is the `content' function which maps a node of the extended forest to a set of (possibly negated) unary predicates and an arc of the extended forest to a set of (possibly negated) binary predicates such that $\ct(x)\subseteq\upreds{P} \cup \NAF(\upreds{P})$ if $x \in N_{\EF}$, and $\ct(x)\subseteq \bpreds{P} \cup \NAF(\bpreds{P})$ if  $x \in A_{\EF}$,
\item $\st:\{(x, \pm q) \mid \pm q \in \ct(x), x \in N_{\EF} \cup A_{\EF}\}  \to \{\expa, \unexpa\}$ is the `status' function which indicates which predicates in the content of some node/arc are justified, and which are not,
\item $G=\grap{V}{A}$ is a directed graph with vertices $V \subseteq \atoms{\ground{P}{N_{\EF}}}$ and arcs $A\subseteq\atoms{\ground{P}{N_{\EF}}}\times\atoms{\ground{P}{N_{\EF}}}$,
\end{itemize}

\end{definition}

For checking satisfiability of a unary predicate $p$ w.r.t. a FoLP $P$, one starts with an initial completion structure which is defined as follows: the extended forest $EF$ is initialized with the set of single-node trees having as root a constant appearing in $P$ and possibly a new single-node tree with an anonymous root\footnote{This is in order to comply with the generic shape of a forest model described in section \ref{sec:FLP}.}. In case the anonymous root tree exists, its content is initialized with $\{p\}$, the predicate checked to be satisfiable. Otherwise the content of the root of one of the other trees is initialized with $\{p\}$. The contents of the other nodes (roots) are initialized with $\emptyset$. $G$ is initialized to the graph with a single vertex $p(\roo)$. 

\begin{definition}\label{def:initialcompletionstructure}
An \emph{initial completion structure} for checking satisfiability of a unary predicate $p$ w.r.t. a FoLP $P$ is a completion structure $\langle \EF,$ $\ct,$ $\st,$ $G \rangle$ with $\EF=\langle F,\ES \rangle$, $F=\set{T_{\roo}} \cup \set{T_a \mid a \in \cts{P}}$, where $\roo$ is a constant, possibly in $\cts{P}$, and $T_x=\set{x}$, for every $x \in \cts{P} \cup \set{\roo}$, $\ES=\emptyset$, $G=\grap{V}{A}$, $V=\{p(\roo)\}$, $A=\emptyset$, $\ct(\roo)=\{p\}$, and  $\st(\roo,p)=\unexpa$.
\end{definition}

In the following, we show how to expand such an initial completion structure to prove satisfiability of a unary predicate $p$ w.r.t. a FoLP $P$, how to determine when no more expansion is needed, that is, either the structure represents a full open answer set or a clash has occurred, and under what circumstances a \emph{clash} occurs. In particular, \emph{expansion rules} evolve a completion structure, starting with a guess for an initial completion structure for checking satisfiability of $p$ w.r.t. $P$, to a complete clash-free structure that corresponds to a finite representation of an open answer set in case $p$ is satisfiable w.r.t. $P$. \emph{Applicability rules} state the necessary conditions such that these expansion rules can be applied.

\subsection{Expansion Rules}\label{sec:expansionrules}

Expansion rules update the completion structure by making explicit constraints which are necessary to hold for a certain literal to be part of a forest model\footnote{A negative literal `is part' of a forest model when the corresponding atom does not make part of the model.}.

An atom is part of a forest model if there is a ground rule which has the atom as head and all body literals are also part of the forest model; this is taken care of by the \textit{expand unary/binary positive} rules. New domain elements might have to be introduced by these rules in order to obtain such a ground rule.

Conversely, an atom is not part of the forest model if all bodies of ground rules which have as head the atom are not satisfied by the forest model. The rules which enforce this are the \textit{expand unary/binary negative} rules. The absence of an atom in the forest model is proved only when there is no possibility to introduce new individuals in the domain which would lead to a ground rule having the atom in the head and a satisfiable body. As such, there is an interaction between these rules and the rules which justify the presence of atoms in the open answer set.

Newly introduced domain elements give rise to new ground atoms and rules and some of these rules might render the program inconsistent. In order to be sure that the partially constructed model is a complete one every ground atom has to be proved to be either part or not part of the forest model. If the atom is not constrained to be or not to be part of the forest model, a random choice is made. The \textit{choose unary/binary} rules take care of this.

The expansion rules make extensive use of a sequence of operations meant to enforce the presence of a literal $\pm p(z)$ in the forest model (where $z$ is a term in case $p \in \upreds{P}$, and a pair of terms in case $p \in \bpreds{P}$) as part of justifying the presence of another literal $l$. This consists in inserting $\pm p$ in the content of $z$ and mark it as unexpanded, in case the predicate symbol is not already there, and in case $\pm p(z)$ is an atom, ensuring that it is a node in $G$ and if $l$ is also an atom, creating a new arc from $l$ to $\pm p(z)$ to capture the dependencies between the two elements of the forest model. Formally:

\begin{itemize}
    \item let $\ct(z):=\ct(z) \cup \{\pm p\}$ and $\st(z,\pm p):=\unexpa$,
    \item if $\pm p=p$, then let $V:=V \cup \{\pm p(z)\}$,
    \item if $l \in \atoms{\ground{P}{N_{\EF}}}$ and $\pm p=p$, then let $A:=A \cup \{(l, \pm p(z))\}$.
\end{itemize}

As a shorthand, we denote this sequence of operations as $\update(l, \pm p, z)$; more general, $\update(l, \beta, z)$ for a set of (possibly negated) predicates $\beta$, denotes $\forall \pm a \in \beta,$ $\update(l,\pm a,z)$.
In the following, for a completion structure $\langle \EF,$ $\ct,$ $\st,$ $G \rangle$, let $x \in N_{\EF}$ and $(x,y)\in A_{\EF}$ be the node, respectively arc, under consideration.

\subsubsection{(i) Expand unary positive.} Consider a unary positive predicate $p \in \ct(x)$ such that $\st(x,p)=\unexpa$. If $p$ is not a free predicate symbol:

\begin{itemize}
\item pick a rule $r\in P_p$ of the form (\ref{eq:unary}) such that $s$ (the term in the head of the rule) matches $x$. The rule will be used to justify the presence of $p(x)$ in the tentative open answer set.
\item for the $\beta$ in the body of $r$, $\update(p(x),\beta,x)$,
\item consider $k$ successors for $x$: $(y_m)_{1 \leq m \leq k}$, (by picking from the existing successors and/or by introducing new ones), such that:
\begin{itemize}
 \item for every $1 \leq (i,j) \leq k$ such that $t_i \neq t_j \in \psi$: $y_i \neq y_j$;
 \item for every $1 \leq m \leq k$:
 \begin{itemize}
 \item $y_m \in \succe_{\EF}(x)$, or
 \item $y_m$ is defined as a new successor of $x$ in the tree $T_c$, where $x \in T_c$: $y_m:=x \cdot n$, where $n \in \N^{*}$ s.t. $x \cdot n \notin \succe_{\EF}(x)$, and $T_c:=T_c \cup \{y_m\}$, or
 \item $y_m$ is defined as a new successor of $x$ in $\EF$ in the form of a constant: $y_m:=a$, where $a$ is a constant from $\cts{P}$ s.t. $a \notin \succe_{\EF}(x)$. In this case also add $(x,a)$ to $\ES$: $\ES:=\ES \cup \{(x,a)\}$.
 \end{itemize}
  \end{itemize}
 \item for every $1 \leq m \leq k$: $\update(p(x), \gamma_m, (x,y_m))$ and $\update(p(x), \delta_m, y_m)$.
\item set $\st(x,p):= \expa$.
\end{itemize}

If $p$ is free, its status in the content of $x$ is simply updated to expanded: $\st(x,p)=\expa$, as the presence of $p(x)$ in the forest model in construction is trivially justified by the free rule which defines $p$ grounded with $x$.

\subsubsection{(ii) Choose a unary predicate.}

If there is a $p \in \upreds{P}$ such that $p \notin \ct(x)$ and $\naf p \notin \ct(x)$, and for all $q \in \ct(x)$, $\st(x,q)=\expa$, and for all $(x,y) \in A_{\EF}$ and $\pm f \in \ct(x,y)$ (both positive and negative predicates) $\st((x,y),\pm f)=\expa$ then do one of the following:

\begin{itemize}
\item add $p$ to $\ct(x)$ and let $\st(x,p):=$ $\unexpa$, or
\item add $\naf{p}$ to $\ct(x)$ and let $\st(x,\naf{p})$ $ = \unexpa$.
\end{itemize}

In other words, if there are still unary predicates which do not appear in $\ct(x)$ (either in a positive or a negated form) and all positive predicates in the content of $x$ have been justified, as well as all positive or negative predicates in the content of one of the arcs starting in $x$ have been justified, one has to non-deterministically pick such a unary predicate symbol $p$ and inject either $p$ or $\naf p$ in $\ct(x)$.

As mentioned in the introduction to this section, this rule is needed in order to ensure that the partially constructed forest model is part of an actual model: as a result of introducing new domain elements in the process of constructing a forest model, there might be ground rules whose heads are not relevant per se for the satisfiability task at hand, but which are not satisfiable in any total extension of the partial forest model. One tries to effectively construct such an extension by making a random choice for unconstrained ground atoms regarding their membership to the forest model. As an analogy to the DL world, tableau algorithms which check concept satisfiability typically internalize the TBox, i.e. reduce reasoning w.r.t. a terminology to checking satisfiability of a new concept \cite{horrocks99practical}. This new concept is constructed by taking into account all axioms in the TBox and not only those on which the initial concept checked to be satisfiable depends.

As an example consider the program with only two rules: $\prule{a(X) \lor \naf a(X)}{}$ and $\prule{b(X)}{\naf b(X)}$. Suppose one wants to check whether $a$ is satisfiable: while it is trivial to see that $a$ is justified by the first rule, the program has no open answer set due to the inconsistency introduced by the second rule. This will be tracked down by our algorithm by trying to prove $b(\roo)$ and $\naf{b(\roo)}$ (after each of them is inserted in the content of $\roo$ as a result of applying the choose unary rule), and failing in each case.

For reasons described in the next subsection, this rule has priority over the rule which describes the expansion of unary negative predicates.

\subsubsection{(iii) Expand unary negative.}
In general, for justifying that a negative unary literal $\naf p \in \ct(x)$ (or in other words, the absence of $p(x)$ in the constructed forest model), one has to refute the body of every ground (non-free) rule with head atom $p(x)$.  Let $r \in  P_p$ and $r': \prule{p(x)}{\beta(x),}$ $(\gamma_m(x,y_m),$ $\delta_m(y_m))_{1\leq m \leq k},$ $\psi$, with $\psi\subseteq \bigcup_{1\leq i\neq j\leq k}\set{y_i\neq y_j}$, and $k \in \N$, be a ground version of $r$. The body of $r'$ can be either:
\begin{itemize}
\item (i) `locally' refuted: by refutation of a literal from $\beta(x)$. For this, one has to enforce that there is a $\pm q \in \beta$ which does not appear in $\ct(x)$, or in other words: $\mp q \in \ct(x)$; note that this refutes all ground versions of $r$ where the head variable is substituted with $x$.
\item (ii) refuted in the `successor' part of the rule: by refutation of a literal from one of $(\gamma_m(x,y_m))_{1\leq m \leq k}$ or $(\delta_m(y_m)))_{1\leq m \leq k}$, or by impossibility to satisfy $\psi$. In a forest model, all groundings of $r$, in which one of the successor terms has been substituted with $y$, where $y$ is a node in the forest which is not a direct successor of $x$, are refuted: there is no arc which links $x$ to $y$, and as such there are no literals of the form $f(x, y)$ with $f \in bpreds (P)$ in the constructed open answer set. Thus, one has to consider only groundings in which $(y_m)_{1\leq m \leq k}$ are successors of $x$ in $EF$: $(y_m=x \cdot z_m)_{1\leq m \leq k}$, and which satisfy $\psi$. For such ground rules, the body can be refuted by enforcing that there is a  $\pm f \in \delta_m$  which does not appear in $\ct(x,x \cdot z_m)$ (equivalent with: $\mp f \in \ct(x,x \cdot z_m)$) or that there is a  $\pm q \in \gamma_m$ which does not appear in $\ct(x \cdot z_m)$ (equivalent with: $\mp q \in \ct(x \cdot z_m)$), for some $1\leq m \leq k$.
\end{itemize}

As we want to refute the bodies of all ground versions of $r$, we either apply (i) once, or apply (ii) for every assignment of successor terms in $r$ with successors of $x$ in $EF$ which satisfies $\psi$. As $\psi$ imposes a minimum bound on the number of distinct successor terms, if the number of successors of $x$ in $EF$ is smaller than this bound, there is no such assignment which satisfies $\psi$. In this case, all bodies of ground versions of $r$ are refuted.

Formally, for a unary negative predicate $\naf p \in \ct(x)$ for which $\st(x,\naf p)=\unexpa$, and for every rule $r \in P_p$ of the form ($\ref{eq:unary})$ such that $x$ matches $s$ ($s$ is the term from the head of the rule), given that $y_1, \ldots, y_n$ are the successors of $x$ in $EF$, do one of the following:
\begin{itemize}
  \item  pick a $\pm q \in \beta$ and $update(\naf p(x),\mp q,x)$, or
  \item for all $y_{i_1}, \ldots, y_{i_k}$ s. t. $(1 \leq i_j \leq n)_{1 \leq j \leq k}$: if for all $1 \leq j,l \leq k$, $t_j \neq t_l \in \psi \Rightarrow y_{i_j} \neq y_{i_l}$, do one of the following:
\begin{itemize}
\item for some $m$, $1 \leq m \leq k$, pick $\pm f \in \delta_m$ and $\update(\naf p(x), \mp f, (x, y_{i_m}))$, or
\item for some $m$, $1 \leq m \leq k$, pick $\pm q \in \gamma_m$ and $\update(\naf p(x), \mp q, y_{i_m})$.
\end{itemize}
\end{itemize}
Finally, set $\st(x, \naf p):=\expa$.

Note that the introduction of new successors of $x$ gives rise to new ground unary rules with head $p(x)$. Such successors are introduced in the process of expanding positive unary predicates. In order to ensure that $p(x)$ is indeed refuted, this rule should be applied only when all successors of $x$ have been introduced, i.e., when there is no possibility to further expand a positive unary predicate:
\begin{itemize}
\item for all $p \in \upreds{P}$, $p \in \ct(x)$ or $\naf p \in \ct(x)$, and
\item for all $p \in \ct(x)$, $\st(p,x):=\expa$
\end{itemize}
In other words, the rule is applied when neither of the expansion rules \emph{(i) Expand unary positive} or \emph{(ii) Choose unary} can be further applied w.r.t. a certain node $x$: in this case there is and there will be no unexpanded positive predicate in the content of $x$.

\subsubsection{(iv) Expand binary positive.} Consider a binary positive predicate symbol $f \in \ct(x,y)$ such that $\st((x,y),f)=\unexpa$. 
If $f$ is not free, pick a rule $r\in P_f$ of the form (\ref{eq:binary}) such that $x$ matches  $s$ and $y$ matches with $t$ ($s$ and $t$ are the terms from the head of the rule) to justify $f$. For $\beta$, $\gamma$, and $\delta$ corresponding to $r$ do:
$\update(p(x,y),\beta,x)$, $\update(p(x,y),\gamma,(x,y))$, and $\update(p(x,y),\delta,y)$. Finally, let $\st((x,y),f):=\expa$ (this is applied also when $f$ is free).


\subsubsection{(v) Expand binary negative.} For a binary negative predicate symbol $\naf f \in \ct(x,y)$ such that $\st((x, y), \naf f)=unexp$, 
and for every rule $r \in P_f$ of the form (\ref{eq:binary}) such that $x$ matches $s$ and $y$ matches $t$ ($s$ and $t$ are the terms from the head of the rule) do one of the following:
\begin{itemize}
\item pick a $\pm p$ from $\beta$ and
$\update(\naf f(x,y),\mp p,x)$, or
\item pick a $\pm g$ from $\gamma$ and
$\update(\naf f(x,y),\mp g,(x,y))$, or
\item pick a $\pm q$ from $\delta$ and
$\update(\naf f(x,y), \mp q,y))$.
\end{itemize} Finally, let $\st((x,y),\naf{f}):=\expa$.
Note that the expand binary negative rule, unlike its unary counterpart, does not have to consider all successors of $x$, just $y$. As such, there are no complex interactions between this rule and the expand binary positive one.


\subsubsection{(vi) Choose a binary predicate.} If  no (possibly negated) unary predicate $\pm a \in \ct(x)$ can be expanded according to expansion rules (i)-(iii), and for all $(x,y) \in A_{\EF}$ none of $\pm f \in \ct(x,y)$ can be expanded according to rules (iv) and (v), and for some ${f \in \bpreds{P}}$: $f \notin \ct(x,y)$ and $\naf f \notin \ct(x,y)$, then do one of the following:
\begin{itemize}
\item add $f$ to $\ct(x,y)$ and let $\st((x,y),p):=\unexpa$, or
\item add $\naf f$ to $\ct(x,y)$ and let $\st((x,y),\naf{p}) := \unexpa$.
\end{itemize}

\subsection{Applicability Rules}\label{sec:appl}

A second set of rules is not updating the completion structure under
consideration, but restricts the use of the expansion rules. We refer to
these rules as so-called applicability rules.

\subsubsection{(vii) Saturation} We call a node $x \in N_{\EF}$
\emph{saturated} if
\begin{itemize}
\item for all ${p \in \upreds{P}}$ we have $p \in \ct(x)$ or $\naf p \in \ct(x)$ and none of $\pm q \in \ct(x)$ can be expanded according to the rules (i)-(iii) ,
\item for all $(x,y) \in A_{T^{\EF}}$, $T \in \EF$ and $p \in \bpreds{P}$, $p \in \ct(x,y)$ or $\naf p \in \ct(x,y)$ and none of $\pm f \in \ct(x,y)$ can be expanded according to the rules (iv)-(vi).
\end{itemize}

We impose that no expansions can be performed on a node from $N_{\EF}$ which does not belong to $\cts{P}$ until its predecessors are saturated (we exclude constants as they can have more then one predecessor in the completion, including themselves).

\subsubsection{(viii) Blocking}
A node $x \in N_\EF$ is \emph{blocked} if there is an ancestor $y$ of $x$ in $F$, $y<_F x$, $y \not \in \cts{P}$, s.t. $\ct(x) \subseteq \ct(y)$ and the set $connpr_G(y,x)=\{(p,q) \mid (p(y), q(x)) \in conn_G \wedge q \mbox{ is not free}\}$ is empty. We call $(y,x)$ a \emph{blocking pair}. 
No expansions can be performed on a blocked node. Intuitively, if there is an ancestor $y$ of $x$ which is not a constant, whose content includes the content of $x$, one can extend the interpretation such that the contents of $x$ and its outgoing arcs are identical to the contents of $y$ and its outgoing arcs. The newly introduced atoms which have $x$ as an argument will be justified in a similar way as their counterpart atoms which have $y$ as an argument. One can either:
\begin{enumerate}
\item reuse the successors of $y$ as successors of $x$: this consists in the introduction of `backward' arcs in the extended forest from the leaf node $x$ to the said successors. The contents of these backward arcs will replicate the content of their counterpart arcs from $y$ to its successors. The interpretation thus obtained is no longer a forest shaped one. This is the approach we consider for proving the soundness of the algorithm and it is exemplified in Section \ref{subsec:proofdisc}.
\item introduce new successors for $x$ which are similar to the successors of $y$ and which at their turn will be justified similarly to the successors of $y$, and so on. In this case, one obtains an infinite forest interpretation. This approach is exemplified at the end of Section \ref{subsec:illustration}.
\end{enumerate}
However, in order for the interpretation constructed in one of the above ways to be a forest model, it is necessary that no atom in the interpretation is circularly or infinitely justified: a sufficient condition to enforce this is to impose that there are no paths in $G$ from a positive literal $p(y)$ to another positive literal $q(x)$. For more insight into this please check Section \ref{subsec:proofdisc} and the complete soundness proof in the appendix.

\subsubsection{(ix) Redundancy}
A node $x \in N_\EF$ is \emph{redundant} if it is saturated, it is not blocked, and there are $k$ ancestors of $x$ in $F$, $(y_i)_{1 \leq i \leq k}$, where $k=2^p(2^{p^2}-1)+2$, and $p=|\upreds{P}|$, such that $\ct(x)=\ct(y_i)$. In other words, a node is redundant if there are other $k$ nodes on the same branch with the current node which all have content equal to the content of the current node. The presence of a redundant node stops the expansion process. 

In the completeness proof we show that any forest model of a FoLP $P$ which satisfies $p$ can be reduced to another forest model which satisfies $p$ and has at most $k+1$ nodes with equal content in any branch of a tree from the forest model, and furthermore the $(k+1)$st node, in case it exists, is blocked\footnote{The reduction consists in collapsing parts of the forest by replacing a subtree with root $c$ with another subtree with root $d$, where $\ct(c)=\ct(d)$, and $d$ is a (non-constant) successor of $c$ in the forest. However, this reduction can be applied only when certain conditions are met, e.g. there are no blocking nodes on the path between $c$ and $d$. As such, the value of $k$ depends on the number of possible contents for nodes, $2^p$, but it is greater than that, due to the fact that the reduction can be applied only in certain situations.}.  One can thus search for forest models only of the latter type. This rule exploits that result: we discard models which are not in this shrunk search space. For more intuition regarding the reduction of a forest model to a forest model with at most $k+1$ nodes with equal content in any branch of a tree from the forest model, we refer the reader to the completeness proof in the appendix.

\subsection{Clash-Free Complete Completion Structures}

We call a completion structure \emph{contradictory} if for some $x \in N_ {\EF}$ and $a \in \upreds{P}$, $\{a, \naf a\} \subseteq \ct(x)$ or for some $(x,y) \in A_{\EF}$ and $f \in \bpreds{P}$, $\{f, \naf f\} \subseteq \ct(x,y)$.
A \emph{complete completion structure} for a FoLP $P$ and a $p\in\upreds{P}$ is a completion structure that results from applying the expansion rules to an initial completion structure for $p$ and $P$, taking into account the applicability rules, such that no expansion rules can be further applied. Furthermore, a complete completion structure  $\CS = \langle \EF,$ $\ct,$ $\st,$ $G \rangle$ is \emph{clash-free} if:
\begin{itemize}
 \item (1) \CS{} is not contradictory,
 \item (2) $\EF$ does not contain redundant nodes, and
 \item (2) $G$ does not contain positive cycles.
\end{itemize}

Next section will give an example for constructing a clash-free complete completion structure, while section \ref{subsec:proofdisc} will show that a predicate $p$ is satisfiable w.r.t. a FoLP $P$ iff there exists a clash-free complete completion structure of $p$ w.r.t. $P$.

\subsection{Illustration of the algorithm}
\label{subsec:illustration}
Consider a slightly modified version of the FoLP program described in Section \ref{sec:intro}, in which the constraints have been replaced by unary rules as described in Section \ref{sec:FLP}, and the last rule has been removed. We will refer to this program as $P$.

\begin{programm}
\nsrule{r_1}{happy(X)}{sees(X,Y),friend(X,Y),happy(Y)}
\nsrule{r_2}{happy(X)}{sees(X,Y), enemy(X,Y), unhappy(Y)}
\nsrule{r_3}{unhappy(X)}{sees(X,Y), friend(X,Y), \naf{happy(Y)}}
\nsrule{r_4}{unhappy(X)}{sees(X,Y), enemy(X,Y), happy(Y)}
\nsrule{r_5}{happy(X)}{friend(X,Y), friend(X,Z), Y \neq Z}
\nsrule{r_6}{sees(X,Y)\lor \naf{sees(X,Y)}}{}
\nsrule{r_7}{friend(X,Y)\lor \naf{friend(X,Y)}}{}
\nsrule{r_8}{enemy(X,Y)\lor \naf{enemy(X,Y)}}{}
\nsrule{r_9}{c(X)}{\naf c(X), happy(X), unhappy(X)}
\nsrule{r_{10}}{d(X,Y)}{\naf d(X,Y), friend(X,Y), enemy(X,Y)}
\nsrule{r_{11}}{\ unhappy(j)}{hungry(j)}
\end{programm}

We want to check the satisfiability of the predicate $happy$ w.r.t. $P$. For this purpose, we first define an initial completion structure for $happy$ w.r.t. $P$: $\langle EF,$ $\ct,$ $\st,$ $G\rangle$. There is one constant in $P$, $j$, so there will be a tree with root $j$, $T_j$, in $EF$; further, we choose not to include a tree with anonymous root in $EF$, and thus the only choice for placing the initial constraint $happy$ is the content of node $j$. The initial status of $happy$ in this node is unexpanded, so the status function is updated accordingly. The graph $G=(V,A)$ which keeps track of dependencies between atoms in the model in construction is initialized such that $V=\{happy(j)\}$, and $A=\emptyset$. The picture below depicts the initial completion structure for $happy$ w.r.t. $P$. Note that the fact that the status of $happy$ is unexpanded is marked by appending the superscript $u$ to $happy$.

\begin{center}
\begin{tikzpicture}[ auto]
    \node (x) {$j$};
\node[right, xshift=-0.2cm] at (x.east) {$\{happy^{u}\}$};
\end{tikzpicture}
\end{center}

According to the expansion rule \emph{(i) Expand unary positive}, the presence of the unexpanded predicate $happy$ in the content of a node $j$, or in other words of $happy(j)$ in the corresponding tentative open answer set, has to be justified by means of a unary rule with head predicate $happy$ and head term which matches $j$. We apply the expansion rule using the unary rule: $\nprule{r_1}{happy(X)}{sees(X,Y),friend(X,Y),happy(Y)}$: a new successor $j1$ is created for $j$ in $T_j$ and the predicates $sees$ and $friend$ are inserted in the content of the arc $(j, j1)$, and the predicate $happy$ is inserted in the content of $j1$. $G$ is also updated by addition of the nodes $happy(j1)$, $sees(j,j1)$, and $friend(j,j1)$ to $V$, and of the arcs $(happy(j), sees(j,  j1))$,  $(happy(j), friend(j,  j1))$, and $(happy(j), happy(j1))$ to $A$. In other words, $happy(j)$ is in the model in construction if there is an individual $j1$ such that $sees(j,j1)$, $friend(j, j1)$, and $happy(j1)$ are all present in the same open answer set. Next figure depicts the situation after the application of the expansion rule. The predicate $happy$ in the content of $j1$ is marked as unexpanded. The other predicates are either expanded ($happy$ in the content of $j$) or free predicates ($sees$ and $friend$ in the content of $(j, j1)$), and as such they are not superscripted. 

\begin{center}
\begin{tikzpicture}[ auto]
    \node (x) {$j$}
	       child{
			node (y) {$j1$}
			edge from parent[->]
			node[above,near end,xshift=1.2 cm] {$\{sees,friend\}$};
		};
\node[right, xshift=-0.2cm] at (x.east) {$\{happy\}$};
\node[right, xshift=-0.2cm] at (y.east) {$\{happy^{u}\}$};
\end{tikzpicture}
\end{center}

Once again the only unexpanded predicate is $happy$, only this time in the content of $j1$. However, we cannot proceed to its expansion since $j$ is not saturated: there are predicates which do not appear either in a positive or a negative form in the contents of $j$ and its outgoing arcs. Remember that according to applicability rule \textit{(vii) Saturation} no expansions can be performed on a node which is not a constant until its predecessor is saturated. We pick the predicate $hungry$ and apply the expansion rule \emph{(ii) Choose unary} by inserting $\naf hungry$ in the content of $j$. It is not possible to apply \emph{(iii) Expand unary negative} w.r.t. $\naf hungry$ in the content of $j$, as one can still apply the \emph{(ii) Choose unary} rule: as such we pick the predicate $c$ and choose to insert $\naf c$ in the content of $j$\footnote{Note that $c$ (which is used to simulate a constraint) does not appear in the head or body or any other rule than $r_9$ and is never satisfiable: as such, an application of \emph{(ii) Choose unary} rule w.r.t. $c$ is needed for saturating the content of every node, and for simplification of exposition we will always choose to insert $\naf c$ in the content of the node (as the other choice would lead to a contradiction). The same reasoning applies to $d$: for every arc, there has to be an application of the \emph{(vi) Choose binary} rule w.r.t. $d$ and the choice in each case will be to insert $\naf d$ in the content of the arc.}. Once again, $j$ is not saturated and \emph{(ii) Choose unary} can be applied w.r.t. $unhappy$: we choose to insert $unhappy$ in the content of $j$:



\begin{center}
\begin{tikzpicture}[ auto]
    \node (x) {$j$}
	       child{
			node (y) {$j1$}
			edge from parent[->]
			node[above,near end,xshift=1.2cm] {$\{sees,friend\}$}
		};
\node[right, xshift=-0.2cm] at (x.east) {$\{happy, \naf hungry^{u},  \naf c^{u}, unhappy^{u}\}$};
\node[right, xshift=-0.2cm] at (y.east) {$\{happy^{u}\}$};
\end{tikzpicture}
\end{center}

Among the unexpanded predicates in the content of $j$ we pick $unhappy$ as the next candidate for expansion as \textit{(i) Expand unary positive} has priority over \textit{(iii) Expand unary negative}. A rule with head predicate $unhappy$ and head term which matches $j$ is picked to justify the presence of $unhappy(j)$ in the model in construction: $\nprule{r_3}{unhappy(X)}{sees(X,Y), friend(X,Y), \naf{happy(Y)}}$.  Either the successor of $j$, $j1$, is reused or a new one is introduced to satisfy the non-local part of the rule. Suppose we pick up the already existing successor, $j1$. In this case $sees$ and $friend$ are inserted into the content of the arc $(j, j1)$ (they are already there), while $\naf happy$ is inserted into the content of $j1$: this leads to a contradiction as now both $\naf happy$ and $happy$ are in the content of $j1$. 


\begin{center}
\begin{tikzpicture}[ auto]
    \node (x) {$j$}
	       child{
			node (y) {$j1$}
			edge from parent[->]
			node[above,near end,xshift=1.2cm] {$\{sees,friend\}$}
		};
\node[right, xshift=-0.2cm] at (x.east) {$\{happy, \naf hungry^{u}, \naf c^{u}, unhappy^{u}\}$};
\node[right, xshift=-0.2cm] at (y.east) {$\{happy^{u}, \naf happy^{u}\}$};
\end{tikzpicture}
\end{center}

The algorithm backtracks and introduces a new successor for $j$, $j2$: $sees$ and $friend$ are inserted into the content of the arc $(j, j2)$, and $\naf happy$ is inserted in the content of $j2$. Now $unhappy$ in the content of $j$ can be marked as expanded, and we proceed further with the expansion process. Suppose we pick $\naf c$ for expansion. There is a single ground rule which defines $c(j)$: $\prule{c(j)}{\naf c(j), happy(j), unhappy(j)}$. According to the expansion rule \emph{(iii) Expand unary negative}, the body of this rule has to be refuted. There are three possible choices for doing this: inserting $c$, $\naf happy$, or $\naf unhappy$ into the content of $j$. Each of the three choices leads to a contradiction. The figure below depicts the case when $\naf unhappy$ was chosen to refute the body of the rule.


\begin{center}
\begin{tikzpicture}[level distance=2.6cm, auto]
    \node (x) {$j$}
	       child{
			node (y) {$j1$}
			edge from parent[->]
			node[above,near end,yshift=0.55cm, xshift=-0.9cm] {$\{sees,friend\}$}
		}
        child{
			node (z) {$j2$}
			edge from parent[->]
			node[above,near end,,yshift=0.55cm, xshift=0.9cm] {$\{sees,friend\}$}
		};
\node[right, xshift=-0.25cm] at (x.east) {$\{happy, \naf hungry^u, unhappy, \naf c, \naf unhappy^{u}\}$};
\node[right, xshift=-1.5cm] at (y.west) {$\{happy^{u}\}$};
\node[right, xshift=-0.2cm] at (z.east) {$\{\naf happy^{u}\}$};
\end{tikzpicture}
\end{center}
\normalsize

The algorithm backtracks to the previous choice, which was the choice of the rule to justify $unhappy$ in the content of $j$. There are still two more rules in $P$ whose head matches $unhappy(j)$: $r_4$ and $r_{11}$. However,  from the previous developments one can see that even if $unhappy$ is satisfied in some other way, one will eventually reach a contradiction due to the presence of $happy$, $unhappy$, and $\naf c$ in the content of $j$. As such, we skip the remaining two choices as concerns rules to justify $unhappy(j)$. Backtracking further, one has to retract $unhappy$ from the content of $j$, and insert $\naf unhappy$ instead, and mark it as unexpanded. 
Next step is to select $\naf unhappy$ for expansion. According to the expansion rule \emph{(iii) Expand unary negative}, every ground rule which defines $unhappy(j)$ has to be considered and its body to be refuted. There is one instantiation for each rule whose head matches $unhappy(j)$: 
\begin{itemize}
\item $r_3$: $\prule{unhappy(j)}{sees(j, j1), friend(j, j1), \naf happy(j1)}$. The body of this rule has to be refuted: $sees(j, j1)$ and $friend(j, j1)$ are already part of the tentative open answer set so they cannot be refuted. The only remaining choice is to refute $\naf happy(j1)$, thus to insert $happy$ into the content of $j1$.
\item $r_4$: $\prule{unhappy(j)}{sees(j, j1), enemy(j, j1), happy(j1)}$. Here the only choice which does not lead to contradiction is asserting $\naf enemy$ to the content of $j1$. The predicate $enemy$ is a free predicate, defined only by a free rule, so it is trivially expanded.
\item $r_{11}$: $\prule{unhappy(j)}{hungry(j)}$. The body of this rule is refuted by the presence of $\naf hungry$ into the content of $j$.
\end{itemize}



Finally, in order to saturate $j$, we apply the \emph{(vi) Choose binary} rule and insert $\naf d$ in the content of $(j, j1)$.  Then, $\naf d$ is expanded using \emph{(vi) Expand binary negative}: we observe that the body of the ground rule  $\prule{d(j, j1)}{\naf d(j, j1), friend(j,j1), enemy(j,j1)}$ derived from $r_{10}$ is already refuted by the presence of $\naf enemy$ in the content of $(j, j1)$.

\begin{center}
\begin{tikzpicture}[ auto]
    \node (x) {$j$}
	       child{
			node (y) {$j1$}
			edge from parent[->]
			node[above,near end,xshift=2.65cm] {$\{sees,friend, \naf enemy, \naf d\}$}
		};
\node[right, xshift=-0.2cm] at (x.east) {$\{happy, \naf hungry, \naf unhappy, \naf c\}$};
\node[right, xshift=-0.2cm] at (y.east) {$\{happy^{u}\}$};
\end{tikzpicture}
\end{center}

At this moment, $j$ is saturated and by means of applicability rule \emph{(vii) Saturation} we can proceed to its successor $j1$. One can see that the content of $j1$ is included in the content of $j$, so according to rule \emph{(viii) Blocking}, $(j, j1)$ is a candidate blocking pair. However $G$ contains the arc $(happy(j), happy(j1))$, so $connpr_{G}(j,j1) \neq \emptyset$, and the second condition of the blocking rule is not met. Intuitively, if one would justify $j1$ in a similar manner as $j$, an infinite chain of the type $happy(j), happy(j1), \ldots$ would be present in the model in construction, each atom in the set being justified by the next one in the set, thus there would be atoms in the model which are not finitely justified. Thus, $j1$ cannot be blocked and we proceed to expanding its content. This time we pick rule $\nprule{r_5}{happy(X)}{friend(X,Y), friend(X,Z), Y \neq Z}$ to justify the presence of $happy(j1)$ in the tentative open answer set. To satisfy the body of some grounded version of the rule, two distinct successors of $j1$, $j11$ and $j12$, are created, and $friend$ is asserted to the content of both $(j1, j11)$ and $(j1, j12)$. The arcs $(happy(j1), friend(j1, j11))$ and $(happy(j1), friend(j1, j12))$ are added to $A$ in $G$ to capture the new dependencies between atoms in the tentative open answer set.
\begin{center}
\begin{tikzpicture}[ auto]
    \node (x) {$j$}
	       child{
			node (y) {$j1$}
			edge from parent[->]
			child{
				node (z) {$j11$}
				edge from parent[->]
				node[above,near end,xshift=1.8cm] {$\{friend^{u}\}$};
			}
			child{
				node (t) {$j12$}
				edge from parent[->]
				node[above,near end,xshift=-1.8cm] {$\{friend^{u}\}$};
			}
			node[above,near end,xshift=2.7cm] {$\{sees^{u},friend^{u}, \naf enemy, \naf d\}$};
		};
\node[right, xshift=-0.2cm] at (x.east) {$\{happy, \naf hungry, \naf unhappy, \naf c\}$};
\node[right, xshift=-0.2cm] at (y.east) {$\{happy\}$};
\node[right, xshift=-0.2cm] at (t.east) {$\{\}$};
\node[right, xshift=-0.4cm] at (z.west) {$\{\}$};
\end{tikzpicture}
\end{center}
Now we proceed to saturate $j1$ by choosing to add $\naf c$, $\naf hungry$, and $\naf unhappy$ to the content of $j1$ by repeatedly applying the expansion rule \emph{(vi) Choose unary negative}. The first two additions are expanded in a similar manner as their counterparts in the content of $j$. As concerns $\naf unhappy$, we have to consider again all three rules which define the predicate $unhappy$. The justification w.r.t. $r_{11}$ is similar as above, as the rule is a local rule. There are two successors of $j1$, $j11$ and $j12$, so there are two ground versions of $r_3$: $\prule{unhappy(j1)}{sees(j1,j11), friend(j1,j11), \naf{happy(j11)}}$, and $\prule{unhappy(j1)}{sees(j1,j12), friend(j1,j12), \naf{happy(j12)}}$, and two ground versions of rule $r_4$: $unhappy(j1)$ $\gets$ $sees(j1,j11), enemy(j1,j11), happy(j11)$, and $unhappy(j1)$ $\gets$ $sees(j1,j12), enemy(j1,j12), happy(j12)$. The bodies of all these four ground rules have to be refuted. This is achieved by asserting $happy$ to the content of $j11$, $\naf sees$ to the content of $(j1, j12)$, and $\naf enemy$ to both the contents of $(j1, j11)$ and $(j1, j12)$. Finally, we saturate $j1$ by completing the contents of the arcs $(j1, j11)$ and $(j1, j12)$ in a similar manner as for the arc $(j, j1)$.
\begin{center}
\begin{tikzpicture}[ auto]
    \node (x) {$j$}
	       child{
			node (y) {$j1$}
			edge from parent[->]
			child{
				node (z) {$j11$}
				edge from parent[->]
				node[above,near end,xshift=3.7cm] {$\{friend, \naf sees, \naf enemy, \naf d\}$};
			}
			child{
				node (t) {$j12$}
				edge from parent[->]
				node[above,near end,xshift=-3.4cm] {$\{friend, sees, \naf enemy, \naf d\}$};
			}
			node[above,near end,xshift=2.55cm] {$\{sees, friend, \naf enemy, \naf d\}$};
		};
\node[right, xshift=-0.25cm] at (x.east) {$\{happy, \naf hungry, \naf unhappy, \naf c\}$};
\node[right, xshift=-0.25cm] at (y.east) {$\{happy, \naf unhappy, \naf hungry, \naf c\}$};
\node[right, xshift=-0.2cm] at (t.east) {$\{\}$};
\node[right, xshift=-1.5cm] at (z.west) {$\{happy^{u}\}$};
\end{tikzpicture}
\end{center}

At this moment, $j1$ is also saturated and we observe that the contents of both its successors are included in its own content. Unlike the case where $\ct(j1)\subset \ct(j)$, but $connpr_G(j, j1) \neq \emptyset$, we have that both $connpr_G(j1, j11)=\emptyset$, and $connpr_G(j1, j12)$ $= \emptyset$, thus both $(j1, j11)$ and $(j1, j12)$ are blocking pairs. Thus, the completion structure depicted in the figure above is a complete clash-free completion structure. We can derive a forest-shaped open answer set by unraveling the structure, as explained already in the context of rule \emph{(viii) Blocking}. The contents of $j11$ and $j12$ are made to be identical to the content of $j1$ and they are justified similarly as the content of $j1$. This will give rise to two new successors for both $j11$ and $j12$, which again will be justified in the same manner, etc. The obtained forest model is depicted in the figure below.

\begin{center}
\begin{tikzpicture}[level distance=20mm,level 1/.style={sibling distance=40mm},level 2/.style={sibling distance=45mm},level 3/.style={sibling distance=20mm}, level 3/.style={sibling distance=10mm}, auto]
    \node (x) {$j$}
	       child{
			node (y) {$j1$}
			edge from parent[->]
			child{
				node (z) {$j11$}
				edge from parent[->]
                child{
				    node (z1) {$\ldots$}
				    edge from parent[->]
node[above,near end,xshift=1.4cm] {$\{friend\}$};
			     }
			     child{
				    node (z2) {$\ldots$}
				    edge from parent[->]
                node[above,near end,xshift=-1.8cm] {$\{friend, sees\}$};
			     }
				node[above,near end,xshift=3.65cm] {$\{friend\}$};
			}
			child{
				node (t) {$j12$}
				edge from parent[->]
                child{
				    node (t1) {$\ldots$}
				    edge from parent[->]
				    node[above,near end,xshift=1.4cm] {$\{friend\}$};
			     }
			     child{
				    node (t2) {$\ldots$}
				    edge from parent[->]
				    node[above,near end,xshift=-1.8cm] {$\{friend, sees\}$};
			     }
				node[above,near end,xshift=-4.0cm] {$\{friend, sees\}$};
			}
			node[above,near end,xshift=1.15cm] {$\{friend, sees\}$};
		};
\node[right, xshift=-0.2cm] at (x.east) {$\{happy\}$};
\node[right, xshift=-0.2cm] at (y.east) {$\{happy\}$};
\node[right, xshift=-0.2cm] at (t.east) {$\{happy\}$};
\node[right, xshift=-1.3cm] at (z.west) {$\{happy\}$};
\end{tikzpicture}
\end{center}

Thus, $happy$ is satisfiable w.r.t. $P$. The open answer set which satisfies $happy$ is $(U, M)$, with $U=\{j, j1, j11, j12,$ $ j111,$ $ j112,$ $ \ldots\}$, and $M=\{happy(j)\} $ $\cup \{happy(js),$ $friend$ $(js,$ $ js1),$ $ friend (js, js2), sees(js, js1)$ $|s=1, $ $11, 12, 111, 112,$ $ \ldots \}$.

\subsection{Termination, Soundness, and Completeness}
\label{subsec:proofdisc}

We show that an initial completion structure for a unary predicate $p$ and a FoLP $P$ can always be expanded to a complete completion structure
(\emph{termination}), that, if there is a clash-free complete completion structure, $p$ is satisfiable w.r.t. $P$ (\emph{soundness}), and finally, that, if $p$ is satisfiable w.r.t. $P$, there is a clash-free complete completion structure (\emph{completeness}).

\begin{proposition}[termination]\label{prop:termination}
Let $P$ be a FoLP and $p \in \upreds{P}$.  Then, one can construct a finite complete
completion structure by a finite number of applications of the expansion
rules to the initial completion structure for $p$ w.r.t. $P$, taking into account
the applicability rules.
\end{proposition}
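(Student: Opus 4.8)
The plan is to show that the expansion process halts by bounding the depth and branching of the extended forest $\EF$, and observing that each node requires only finitely many expansion steps. First I would argue that only finitely many expansion rules can be applied at any single node $x$ and its outgoing arcs: the content $\ct(x)$ is a subset of $\upreds{P} \cup \NAF(\upreds{P})$, which is finite, and each predicate symbol is inserted at most once and its status toggles from $\unexpa$ to $\expa$ exactly once (expansion rules (i)--(vi) never reset a status to $\unexpa$ for a symbol already present). The same holds for each arc content $\ct(x,y) \subseteq \bpreds{P} \cup \NAF(\bpreds{P})$. Hence the work done locally at any node or arc is bounded by a constant depending only on $|\upreds{P}|$ and $|\bpreds{P}|$.

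Next I would bound the size of $\EF$. The crucial observation is that new nodes are created only by the \emph{expand unary positive} rule (iv-style successor creation in rule (i)), and each such application creates at most $\degree(p)$ successors for a given positive predicate $p$; summing over all unary predicates, a single node spawns at most $\rank(P)$ fresh tree-successors. Thus the branching of every tree in $F$ is bounded by $\rank(P)$. To bound the depth, I would invoke the \emph{(ix) Redundancy} applicability rule together with \emph{(viii) Blocking}: expansion cannot proceed past a blocked node, and a node becomes redundant once $k = 2^{p}(2^{p^2}-1)+2$ ancestors on its branch share its content (with $p = |\upreds{P}|$). Since there are only $2^{p}$ possible distinct node contents, any branch that does not get blocked earlier must, after at most $k \cdot 2^{p}$ nodes, contain $k+1$ nodes of equal content, at which point a redundant node appears and halts that branch. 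This yields a finite bound on the length of every branch, and hence on the depth of the forest.

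Combining a bounded branching factor $\rank(P)$ with a bounded depth gives a finite bound on $|N_{\EF}|$, and therefore on $|A_{\EF}|$ and on the size of $G$ (whose vertices and arcs are drawn from the finite set $\atoms{\ground{P}{N_{\EF}}}$). Since the forest is finite and the work per node and arc is finite, the total number of expansion-rule applications is finite, so the process terminates in a complete completion structure.

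The main obstacle I anticipate is the interaction between node creation and the \emph{Saturation} applicability rule (vii). Because a node may not be expanded until its predecessor is saturated, and saturation itself may force additional \emph{choose} and \emph{expand negative} steps that in turn trigger further positive expansions (and thus new successors), I must verify that this cascade cannot loop: specifically, that the \emph{expand unary negative} rule (iii), which is deferred until all positive expansions at a node are exhausted, does not reintroduce unexpanded positive predicates after the node's positive work is done. I would handle this by a careful ordering argument showing that at each node the positive-expansion phase completes before the negative-expansion phase, so no perpetual alternation occurs, and then appeal to the redundancy/blocking bound to cut off depth. The honest subtlety is that without rule (ix), blocking alone does not guarantee termination (as the paper itself notes), so the depth bound rests essentially on the redundancy cutoff, and the bulk of the rigor lies in justifying the value of $k$ and that a redundant node is genuinely reached on every unbounded branch.
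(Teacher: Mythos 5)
Your proposal is correct and follows essentially the same route as the paper's own proof: branching is bounded by $\mathit{rank}(P)$, and depth is bounded because any sufficiently long branch must, by pigeonhole over the $2^{|\upreds{P}|}$ possible node contents, contain $k+1$ nodes of equal content and hence a blocked or redundant node that is not expanded further, while the work at each individual node and arc is finite. The extra care you take with the local termination argument (statuses toggling only once) and with the saturation/negative-expansion ordering is a welcome elaboration of details the paper's sketch leaves implicit, but it does not change the structure of the argument.
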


\begin{proof}[Proof sketch]
Assume one cannot construct a complete completion structure by a finite number of applications of the expansion rules, taking into account the applicability rules. Clearly, if one has a finite completion structure that is not complete, a finite application of expansion rules would complete it unless successors are introduced.  However, one cannot introduce infinitely many successors: every infinite path in the extended forest will eventually contain $|k+1|$ saturated nodes with  equal content, where $k$ is as in the redundancy rule, and thus either a blocked or a redundant node, which is not further expanded. Furthermore, the arity of the trees in the completion structure is bound by the number of successor variables in unary rules, more precisely by $rank(P)$, where $P$ is the FoLP under consideration.
\end{proof}

\begin{proposition}[soundness]\label{prop:soundness}
Let $P$ be a FoLP and $p\in \upreds{P}$. If there exists a complete clash-free completion structure for $p$ w.r.t. $P$, then $p$ is satisfiable w.r.t. $P$.
\end{proposition}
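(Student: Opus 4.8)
The plan is to turn a clash-free complete completion structure $\CS=\langle\EF,\ct,\st,G\rangle$ into a genuine open answer set $(U,M)$ witnessing satisfiability of $p$. First I would read off a candidate interpretation directly from the contents: take $U=N_\EF$ and let $M$ collect every positive atom recorded by $\ct$, i.e.\ $a(x)\in M$ iff $a\in\ct(x)$ for $x\in N_\EF$, and $f(x,y)\in M$ iff $f\in\ct(x,y)$ for $(x,y)\in A_\EF$. Since $\CS$ is not contradictory, this is well defined (no atom is simultaneously forced in and out). The only place where the raw contents are not yet justified is at the blocked leaves, so I would follow approach (1) of the blocking rule: for every node $x$ blocked by an ancestor $y$, redirect $x$ onto the subtree of $y$ by adding backward arcs $(x,y')$ for each successor $y'$ of $y$, copying the arc contents $\ct(y,y')$, and making the label at $x$ mirror that at $y$ (legitimate because $\ct(x)\subseteq\ct(y)$). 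Because the forest is finite (redundancy bounds depth and $rank(P)$ bounds arity) this yields a finite graph-shaped interpretation $(U,M)$.

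Next I would verify that $M$ is a model of $\ground{P}{U}$. For a ground rule whose head atom lies in $M$, satisfaction is immediate from the way the expand-positive rules operate: when a positive $a\in\ct(x)$ was expanded, $\update$ inserted exactly the body literals (the $\beta$, $\gamma_m$, $\delta_m$ and the required distinct successors) into the structure, so some body is true in $M$; at a blocked node the same holds by the mirroring construction. For a ground rule whose head atom is \emph{not} in $M$, the expand-negative rules, applied only once a node is saturated so that all its successors are present, guarantee that every ground instance has a refuted body literal, so the rule is vacuously satisfied. Free rules are satisfied because the choose rules committed to a definite sign for every (possibly negated) predicate, and constraints were encoded as ordinary unary rules. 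Simultaneously this argument identifies the reduct $\ground{P}{U}^M$: the surviving rules are exactly those whose negative body is true in $M$, and their positive bodies are the ones that were asserted.

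The crux is then to show that $M$ is not merely a model but a subset-minimal model of $\ground{P}{U}^M$, i.e.\ an open answer set, which amounts to showing that every atom of $M$ is finitely and non-circularly justified in the sense of well-supportedness \cite{fages-NewfixpointSem:91,heymans-amai2006}. Here I would use the dependency graph $G$ (extended along the backward arcs introduced for blocked nodes): an arc $(l,\pm p(z))$ was recorded precisely when $\pm p(z)$ was used to justify $l$, so a justification of each positive atom can be traced along edges of $G$. Clash-freeness gives that $G$ has no positive cycles, ruling out circular justifications within the finite part; the blocking side-condition $connpr_G(y,x)=\emptyset$ is exactly what prevents the redirection of a blocked $x$ onto $y$'s subtree from closing a positive path back up to $y$, so the \emph{folded} dependency graph over $M$ remains free of positive cycles. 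From acyclicity on a finite atom set I would extract a well-founded level mapping and conclude that $M$ coincides with the least fixpoint of the immediate-consequence operator of the reduct, hence is its minimal model. Finally, $p\in\ct(\roo)$ in the initial structure and is never removed, so $p(\roo)\in M$ and $p$ is satisfiable w.r.t.\ $P$.

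The main obstacle is this last, minimality step: proving well-supportedness after the blocked leaves have been folded back into the interior of the graph. The delicate point is that redirection creates new ground atoms (the $f(x,y')$ on backward arcs and the mirrored unary atoms) whose justifications must be shown to bottom out in finitely many steps without reusing an ancestor's atom as part of its own support; isolating the right invariant on $G$ --- that no positive path connects a blocking node to its blocked copy --- and proving that it is preserved under folding is where the real work lies.
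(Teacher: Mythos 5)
Your proposal takes essentially the same route as the paper's proof: the open interpretation is obtained by folding each blocked node onto its blocking ancestor (backward arcs plus mirrored contents and dependency edges), rule satisfaction is read off the expand-positive/negative and choose rules exactly as you describe, and minimality is reduced to the absence of positive cycles in the dependency graph extended along the folding, which is where the $connpr_G(y,x)=\emptyset$ blocking condition is used. The invariant you single out as the remaining work --- that no positive path from a blocking node to its blocked copy survives the folding --- is precisely what the paper's appendix establishes, via a lemma relating paths in the extended dependency graph to paths in the folded forest and a three-way case analysis of putative cycles.
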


\newcommand{\blocked}{\ensuremath{\mathit{blocked}}}
\newcommand{\Texte}{\ensuremath{T_\mathit{ext}}}
\newcommand{\texte}{\ensuremath{t_\mathit{ext}}}
\newcommand{\Gexte}{\ensuremath{G_\mathit{ext}}}
\newcommand{\Vexte}{\ensuremath{V_\mathit{ext}}}
\newcommand{\Aexte}{\ensuremath{A_\mathit{ext}}}
\newcommand{\Eexte}{\ensuremath{E_\mathit{ext}}}
\newcommand{\nEF}{\ensuremath{\EF^{'}}}
\newcommand{\nAEF}{\ensuremath{A^{'}}}
\newcommand{\orig}[1]{\ensuremath{\overline{#1}}}

\begin{proof}[Proof sketch]
From a clash-free complete completion structure, one can construct an open interpretation and show that this interpretation is an open answer set of $P$ that satisfies $p$.
One way to construct such an open interpretation, by unraveling the completion structure to an infinite structure (an open answer set with an infinite universe and an infinite interpretation), has been exemplified in the previous section. However, for simplicity of the proof we chose a different approach: from a forest-shaped completion structure we generate a graph-shaped open answer set by extending the content of the blocked nodes to be identical to the content of the corresponding blocking nodes and introducing additional arcs from blocked nodes to successors of blocking nodes which mirror the arcs from the blocking nodes themselves to their successors (thus, also inheriting their content). Also, at this stage all negated predicates from the contents of nodes/arcs can be ignored. Considering our example from section \ref{subsec:illustration}, the complete clash-free completion structure described there gives rise to the graph-shaped open answer set depicted by Figure \ref{fig:graphopenanswer}.

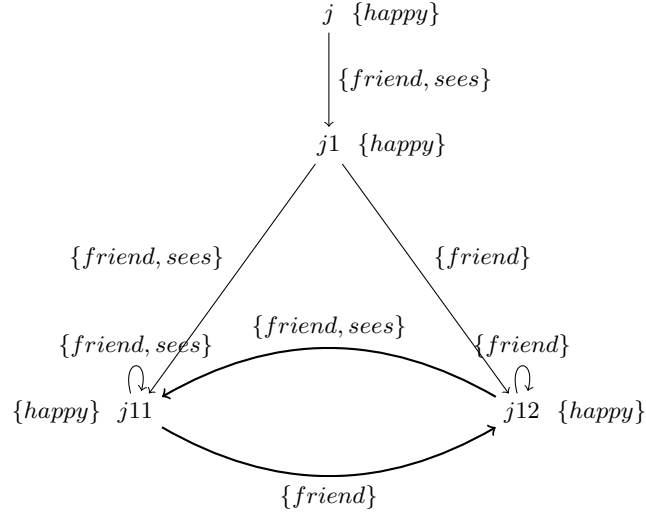
\begin{figure}[htbp]
\begin{center}
\begin{tikzpicture}[auto]
\GraphInit
\node[label=right:{$\{happy\}$}](j){$j$};
\node[yshift=-1.5cm, label=right:{$\{happy\}$}] at (j.south)(j1){$j1$} ;
\node[xshift=-2.3cm, yshift=-3.3cm, label=left:{$\{happy\}$}] at (j1.south west)(j11){$j11$} ;
\node[xshift=2.3cm, yshift=-3.3cm, label=right:{$\{happy\}$}] at (j1.south east)(j12){$j12$} ;
\draw [->] (j)  to node[auto] {$\{friend, sees\}$} (j1);
\draw     [->] (j1) to node[auto, swap] {$\{friend, sees\}$} (j11);
\draw     [->] (j1) to node[auto] {$\{friend\}$} (j12);
\draw     [->] (j11) [loop above] to node[auto] {$\{friend, sees\}$} (j11);
\draw     [->] (j12) [loop above] to node[auto] {$\{friend\}$} (j12);
\tikzset{EdgeStyle/.style = {->}, bend right, swap}
\Edge[label={$\{friend\}$}](j11)(j12)
\Edge[label={$\{friend, sees\}$}](j12)(j11)
\end{tikzpicture}
\end{center}
\caption{Graph-shaped open answer set derived from a clash-free complete completion structure}
\label{fig:graphopenanswer}
\end{figure}

The universe of the open interpretation is the set of nodes of the new graph (identical to the set of nodes of the extended forest), while the interpretation is the set of atoms having as arguments nodes/arcs of the graph and as predicate symbols predicates in the content of these nodes/arcs. In the example above, the open answer set is: $\{happy(j)$, $friend(j,j1)$, $sees(j, j1)$, $happy(j1)$, $friend(j1, j11)$, $sees(j1, j11),$ $happy(j11)$, $friend(j11, j11)$, $sees(j11, j11),$, $friend(j11, j12)$, $sees(j11, j12)$, $\ldots\}$. Intuitively, the atoms having as arguments non-blocked nodes are justified by the way the completion structure was constructed, while atoms having a blocked node as one of the arguments are justified in a similar way to their counterparts\footnote{The counterpart atom of an atom $p(x)/f(x,y)$, where $x$ is a blocked node is the atom $p(z)/f(z,y)$, where $(z,x)$ is a blocking pair.}.

The blocking condition which states that there should be no path from a $p(x)$ to a $q(y)$ in $G$ if $(x, y)$ is a blocking pair, is crucial in showing that this open interpretation is minimal. The intuition was given in the previous section where we discussed how although the content of node $j1$ was included in the content of node $j$ at a certain point in the expansion process they do not form a blocking pair as there is a path from $happy(j)$ to $happy(j1)$. For more details, we refer the reader to the complete proof in appendix.
\end{proof}

\begin{proposition}[completeness]\label{prop:completeness}
Let $P$ be a FoLP and $p\in \upreds{P}$. If $p$ is satisfiable w.r.t. $P$, then there exists a clash-free complete completion structure for $p$ w.r.t. $P$.
\end{proposition}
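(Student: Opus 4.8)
The plan is to start from a forest model of $p$ guaranteed by the forest model property (Proposition~\ref{prop:forestmodelproperty1}), and to use that model as an \emph{oracle} that resolves every nondeterministic choice made by the expansion rules, so that the construction mirrors the model. Concretely, since $p$ is satisfiable w.r.t.\ $P$, Proposition~\ref{prop:forestmodelproperty1} gives that $p$ is forest satisfiable, so there are an open answer set $(U,M)$, an extended forest $\EF$, and a labelling $\mathcal{L}$ with $p \in \mathcal{L}(\roo)$ as in Definition~\ref{def:forest-sat1}. My first task would be to reshape $(U,M)$ into a \emph{bounded} forest model before reading it off, because a raw forest model may have branches with unboundedly many repeated contents, which would force redundant nodes and thus violate clash-freeness.

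The reshaping step is the heart of the argument and the step I expect to be the main obstacle. I would show that any forest model satisfying $p$ can be transformed into a forest model satisfying $p$ in which every branch of every tree contains at most $k+1$ nodes of equal content, with the $(k+1)$st such node (when present) blocked in the sense of rule~(viii), where $k$ is the bound from the redundancy rule. The transformation proceeds by \emph{collapsing}: whenever a branch contains a non-constant node $d$ that is a descendant of a node $c$ with $\ct(c)=\ct(d)$ and no blocking node strictly between them, one replaces the subtree rooted at $c$ by the subtree rooted at $d$. The delicate points are (a) that collapsing preserves the open-answer-set property, in particular that no atom loses its justification and that minimality is retained, and (b) that the value of $k$ is chosen large enough that, after all admissible collapses, any surviving repetition of contents is separated by a genuine blocking pair; this is exactly why $k=2^{p}(2^{p^{2}}-1)+2$ with $p=|\upreds{P}|$, rather than merely $2^{p}$, since a collapse is admissible only when the intervening stretch carries no blocking node.

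With a bounded forest model in hand, I would run the expansion rules and resolve their nondeterminism against the model: for a positive unary atom I pick the rule and the successor individuals used to justify it in $M$ (reusing the model's successors, including the arcs of $\ES$ to constants); for a negative literal $\naf a$ I select, for each relevant ground rule, a body literal falsified in $M$, which exists because $M$ satisfies the rule while $a$ is absent; and for each choose rule I add $a$ or $\naf a$ according to whether the corresponding atom is in $M$. Termination of this guided run follows from Proposition~\ref{prop:termination}. It then remains to verify the three clash-freeness conditions. The structure is not contradictory because $M$ never contains an atom together with its complement, so we never insert both $a$ and $\naf a$ into a content. There are no redundant nodes because the model was reshaped to be bounded and every would-be $(k+1)$st repetition is instead blocked. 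Finally, $G$ has no positive cycles: every arc of $G$ records a genuine support dependency read off $M$, and since $M$, being an open answer set, is a minimal (hence well-supported, in the sense of~\cite{fages-NewfixpointSem:91}) model, the positive dependencies among atoms are well-founded, so no positive cycle can arise; the same well-foundedness guarantees that each block witnessed during the run satisfies the emptiness of $\mathit{connpr}_G$ required by rule~(viii). Assembling these facts yields a complete clash-free completion structure for $p$ w.r.t.\ $P$.
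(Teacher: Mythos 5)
Your proposal has the right ingredients --- guide the nondeterministic expansion by a forest model, and control repetitions of content along branches by a collapsing argument so that the redundancy rule never fires --- but it reverses the order of the two stages in a way that leaves a genuine gap. You propose to reshape the \emph{model} first and only then read off a completion structure. However, the admissibility conditions for a collapse (no blocking node strictly between the two nodes being identified, and, for the second kind of collapse, an inclusion between the sets $\mathit{connpr}_G(z,u)$ and $\mathit{connpr}_G(z,v)$), as well as the very derivation of the bound $k$, are all formulated in terms of the dependency graph $G$ and the blocking pairs of a \emph{completion structure} --- objects that simply do not exist on a raw forest model. To carry out your plan you would have to define an analogous justification/dependency structure directly on $(U,M)$, re-prove the collapse lemmas there, and in particular show that replacing the subtree at $c$ by the subtree at $d$ preserves not just modelhood but \emph{minimality} of the resulting interpretation. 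You name this as ``the delicate point'' but do not supply the argument, and it is precisely the hard part: minimality under collapse is certified in the paper via the absence of positive cycles in $G$ and the $\mathit{connpr}_G$ conditions on blocking pairs, not by any property visible in the model alone.

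The paper avoids this by working in the opposite order: it first introduces \emph{relaxed} completion structures (all rules except redundancy), runs the guided expansion against the model maintaining the invariant that the positive content of each node is contained in the label of its image and the negative content is disjoint from it, and proves that this guided construction terminates --- not by the redundancy rule, but because an infinite run would yield an infinite justification path in $G$, contradicting the finite justifiability of atoms in open answer sets. Only then does it collapse the resulting \emph{completion structure}, using lemmas whose hypotheses and proofs are phrased in terms of $G$ and the blocking pairs, and counts repetitions per branch to land under the redundancy threshold. If you want to keep your model-first ordering, you must either transport the dependency graph to the model and redo the collapse lemmas there, or accept the paper's ordering; as written, your ``reshaping step'' is not well-defined and the proof does not go through.
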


\begin{proof}[Proof sketch]
If $p$ is satisfiable w.r.t. $P$ then $p$ is forest-satisfiable w.r.t. $P$. We construct a clash-free complete completion structure for $p$ w.r.t. $P$, by guiding the non-deterministic application of the expansion rules with the help of a forest model of $P$ which satisfies $p$ and by taking into account the constraints imposed by the saturation, blocking, and redundancy rules. The proof is inspired by completeness proofs in DL for tableau, for example in \cite{horrocks99practical}, but requires additional mechanisms to eliminate redundant parts from Open Answer Sets.

There are two main stages in the proof: in the first stage, a so-called \emph{complete clash-free relaxed completion structure} is constructed with the help of a forest model of $P$ which satisfies $p$. Such a structure is defined/constructed similarly as a classical completion structure apart from the fact that the redundancy rule is not employed. Accordingly, for a relaxed completion structure to be clash-free the condition regarding the absence of redundant nodes is not relevant.

The second stage consists in transforming such a complete clash-free relaxed completion structure into a clash-free complete completion structure. The transformation consists in several successive steps, each step `shrinking' the structure, by cutting some parts of it, in such a way that the new structure is still a complete clash-free relaxed completion structure. It is shown that the result of this transformation is a structure for which every branch of the tree has at most $k$ nodes with equal content, with $k$ as defined in the redundancy rule, and thus, it is a complete clash-free completion structure.
For more details, we refer the reader to the appendix.
\end{proof}

\begin{proposition}\label{prop:complexity}
The algorithm runs in the worst case in double exponential time in the size of the program.
\end{proposition}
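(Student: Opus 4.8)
The plan is to charge the running time to (a) the maximal size of a completion structure the algorithm can produce and (b) the cost of each expansion step and applicability check, and then to multiply the two. Writing $p=|\upreds{P}|$ and $k=2^p(2^{p^2}-1)+2$ as in the redundancy rule (ix), the dominant quantity is the number of nodes $|N_{\EF}|$ of the final extended forest, so the core of the argument is to show that $|N_{\EF}|$ is at most double exponential in the size of $P$.

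First I would bound the depth of any tree in $\EF$. A saturated node contains, for each $q\in\upreds{P}$, exactly one of $q$ and $\naf{q}$, so there are at most $2^p$ distinct saturated contents; by the redundancy rule no branch survives more than $k$ ancestors sharing a common content before the next such node becomes redundant and is no longer expanded, so the depth of any tree is at most $(k+1)\cdot 2^p$, which is single exponential in $|P|$ because $k=2^{O(p^2)}$. The arity of each tree is bounded by $rank(P)$, which is polynomial in $|P|$ (this bound was already used in the termination proof). Hence a single tree has at most $rank(P)^{(k+1)\cdot 2^p}$ nodes, and since $\log\!\big(rank(P)^{(k+1)\cdot 2^p}\big)=O\!\big((\log rank(P))\cdot 2^{O(p^2)}\big)=2^{O(p^2)}$, the count $|N_{\EF}|$, summed over the $|\cts{P}|+1$ trees, is at most double exponential in $|P|$; the same therefore holds for the number of arcs $|A_{\EF}|$.

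Next I would account for the bookkeeping. Every content has at most $2p$ elements, so the number of expansion steps is bounded by a polynomial factor times $|N_{\EF}|+|A_{\EF}|$, i.e.\ double exponential. The graph $G$ has vertices drawn from $\atoms{\ground{P}{N_{\EF}}}$, whose cardinality is at most $|P|\cdot|N_{\EF}|^2$ and hence still double exponential, with at most quadratically many arcs. The most expensive individual operations are the connectivity tests $conn_G$ demanded by the blocking rule (viii) and the positive-cycle test used for clash-freeness; both are polynomial in the size of $G$ and therefore double exponential. Multiplying the double exponential number of steps by the double exponential cost per step still yields a double exponential bound, while updating $\ct$, $\st$, and $G$ at each step is comparatively cheap.

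Finally I would address the non-determinism: each expansion rule offers only boundedly many choices (the justifying rule, the assignment of successors, and the polarity in the \emph{choose} rules), so exploring them over a structure of double exponential size, with backtracking guided by the applicability rules, stays within double exponential time; equivalently, constructing and verifying one complete clash-free completion structure is double exponential, which by soundness and completeness decides satisfiability. The hard part is the node bound: everything rests on the arity being merely polynomial while the admissible depth is single exponential, so that the number of nodes is ``polynomial raised to a single-exponential power'', exactly one exponential level above \SHOQ{} or simple-CoLP reasoning. This gap is forced by the large redundancy threshold $k$, which is in turn dictated by the minimality requirement on open answer sets, explaining why classical subset blocking no longer suffices and the complexity climbs from single to double exponential.
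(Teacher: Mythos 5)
Your bound on the size of a single completion structure is essentially the paper's first step: depth at most $(k+1)\cdot 2^{p}$ (single exponential), arity at most $rank(P)$ (polynomial), hence at most a double exponential number of nodes and arcs, and the per-operation costs (maintaining $\ct$, $\st$ and $G$, the $conn_G$ tests for blocking, the positive-cycle check) are polynomial in the double exponential size of $G$. All of that is sound and matches the paper's bound $b=(c+1)r^{2^{2n+n^2}-2^{2n}+2^{n+1}}$.

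The genuine gap is in your last paragraph, where you dispose of the non-determinism by asserting that each expansion rule offers boundedly many choices and that ``exploring them over a structure of double exponential size, with backtracking \dots stays within double exponential time.'' That does not follow: backtracking over a structure with $N$ expansion sites and even a constant number $c\geq 2$ of alternatives per site costs on the order of $c^{N}$, and with $N$ double exponential this is \emph{triple} exponential. (Moreover the number of alternatives per site is not bounded: saturating a single node admits exponentially many realizations, e.g.\ because the expand-unary-negative rule must refute exponentially many groundings.) The paper closes exactly this gap by determinizing through an AND/OR extended forest: OR nodes carry an unexpanded content, their AND children are its exponentially many saturated ``realizations,'' so sibling subtrees are justified independently and choices multiply only along a root-to-leaf path. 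Since that AND/OR forest has single-exponential depth (twice the depth of a completion structure) and single-exponential branching, its total size is a single-exponential quantity raised to a single-exponential power, i.e.\ double exponential, and constructing and evaluating it---together with the double exponentially many dependency graphs it spawns, each of double exponential size---remains double exponential. Without this decomposition, or an equivalent argument that sibling choices do not compound, your accounting only shows that one non-deterministic run is double exponential, not that the deterministic algorithm is.
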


\begin{proof}[Proof sketch]
That the algorithm takes in the worst case at least double exponential time can be seen from the fact that an extended forest in a completion structure has in the worst case a double exponential number of nodes in the size of the program: there are maximum $k+1$ nodes with equal content on any branch of a tree in the completion, where $k=2^n(2^{n^2}-1)+2$, and $n=|\upreds{P}|$, there  are $2^n$ different possible configurations for the content of a unary node, the number of trees in the extended forest is bounded by $|cts(P)|+1$, and the arity of any such tree is bounded by $r=rank(P)$; thus the bound on the number of nodes is $b=(c+1)r^{2^{2n+n^2}-2^{2n}+2^{n+1}}$, which is double exponential in the size of $P$.

We consider the transformation of the algorithm to a deterministic procedure. One can see the deterministic procedure as constructing an AND/OR extended forest with depth double in the size of the largest depth encountered when running the nondeterministic algorithm. At odd levels, there are OR nodes with unexpanded content (they contain just the constraints imposed by their predecessor or the predicate checked to be satisfiable in case of one root node and an empty set for the other root nodes), while at even levels, there are AND saturated nodes which are `realizations' of their predecessor, i.e., they (together with their outgoing arcs and direct successors) describe a possible way to saturate the predecessor node. For every OR node, each of its `realizations' spawns a new copy of the graph $G$. A leaf of the AND/OR extended forest is labeled with \emph{false} if it is a redundant node and with \emph{true} otherwise. A predicate $p$ is satisfiable in such a structure if the root node of every tree in the structure evaluates to \emph{true}.

First of all, we notice that it takes polynomial time to justify the presence of a unary predicate in the content of a node and the presence of a (possibly negated) binary predicate in the content of an arc. Justifying the presence of a negated unary predicate in the content of a node takes exponential time (all groundings of certain unary rules have to be considered, and, in general, there is an exponential number of such groundings). As such, justifying the content of a node takes exponential time, while justifying the content of an arc takes polynomial time.

We count how many ways there are to saturate the content of a node: in the worst case there is an exponential number of choices for justifying the presence of a (possibly negated) unary predicate in the content of a node, a polynomial number of choices to justify the presence of a (possibly negated) binary predicate in the content of a node, and an exponential number of choices regarding the possible content of a node/arc. As such, in the worst case there is an exponential number of choices to saturate a node, thus an exponential number of successors to an OR node, and the maximum branching factor of the AND/OR extended forest is exponential in the size of $P$. The maximum depth is also exponential in the size of $P$ as it is double of the maximum depth of a complete completion structure which is $2^{2n}(2^{n^2}-1)+2^{n+1}$, where $n$ is as above. Thus, the AND/OR extended forest has in the worst case a double exponential number of nodes and arcs and justifying the content of each of these nodes and arcs can be done in exponential time.

There will also be a double exponential number of dependency graphs generated (as an exponential number of them is spawned at each OR node), and each of them has double exponential size (the number of atoms in an open answer set is bounded by $(b-1)m +bn$, where $m=|bpreds(P)|$, and $b$ and $n$ are as above. Checking for the existence of certain paths in such a graph (necessarily for the blocking condition) can be done again in double exponential time.  As such the construction of the AND/OR extended forest and of the dependency graphs can be done in double exponential time. The evaluation of the AND/OR extended forest can be done in double exponential time in the size of $P$, and thus the deterministic procedure, and implicitly our algorithm, runs in the worst case in double exponential time.

\end{proof}

Note that such a high complexity is expected when dealing with tableau-like algorithms. For example in Description Logics, although satisfiability checking in \SHIQ{} is \exptime-complete, practical algorithms  run in non-deterministic double exponential time \cite{tobies}.

\begin{proposition}\label{prop:finitemodelproperty}
FoLPs have the bounded finite model property: if there is an open answer set, there is an open answer set with a universe that is bounded by a number of elements which can be specified in function of the program at hand.
\end{proposition}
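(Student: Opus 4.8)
The plan is to obtain the bound as a direct corollary of the soundness and completeness of the algorithm (Propositions~\ref{prop:soundness} and~\ref{prop:completeness}) together with the size bound on complete completion structures already established inside the proof of Proposition~\ref{prop:complexity}. The key observation is that the open answer set produced by the soundness construction has as its universe exactly the node set of the extended forest of the completion structure: that construction only adds backward arcs from blocked nodes to successors of their blocking partners and never introduces fresh nodes. Hence, if one can exhibit a complete clash-free completion structure whose extended forest is small, the extracted open answer set is automatically small, and the whole statement reduces to bounding the number of nodes of a complete completion structure.

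Concretely, I would first reduce the existence of an open answer set to the satisfiability of a single predicate. If $(U,M)$ is an open answer set whose interpretation $M$ contains a unary atom $p(x)$, then $p \in \upreds{P}$ is satisfiable w.r.t. $P$, and this is the principal case; the degenerate cases where $M$ is empty or contains only binary atoms are handled separately (taking $\cts{P}$, extended by a single anonymous element if needed, or running the symmetric argument for a satisfiable binary predicate), and contribute a universe no larger than the bound below. Given that $p$ is satisfiable, I would apply Proposition~\ref{prop:completeness} to obtain a complete clash-free completion structure $\CS = \langle \EF, \ct, \st, G\rangle$ for $p$ w.r.t. $P$. By the redundancy applicability rule, no branch of $\EF$ carries more than $k+1$ saturated nodes of equal content, with $k$ as in that rule; since the arity of every tree is bounded by $\mathit{rank}(P)$ and the number of trees by $\card{\cts{P}}+1$, the counting of the proof of Proposition~\ref{prop:complexity} bounds the number of nodes of $\EF$ by $b=(c+1)\,r^{\,2^{2n+n^2}-2^{2n}+2^{n+1}}$, where $c=\card{\cts{P}}$, $r=\mathit{rank}(P)$ and $n=\card{\upreds{P}}$. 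Finally, Proposition~\ref{prop:soundness} turns $\CS$ into an open answer set $(U',M')$ satisfying $p$ with $U' = N_{\EF}$, so $\card{U'} \le b$; since $b$ depends only on $P$, this is the desired bound.

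The main obstacle is not the size counting, which is reused verbatim from Proposition~\ref{prop:complexity}, but making precise that the soundness construction preserves the universe size: one must argue that collapsing blocked nodes back onto their blocking partners (rather than unravelling into an infinite tree, as in the illustration of Section~\ref{subsec:illustration}) keeps $U'=N_{\EF}$ finite and that the resulting graph-shaped interpretation is still a genuine, subset-minimal open answer set. This is exactly what the soundness proof delivers, the crucial ingredient being the second blocking condition $connpr_G(y,x)=\emptyset$, which forbids the circular or infinite justifications that folding could otherwise introduce. A secondary subtlety is the reduction of plain existence of an open answer set to predicate satisfiability; the empty and purely binary interpretations require the small separate observations indicated above, but none of them exceeds $b$, so the bounded finite model property holds with bound $b$.
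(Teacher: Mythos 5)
Your proof takes essentially the same route as the paper: the property is obtained as a corollary of soundness and completeness, with the universe of the extracted open answer set being exactly the node set $N_{\EF}$ of a complete clash-free completion structure, whose size is bounded by the constant $b$ already computed in the complexity analysis. Your additional care about the degenerate cases (empty or purely binary interpretations) and about why the graph-shaped folding in the soundness construction keeps the universe finite is a welcome refinement of detail, but it does not change the argument.
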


\begin{proof}[Proof sketch]
The property follows as a corollary of the soundness and completeness
results. The completeness proof shows that from an open answer set one
can construct a clash-free complete completion structure with maximum
$b$ nodes, where $b$ is defined as in the proof for the complexity
result. At the same time, the soundness result shows that any
clash-free complete structure gives rise to an open answer set whose
universe is exactly the set of nodes of the completion. Thus, any open
answer set can be reduced to an open answer set with a bounded-size
universe.
\end{proof}

Note that the bounded finite model property opens the way also for
standard Answer Set Programming reasoning. Let $P$ be a FoLP.
We define the program $P_{k}$ to be a new program
obtained from $P$ by addition of a constraint $$\gets \naf p(x_1),
\ldots, \naf p(x_k),\naf p(c_1),\ldots, p(c_m)\ ,$$
where $k$ is a natural number, $1 \leq k \leq b-|\cts{P}|$,
$x_1$, $\ldots$, $x_k$ are some newly introduced individuals, and
$\cts{P} = \{c_1,\ldots, c_m\}$. To check whether $p$ is satisfiable w.r.t. $P$ one can simply check answer set existence for the programs $P$, $P_1$, $\ldots,$ $P_{b-|\cts{P}|}$.
Once an answer set is found for one of these programs it can be
concluded that $p$ is satisfiable and the procedure is curtailed. If
no answer set is found, then $p$ is not satisfiable. As $b$ is double exponential in the size of $P$, ${b-|\cts{P}|}$ is also double exponential in the size of $P$.  It follows that constructing the programs $P_1$, $\ldots, P_{b-|\cts{P}|}$ starting from $P$ is also double exponential in the size of $P$ (one has to add to $P$ in each case a new rule with $1, 2, \ldots$, $b-|\cts{P}|$ atoms). Checking the existence of answer sets of $P$, $P_1$, $\ldots P_{b-|\cts{P}|}$, involves a double exponential number of calls to an oracle which checks the existence of answer sets for a non-ground program with bounded predicate arities. According to \cite{eiter+faber+fink+woltran-ComplexResASPBoundedArities:07}
checking answer set existence for a non-ground program with bounded
predicate arities is in $\np^\np (=\Sigma_2^p)$. Thus, such an algorithm runs in the worst case in double exponential time with an oracle in $\Sigma_2^p$. As this is worse than the run-time of our
algorithm (double exponential time, Proposition
\ref{prop:complexity}), we indeed have an indication that our tableaux
algorithm is more efficient than naively using the bounded finite
model property and finite Answer Set Programming.

\section{F-hybrid Knowledge Bases}
\label{sec:fhybrid}

In this section, we introduce \emph{f-hybrid} knowledge bases, a formalism that combines knowledge bases expressed in the Description Logic \SHOQ{} with forest logic programs.

\emph{Description logics (DLs)} are a family of logical formalisms based on frame-based systems \cite{Minsky85} and useful for knowledge  representation. Its basic language features include the notions of {\em concepts}\index{concept} and {\em roles}\index{role} which are used to define the relevant concepts and relations in some (application) domain.  Different DLs can then be identified, among others, by the set of constructors that are allowed to form complex concepts or roles; see, for example, the 2 left-most columns of Table \ref{table1}, that define the constructs in \SHOQ{} \cite{horrocks-shoq}.

The semantics of DLs is given by interpretations $\Int = (\DeltaI, \cdot^{\Int})$ where $\DeltaI$ is a non-empty domain and $\cdot^{\Int}$ is an interpretation function. We summarize the constructs of \SHOQ{} with their interpretation in Table
\ref{table1}.

\begin{table}
\caption{\label{table1}Syntax and Semantics of \SHOQ{} Constructs}
{\small
\begin{center}
\begin{tabular}{|l |c| c|}
\hline
construct name & syntax & semantics\\
\hline
\rule{0em}{1.1em}atomic concept $\C$ 	
			& $A$ 		& $A^{\Int} \subseteq \DeltaI$\\
role  	& $R$ 		& $R^{\Int} \subseteq \DeltaI \times \DeltaI$\\
nominals $\Ind$ 	& $\{o\}$ 	& $\{o^{\Int}\} \subseteq \DeltaI$,\\
\hline
\rule{0em}{1.1em}concept conj.
		        & $C \sqcap D$ 	& $(C \sqcap D)^{\Int} = C^{\Int} \cap D^{\Int}$\\
concept	disj. & $C \sqcup D$ 	& $(C \sqcup D)^{\Int} = C^{\Int} \cup D^{\Int}$\\
negation 		& $\lnot C$ 	& $(\lnot C)^{\Int} = \DeltaI \setminus C^{\Int}$\\
exists restriction 	& $\exists R.C$	& $(\exists R.C)^{\Int} = \{x ~\vert~ \exists y: (x,y) \in R^{\Int} \textrm{~and~} y \in C^{\Int}\}$\\
value restriction 	& $\forall R.C$	& $(\forall R.C)^{\Int} = \{x ~\vert~ \forall y: (x,y) \in R^{\Int} \Rightarrow y \in C^{\Int}\}$\\
atleast restriction 	& $\geq n S.C$	& $(\geq n S.C)^{\Int} = \{x ~\vert~ \#\{y  ~\vert~ (x,y) \in S^{\Int} \mbox{ and } y \in C^{\Int}\} \geq n\}$\\
atmost restriction 	& $\leq n S.C$	& $(\leq n S.C)^{\Int} = \{x ~\vert~ \#\{y  ~\vert~ (x,y) \in S^{\Int} \mbox{ and } y \in C^{\Int}\} \leq n\}$\\
\hline
\end{tabular}
\end{center}
}
\end{table}
A \SHOQ{} \emph{knowledge base} is a set of {\em terminological axioms} $C \sqs D$ with $C$ and $D$ \SHOQ-concept expressions, {\em role axioms} $R \sqs S$ with $R$ and $S$ roles, and {\em transitivity axioms} $\trans(R)$ for a role name $R$. If the knowledge base contains an axiom $\trans(R)$, we call $R$ \emph{transitive}.  For the role axioms in a knowledge base, we define $\sqsast$ as the transitive closure of $\sqs$. A {\it simple role} $R$ in a knowledge base is a role that is not transitive nor does it have any transitive subroles (w.r.t. to reflexive transitive closure $\sqsast$\index{\sqsast} of $\sqs$). Terminological and role axioms express a subset relation: an interpretation $\Int$ \emph{satisfies} an axiom $C_1 \sqs C_2$ ($R_1 \sqs R_2$) if $C_1^{\Int} \subseteq C_2^{\Int}$ ($R_1^{\Int} \subseteq R_2^{\Int}$). An interpretation satisfies a transitivity axiom $\trans{(R)}$ if $R^{\Int}$ is a transitive relation. An interpretation is a \emph{model} of a knowledge base $\Sigma$ if it satisfies every axiom in $\Sigma$.  A concept $C$ is \emph{satisfiable} w.r.t. $\Sigma$ if there is a model $\Int$ of $\Sigma$ such that $C^{\Int} \neq \emptyset$. In order to avoid undecidability of satisfiability checking, \emph{number restrictions} (at most and at least) are always such that the role $R$ in, e.g., $\geq n R.C$, is  (see, e.g., \cite{horrocks99practical}).

We will assume the \textit{unique name assumption}\index{unique name assumption} by imposing that $o^{\Int} =o$ for individuals $o\in \mathbf{I}$.  Note that individuals are thus assumed to be part of any domain $\DeltaI$. Note that OWL does not have the unique name assumption \cite{owlguide},
and thus different individuals can point to the same resource. However, the open answer set semantics gives a Herbrand interpretation to constants, i.e., constants are interpreted as themselves, and for consistency we assume that also DL nominals are interpreted this way.

\begin{example}\label{ex:shoqknowb}
Consider the following \SHOQ{} knowledge base $\Sigma$:
\begin{knowb}
\taxiom{Father}{\exists child.Human \sqcap \neg Female}
\taxiom{\{john\}}{(\leq 2 child.  Human)}
\end{knowb}
Intuitively, the first terminological axiom says that fathers have a human child and are not female. The second axiom says that \lit{john} has less than 2 human children.
\end{example}

\begin{definition}\label{def:f-hybrid}
  An \emph{f-hybrid knowledge base} is a pair $\langle \Sigma,P\rangle$ where $\Sigma$ is a \SHOQ{} knowledge base and $P$ is a FoLP.
\end{definition}

Atoms and literals in $P$ might have as the underlying predicate an atomic concept or role name from $\Sigma$, in which case they are called \emph{DL atoms} and \emph{DL literals} respectively. Additionally, there might be other predicate symbols available, but without loss of generality we assume they cannot coincide with complex concept or role descriptions. Note that we do not impose Datalog safeness or \emph{(weakly) DL safeness} \cite{Motik+Rosati-Reconciling10,rosati-jws,rosati-rr2008,rosati-kr2006} for the rule component.  Intuitively, the restricted shape of FoLPs suffices to guarantee decidability; FoLPs are in general neither Datalog safe nor weakly DL-safe; we will discuss the relation with weakly DL-safeness in detail in Section \ref{sec:discussion}.

\begin{example}\label{ex:fhybrid}
An f-hybrid knowledge base $\langle \Sigma, P\rangle$, with $\Sigma$ as in Example \ref{ex:shoqknowb} and $P$, the FoLP,
\begin{program}
\tsrule{unhappy(X)}{\naf{Father(X)}}
\end{program} indicates that persons that are not fathers are unhappy, where $\lit{Father(X)}$ is a DL literal.
\end{example}

Similarly as in \cite{heymans-tplp2008}, we define, given a DL interpretation $\Int=(\DeltaI,\cdot^{\Int})$ and a ground program $P$, the \emph{projection} $\Pi(P,\Int)$ of $P$ with respect to $\Int$, as follows: for every rule $r$ in $P$,

\begin{itemize}
\item if there exists a DL literal in the head of the form
    \begin{itemize}
    \item $A(t_1,\ldots,t_n)$ with $(t_1,\ldots,t_n)\in A^{\Int}$, or
    \item $\naf{A(t_1,\ldots,t_n)}$ with $(t_1,\ldots,t_n)\not\in A^{\Int}$,
    \end{itemize}
then delete $r$,

\item if there exists a DL literal in the body of the form
    \begin{itemize}
    \item $A(t_1,\ldots,t_n)$ with $(t_1,\ldots,t_n)\not\in A^{\Int}$, or
    \item $\naf{A(t_1,\ldots,t_n)}$ with $(t_1,\ldots,t_n)\in A^{\Int}$,
    \end{itemize}
then delete $r$,

\item otherwise, delete all DL literals from $r$.
\end{itemize}
{
Intuitively, the projection ``evaluates'' the program with respect to $\Int$ by removing (evaluating) rules and DL literals consistently with $\Int$;
conceptually this is similar to the GL-reduct, which removes rules and negative literals consistently with an interpretation of the program.
}
\begin{definition}\label{def:semanticsfhybrid}
  Let $\langle \Sigma,P \rangle$ be an f-hybrid knowledge base. An \emph{interpretation} of $\langle\Sigma,P\rangle$ is a tuple $(U,\Int,M)$ such that
\begin{itemize}
\item $U$ is a universe for $P$,
\item $\Int=(U,\cdot^\Int)$ is an interpretation of $\Sigma$, and
\item $M$ is an interpretation of $\Pi(P_U,\Int)$.
\end{itemize}
Then, $(U,\Int,M)$ is a  \emph{model} of an f-hybrid knowledge base $\langle\Sigma,P\rangle$ if $\Int$ is a model of $\Sigma $ and $M$ is an answer set of $\Pi(P_U,\Int)$.
\end{definition}

The semantics of an f-hybrid knowledge base $\langle \Sigma, P\rangle$ is such that if $\Sigma=\emptyset$, a model of $\langle \Sigma,P\rangle$ corresponds to an open answer set of $P$, and if $P=\emptyset$, a model of $\langle \Sigma,P\rangle$ corresponds to a DL model of $\Sigma$.  In this way, the semantics of f-hybrid knowledge bases is nicely layered on top of both the DL semantics and the open answer set semantics.

\begin{example}\label{ex:semantics}
For the f-hybrid knowledge base $\langle \Sigma, P \rangle $ in Example \ref{ex:fhybrid}, take a universe $U = \{\lit{john,x}\}$ and $\cdot^\Int$
defined such that $\lit{Father}^\Int = \{x\}$, $\lit{child}^\Int = \{(x,\lit{john})\}$, $\lit{Female}^\Int = \emptyset$, $\lit{Human}^\Int = U$, and $\lit{john}^\Int = \lit{john}$. It is easy to see that $\Int=(U,\cdot^\Int)$ is indeed a model of $\Sigma$.

We project the program $P$ taking into account $\Int$, such that $P_U$ is the program
\begin{program}
\tsrule{unhappy(x)}{\naf{Father(x)}}
\tsrule{unhappy(john)}{\naf{Father(john)}}
\end{program}
and since $x\in \lit{Father}^\Int$ and $\lit{john}\not\in \lit{Father}^\Int$, we have that $\Pi(P_U,\Int)$ is
\begin{program}
\tsrule{unhappy(john)}{}
\end{program}
such that $M=\{unhappy(john)\}$ is an answer set of $\Pi(P_U,\Int)$, and $(U,\Int,M)$ is a model of $\langle \Sigma, P \rangle $.
\end{example}

For $p$ a concept expression from $\Sigma$ or a predicate from $P$, we say that $p$ is \emph{satisfiable} w.r.t.~$(\Sigma,P)$ if there is a model
$(U,\Int,M)$ such that $p^{\Int}\neq \emptyset$ or $p(x_1,\ldots,x_n)\in M$ for some $x_1,\ldots,x_n$ in $U$, respectively.  Note that Definition \ref{def:semanticsfhybrid} is in general applicable to other DLs than \SHOQ{} as well as to other programs than FoLPs. Indeed, in \cite{heymans-tplp2008}, a similar definition was used for \dlrom{} and \emph{guarded programs}.

We can reduce satisfiability checking w.r.t. f-hybrid knowledge bases to satisfiability checking of FoLPs only. Roughly, for each concept expression one introduces a new predicate together with rules that define the semantics of the corresponding DL construct.  Constraints then encode the axioms, and the first-order interpretation of DL concept expressions is simulated using free rules.

Taking the knowledge base $\Sigma$ of Example \ref{ex:fhybrid}, $Father\sqsubseteq$ $\exists child.Human$ $\sqcap$ $\neg Female$ can be translated to the constraint \prule{}{Father(X),\naf{(\exists child.Human \sqcap \neg Female)(X)}}, where \lit{(\exists child.Human \sqcap \neg Female)} is a predicate defined by the rules
\[
\prule{(\exists child.Human \sqcap \neg Female)(X)}{(\exists child.
Human)(X),(\neg Female)(X)}
\]
i.e., a DL conjunction translates to a set of literals in the body. Further, we define an exists restriction and negation as follows:
\begin{program}
\tsrule{\exists child.  Human(X)}{child(X,Y),Human(Y)}
\tsrule{\neg Female(X)}{\naf{Female(X)}}
\end{program}
Finally, the first-order semantics of concepts and roles is obtained as follows:
\begin{program}
\tsrule{Father(X)\lor\naf{Father(X)}}{}
\tsrule{Female(X)\lor\naf{Female(X)}}{}
\tsrule{Human(X)\lor\naf{Human(X)}}{}
\tsrule{child(X,Y)\lor\naf{child(X,Y)}}{}
\end{program}

Similarly, the axiom
\axiom{\{john\}}{(\leq 2 child.  Human)}
is translated as the constraint
\[
\prule{}{\{john\}(X),\naf{(\leq 2 child.  Human)(X)}}
\] and
rules
\begin{program}
\tsrule{\{john\}(john)}{}
\tsrule{(\leq 2 child.  Human)(X)}{\naf{(\geq 3 child.  Human)(X)}}
\tsrule{(\geq 3 child.  Human)(X)}{child(X,Y_1),child(X,Y_2),child(X,Y_3),}
&& \mathit{Human(Y_1),Human(Y_2),Human(Y_3),}\\
&& \mathit{Y_1\neq Y_2,Y_1\neq Y_3,Y_2\neq Y_3}
\end{program}

Before proceeding with the formal translation, we define the \emph{closure} of a \SHOQ{} knowledge base $\Sigma$, \clos{\Sigma},  as the smallest set satisfying the following conditions:

\begin{itemize}
\item for each $C\sqs D$ an axiom in $\Sigma$ (role or terminological), $\set{C,D}\subseteq \clos{\Sigma}$,
\item for each $\trans(R)$ in $\Sigma$, $\set{R}\subseteq \clos{\Sigma}$,
\item for every $D$ in \clos{\Sigma}, we have
\begin{itemize}
\item if $D = \lnot D_1$, then $\set{D_1} \subseteq \clos{\Sigma}$,
\item if $D = D_1 \sqcup D_2$, then $\set{D_1,D_2} \subseteq \clos{\Sigma}$,
\item if $D = D_1 \sqcap D_2$, then $\set{D_1,D_2} \subseteq \clos{\Sigma}$,
\item if $D = \exists R.D_1$, then $\set{R,D_1}\cup \set{\exists S.D_1\mid S\sqsast R, S\neq R, \trans(S)\in \Sigma} \subseteq \clos{\Sigma}$,
\item if $D = \forall R.D_1$, then $\set{\exists R.\lnot D_1} \subseteq  \clos{\Sigma}$,
\item if $D = \numberrestless{n}{Q}{D_1}$, then $\set{\numberrestgreater{n+1}{Q}{D_1}} \subseteq \clos{\Sigma}$,
\item if $D = \numberrestgreater{n}{Q}{D_1}$, then $\set{Q,D_1} \subseteq \clos{\Sigma}$.
\end{itemize}
\end{itemize}
Concerning the addition of the extra $\exists S.D_1$ for $\exists R.D_1$ in the closure, note that $x\in(\exists R.D_1)^{\Int}$ holds if there is some $(x,y)\in R^{\Int}$ with $y\in D_1^{\Int}$, and, in particular, $S\sqsast R$ with $S$ transitive such that $(x,u_0)\in S^{\Int}, \ldots, (u_n,y)\in S^{\Int}$ with $y\in D_1^{\Int}$.  The latter amounts to $x\in (\exists S.D_1)^{\Int}$.  Thus, in the open answer set setting, we have that $\exists R.D_1(x)$ is in the open answer set if $R(x,y)$ and $D_1(y)$ hold or $\exists S.D_1(x)$ holds for some transitive subrole $S$ of $R$.  The predicate $\exists S.D_1$ will be defined by adding recursive rules, hence the inclusion of such predicates in the closure.
\par
Furthermore, for a $\numberrestless{n}{Q}{D_1}$ in the closure, we add $\set{\numberrestgreater{n+1}{Q}{D_1}}$, since we will base our definition of the former predicate on the DL equivalence $\numberrestless{n}{Q}{D_1}\equiv \neg \numberrestgreater{n+1}{Q}{D_1}$.
\par
Formally, we define $\Phi(\Sigma)$ to be the following FoLP, obtained from the \SHOQ{} knowledge base $\Sigma$:
\begin{itemize}
\item For each terminological axiom $C\sqs D\in \Sigma$, add the constraint
\begin{equation}\label{eq:axiom}
\prule{}{C(X),\naf{D(X)}}
\end{equation}
\item For each role axiom $R\sqs S\in \Sigma$, add the
constraint
\begin{equation}\label{eq:role}
\prule{}{R(X,Y),\naf{S(X,Y)}}
\end{equation}

\item Next, we distinguish between types of concept expressions that appear in \clos{\Sigma}.  For each $D\in \clos{\Sigma}$:
\begin{itemize}
\item if $D$ is a concept name, add
\begin{equation}\label{eq:freeconcept}
\prule{D(X)\lor \naf{D}(X)}{}
\end{equation}
\item if $D$ is a role name, add
\begin{equation}\label{eq:freerole}
\prule{D(X,Y)\lor \naf{D}(X,Y)}{}
\end{equation}
\item if $D=\{o\}$, add
\begin{equation}\label{eq:nominal}
\prule{D(o)}{}
\end{equation}
\item if $D=\neg E$, add
\begin{equation}\label{eq:neg}
\prule{D(X)}{\naf{E(X)}}
\end{equation}
\item if $D=E\sqcap F$, add
\begin{equation}\label{eq:conj}
\prule{D(X)}{E(X),F(X)}
\end{equation}
\item if $D=E\sqcup F$, add
\begin{equation}\label{eq:disj}
\begin{programxy}
\tsrule{D(X)}{E(X)}
\tsrule{D(X)}{F(X)}
\end{programxy}
\end{equation}
\item if $D=\exists Q.E$, add
\begin{equation}\label{eq:exists}
\prule{D(X)}{Q(X,Y),E(Y)}
\end{equation}
and for all $S\sqsast Q$, $S\neq Q$, with $\trans(S)\in \Sigma$, add rules
\begin{equation}\label{eq:transexists}
\prule{D(X)}{(\exists S.E)(X)}
\end{equation}
If $\trans(Q)\in\Sigma$, we further add the rule
\begin{equation}\label{eq:transexists2}
\prule{D(X)}{Q(X,Y),D(Y)}
\end{equation}
\item if $D=\forall R.E$, add
\begin{equation}\label{eq:forall}
\prule{D(X)}{\naf{(\exists R.\neg E)(X)}}
\end{equation}
\item if $D= \numberrestless{n}{Q}{E}$, add
\begin{equation}\label{eq:numberless}
\prule{D(X)}{\naf{\numberrestgreater{n+1}{Q}{E}(X)}}
\end{equation}
\item if $D=\numberrestgreater{n}{Q}{E}$, add
\begin{equation}\label{eq:numbergreater}
\psrule{D(X)}{Q(X,Y_1),\ldots, Q(X,Y_n), E(Y_1),\ldots, E(Y_n),({Y_i\neq
Y_j})_{1\leq i\neq j\leq n}}
\end{equation}
\end{itemize}
\end{itemize}

Rule $(\ref{eq:exists})$ is what one would intuitively expect for the exists restriction.  However, in case $Q$ is transitive this rule is not enough.  Indeed, if $Q(x,y)$, $Q(y,z)$, $E(z)$ are in an open answer set, one expects $(\exists Q.E)(x)$ to be in it as well if $Q$ is transitive.  However, we have no rules enforcing $Q(x,z)$ to be in the open answer set without violating the FoLP restrictions. We can solve this by adding to $(\ref{eq:exists})$ the rule $(\ref{eq:transexists2})$, such that such a chain $Q(x,y)$, $Q(y,z)$, with $E(z)$ in the open answer set correctly deduces $D(x)$.
\par
It may still be that there are transitive subroles of $Q$ that need the same recursive treatment as above.  To this end, we introduce rule $(\ref{eq:transexists})$.
\par
We do not need such a trick with the number restrictions since the roles $Q$ in a number restriction are required to be simple, i.e., without transitive subroles.

\begin{proposition}\label{prop:folpispol}
Let $\langle\Sigma,P\rangle$ be a \SHOQ{} knowledge base.  Then, $\Phi(\Sigma)\cup P$ is a FoLP, and has a size that is polynomial in the size of $\Sigma$.
\end{proposition}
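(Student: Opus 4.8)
The plan is to treat the two assertions separately. Because a program is a FoLP exactly when each of its rules is a free rule, a unary rule, a binary rule, or a constraint (Definition~\ref{def:FoLP}), and because these conditions are imposed rule-by-rule, the union $\Phi(\Sigma)\cup P$ is a FoLP as soon as every rule of $\Phi(\Sigma)$ is legal, $P$ being a FoLP by hypothesis. The only predicates shared by the two components are the atomic concept and role names; since concept predicates are used only as unary and role predicates only as binary (in both $\Phi(\Sigma)$ and the DL atoms of $P$), no arity conflict arises, and merely sharing a predicate never invalidates a per-rule condition. So the first half reduces to a schema-by-schema inspection of (\ref{eq:axiom})--(\ref{eq:numbergreater}).

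I would carry out that inspection as follows. The rules (\ref{eq:freeconcept}) and (\ref{eq:freerole}) are free rules by construction. The ``constraints'' (\ref{eq:axiom}) and (\ref{eq:role}) carry more than one body literal, so they are not constraints in the strict sense of Definition~\ref{def:FoLP}; I would first replace each such constraint with body $B$ by the unary rule $c(X)\gets\naf{c(X)},B$ for a fresh predicate $c$, as sanctioned right after Definition~\ref{def:FoLP}. This turns (\ref{eq:axiom}) into a purely local unary rule ($k=0$) and (\ref{eq:role}) into a unary rule with one successor $Y$ and connecting set $\gamma_1=\{R,\naf{S}\}$, whose positive part $\{R\}$ is nonempty, so condition~2 holds. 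Among the defining rules, (\ref{eq:nominal}), (\ref{eq:neg}), (\ref{eq:conj}), (\ref{eq:disj}), (\ref{eq:transexists}), (\ref{eq:forall}) and (\ref{eq:numberless}) are unary rules with empty successor part, while (\ref{eq:exists}) and (\ref{eq:transexists2}) have a single successor reached by the positive atom $Q(X,Y)$; for (\ref{eq:numbergreater}) the head variable $X$ is joined to each of the $n$ pairwise-distinct successors $Y_m$ by the positive atom $Q(X,Y_m)$, the inequalities sit in $\psi$, and no $\gamma_m$ contains $=$ or $\neq$. Hence conditions~1 and~2 of Definition~\ref{def:FoLP} hold throughout, all variable terms in each rule are distinct, and every rule is legal.

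For the size bound the key quantity is $|\clos{\Sigma}|$. I would first check that $\clos{\Sigma}$ is finite and polynomial: every concept it contains is a subconcept of a concept of $\Sigma$, or the negation of such a subconcept, or an existential $\exists S.E$, or an at-least concept $\numberrestgreater{m+1}{Q}{E}$, where in each case $E$ is itself a subconcept or the negation of one; only finitely -- indeed polynomially -- many such concepts exist. The one genuine source of blow-up is the existential clause, which for $\exists R.D_1$ adds $\exists S.D_1$ for every transitive subrole $S\sqsast R$; but for a fixed $D_1$ these are at most $|\Sigma|$ in number, so $|\clos{\Sigma}|$ stays within a quadratic bound. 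Each closure element then spawns a constant number of rules, except $\exists Q.E$, which spawns at most $|\Sigma|$ instances of (\ref{eq:transexists}), and each axiom spawns one rewritten constraint.

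The step I expect to be the real obstacle is bounding the \emph{size} of the at-least rule (\ref{eq:numbergreater}): it literally lists $n$ atoms $Q(X,Y_i)$, $n$ atoms $E(Y_i)$ and the $\binom{n}{2}$ inequalities $Y_i\neq Y_j$, so that single rule has size $\Theta(n^2)$ in the numerical parameter $n$. The translation is therefore polynomial only under the standard convention that the parameters of number restrictions are written in unary, i.e.\ that $n$ contributes $n$ (not $\log n$) to $|\Sigma|$; I would state this convention explicitly. Under it, each generated rule has size polynomial in the size contribution of the concept that produced it, and summing polynomially many rules of polynomial size over the polynomially many closure elements gives $|\Phi(\Sigma)|$ polynomial in $|\Sigma|$. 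Since $P$ is left untouched, $|\Phi(\Sigma)\cup P|=|\Phi(\Sigma)|+|P|$ is polynomial as well, completing the argument.
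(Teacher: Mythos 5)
Your proposal is correct and follows essentially the same route as the paper's own proof: the FoLP claim is established by inspecting the rule schemas of $\Phi(\Sigma)$, and the size claim by bounding $|\clos{\Sigma}|$ polynomially and then invoking the unary encoding of the numerical parameter to tame the $\frac{n(n-1)}{2}$ inequalities of rule~(\ref{eq:numbergreater}). You are merely more explicit than the paper (whose first half is a one-line ``it is clear''), in particular in observing that the constraints (\ref{eq:axiom}) and (\ref{eq:role}) do not literally match the constraint shape of Definition~\ref{def:FoLP} and must be recast as unary rules via the standard $\prule{constr(X)}{\naf{constr(X)},body}$ trick --- a worthwhile detail, but not a different argument.
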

\begin{proof}
Observing the rules in $\Phi(\Sigma)$, it is clear that this program is a FoLP.
\par
The size of the elements in $\clos{\Sigma}$ is linear and the size of $\clos{\Sigma}$ itself is polynomial in $\Sigma$. The size of the FoLP
$\Phi(\Sigma)$ is polynomial in the size of $\clos{\Sigma}$.  The only non-trivial case in showing the latter arises by the addition of rule $(\ref{eq:numbergreater})$ which introduces $\frac{n(n-1)}{2}$ inequalities for a number restriction $\numberrestgreater{n}{Q}{E}$. We assume, as is not uncommon in DLs (see, e.g., \cite{tobies}), that the number $n$ is represented in unary notation
\[
\underbrace{11\ldots1}_n
\]
such that the number of introduced inequalities is quadratic in the size of the number restriction.
\end{proof}
\begin{proposition}\label{prop:translation}
Let $\langle \Sigma,P\rangle$ be an f-hybrid knowledge base.  Then, a predicate $p$ is satisfiable w.r.t.~$(\Sigma,P)$ iff $p$ is satisfiable w.r.t.~$\Phi(\Sigma)\cup P$.
\end{proposition}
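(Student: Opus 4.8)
The plan is to prove both directions of the equivalence by exhibiting a tight correspondence between models $(U,\Int,M)$ of the f-hybrid knowledge base $\langle\Sigma,P\rangle$ and open answer sets $(U,N)$ of the FoLP $\Phi(\Sigma)\cup P$ (which is a FoLP by Proposition \ref{prop:folpispol}), over a common universe $U$. The pivotal technical ingredient is a \emph{correspondence lemma}: for suitably paired $\Int$ and $N$, and for every expression $D\in\clos{\Sigma}$, one has $\bar x\in D^{\Int}$ iff $D(\bar x)\in N$. Once this lemma is available, both satisfiability directions follow by matching the witness, since $p^{\Int}\neq\emptyset$ (or $p(\bar x)\in M$) then translates to $p(\bar x)\in N$ and back.

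For the \emph{only if} direction I would start from a model $(U,\Int,M)$ of $\langle\Sigma,P\rangle$ and build $N$ as $M$ together with all DL atoms dictated by $\Int$: for every concept name $A$ and role name $R$ in $\clos{\Sigma}$ put $A(x)\in N$ iff $x\in A^{\Int}$ and $R(x,y)\in N$ iff $(x,y)\in R^{\Int}$, and for every complex $D\in\clos{\Sigma}$ put $D(\bar x)\in N$ iff $\bar x\in D^{\Int}$. I would then verify that $N$ is an answer set of $\ground{(\Phi(\Sigma)\cup P)}{U}$. Satisfaction of the GL-reduct is checked rule by rule: the free rules $(\ref{eq:freeconcept})$, $(\ref{eq:freerole})$ hold because we fix the name predicates to mirror $\Int$ (in the reduct they become facts exactly for the name atoms in $N$); the defining rules $(\ref{eq:neg})$--$(\ref{eq:numbergreater})$ hold because the correspondence lemma, proved by induction on the structure of $D$ using the semantics in Table~\ref{table1}, makes heads and bodies agree with $\Int$; the constraints $(\ref{eq:axiom})$, $(\ref{eq:role})$ hold because $\Int\models\Sigma$; and the rules and literals of $P$ behave exactly as in $\Pi(P_U,\Int)$ since the DL atoms of $N$ replicate the projection. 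Minimality of $N$ on the reduct reduces to minimality of $M$ on $\Pi(P_U,\Int)$ together with the observation that, once the name predicates are fixed by the facts of the reduct, the complex-concept predicates are forced by the defining rules to be precisely their $\Int$-extension.

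For the \emph{if} direction I would start from an open answer set $(U,N)$ of $\Phi(\Sigma)\cup P$ satisfying $p$ and define $\Int$ by $A^{\Int}=\{x\mid A(x)\in N\}$ for concept names, and, for role names, $R^{\Int}=\{(x,y)\mid R(x,y)\in N\}$ when $R$ is not transitive and $R^{\Int}$ equal to the transitive closure of that set when $\trans(R)\in\Sigma$. Let $M$ be the restriction of $N$ to the non-DL atoms. I would then re-establish the correspondence lemma for this $\Int$ and conclude that $\Int\models\Sigma$ (using that $N$ satisfies the constraints $(\ref{eq:axiom})$, $(\ref{eq:role})$) and that $M$ is an answer set of $\Pi(P_U,\Int)$.

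The main obstacle is the interaction between transitive roles and the existential restriction. Because a direct encoding of role transitivity would violate the FoLP restrictions, the translation instead propagates $\exists R.D_1$ along transitive (sub)roles via the recursive rules $(\ref{eq:exists})$, $(\ref{eq:transexists})$, and $(\ref{eq:transexists2})$. Consequently, in the \emph{if} direction the set $\{(x,y)\mid R(x,y)\in N\}$ need not be transitive, so I must pass to its transitive closure to obtain a genuine \SHOQ{} model, and then argue that the existential case of the correspondence lemma survives: whenever $x\in(\exists R.D_1)^{\Int}$ holds only because of a chain of $R$-edges in the closure, the rules $(\ref{eq:transexists2})$ and $(\ref{eq:transexists})$ guarantee $(\exists R.D_1)(x)\in N$, while minimality of the answer set ensures $(\exists R.D_1)(x)$ is never derived spuriously. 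This is exactly where the closure conditions that add $\exists S.D_1$ for transitive subroles $S$ of $R$ are needed, and where the argument is least routine; the remaining inductive cases are direct consequences of the match between each DL constructor in Table~\ref{table1} and its defining rule, the number restrictions causing no difficulty since their roles are required to be simple.
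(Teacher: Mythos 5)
Your overall architecture is the same as the paper's: transfer $M$ and $\Int$ into a single open interpretation $N$ of $\Phi(\Sigma)\cup P$ and back, with a structural-induction correspondence between $D^{\Int}$ and the atoms $D(\bar{x})\in N$ for $D\in\clos{\Sigma}$. The paper leaves the verification implicit (``it is easy to verify''), and in the ($\Leftarrow$) direction it even defines $R^{\Int}$ as the bare set $\set{(x,y)\mid R(x,y)\in N}$; you are right that this set need not be transitive and that some closure operation is needed, and your discussion of why rules $(\ref{eq:transexists2})$ and $(\ref{eq:transexists})$ keep the existential case of the correspondence lemma alive under that closure is exactly the non-routine point.

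However, your fix --- transitively closing only the transitive roles --- leaves a step that fails as stated, namely the claim that $\Int\models\Sigma$ ``using that $N$ satisfies the constraints.'' The constraint $(\ref{eq:role})$ for a role axiom $R\sqs S$ only yields $\set{(x,y)\mid R(x,y)\in N}\subseteq\set{(x,y)\mid S(x,y)\in N}$. If $\trans(R)\in\Sigma$ but $S$ is not transitive, your definition makes $R^{\Int}$ the transitive closure of the $R$-atoms while $S^{\Int}$ consists of just the $S$-atoms, so $R^{\Int}\subseteq S^{\Int}$ can fail (take $R(x,y),R(y,z)\in N$ with $S(x,z)\notin N$). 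The standard repair is to define role extensions inductively up the role hierarchy, letting $S^{\Int}$ contain, besides its own atoms, the already-closed extensions of every $R$ with $R\sqsast S$; the closure condition that places $\exists S'.D_1$ into $\clos{\Sigma}$ for transitive subroles $S'$ of $R$, together with rule $(\ref{eq:transexists})$, is precisely what keeps $x\in(\exists R.D_1)^{\Int}$ iff $(\exists R.D_1)(x)\in N$ true under this enlargement, so you should re-run your correspondence lemma against that definition. With this adjustment your argument goes through; everything else in the proposal matches, and indeed fleshes out, the paper's much terser proof.
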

\begin{proof} The proof goes along the lines of the proof in \cite[Theorem 1]{heymans-tplp2008}.

\noindent ($\Rightarrow$). Assume $p$ is satisfiable w.r.t.~$(\Sigma,P)$, i.e., there exists a model $(U,{\Int},M)$ of $(\Sigma,P)$ in which $p$ has a non-empty extension.  Now, we construct the open interpretation $(U,N)$ of $\Phi(\Sigma)\cup P$ as follows:
\[
N=M \cup \set{\pred{C}(x) \mid x \in C^{\Int}, C\in \clos{\Sigma}}
\cup \set{\pred{R}(x_1,x_2)\mid
(x_1,x_2) \in {R}^{\Int}, R\in \clos{\Sigma}}
\]
with $C$ and $R$ concept expressions and role names respectively.
\par
It is easy to verify that $(U,N)$ is an open answer set of $\Phi(\Sigma)\cup P$ and that $(U,N)$ satisfies $p$.


\noindent ($\Leftarrow$). Assume $(U,N)$ is an open answer set of $\Phi(\Sigma) \cup P$ such that $p$ is satisfied.  We define the interpretation
$(U,{\Int},M)$ of $(\Sigma, P)$ as follows:
\begin{itemize}

\item  $\Int=(U,\cdot^\Int )$ is defined such that $A^{\Int} =\set{x \mid \pred{A}(x) \in N}$ for concept names $A$, ${P}^{\Int} = \set{(x_1,x_2)
\mid \pred{{P}}(x_1,x_2) \in N}$ for role names ${P}$ and ${o}^{\Int} = o$, for $o$ a constant symbol in $\Sigma$. $\Int$ is then an interpretation of $\Sigma$.

\item $M = \setmin{N}{\set{p(x_1,\ldots,x_n)\mid p\in \clos{\Sigma}}}$, such that $M$ is an interpretation of $\Pi(P_U,\Int)$.
\end{itemize}
As a consequence, $(U,\Int,M)$ is an interpretation of $\langle \Sigma,P\rangle$ and it is easy to verify that $(U,\Int,M)$ is a model of $(\Sigma,P)$ which satisfies $p$.
\end{proof}

Note that Proposition \ref{prop:translation} also holds for satisfiability checking of concept expressions $C$: introduce a rule $p(X)\gets C(X)$ in $P$ and check satisfiability of $p$.

Using the translation from f-hybrid knowledge bases to  forest logic programs in Proposition \ref{prop:translation} and the polynomiality of this
translation (Proposition \ref{prop:folpispol}), together with the complexity of the terminating, sound, and complete algorithm for satisfiability checking w.r.t. FoLPs, we have the following result:

\begin{proposition}\label{prop:fhybridmember}
Satisfiability checking w.r.t. f-hybrid knowledge bases is in \xnexptime{2} in the size of the f-hybrid knowledge base.
\end{proposition}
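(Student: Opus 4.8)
The plan is to assemble the result from two ingredients already in place: the satisfiability-preserving, polynomial translation of f-hybrid knowledge bases into forest logic programs, and the complexity bound established for the FoLP satisfiability algorithm. First I would invoke Proposition~\ref{prop:translation}, which states that a predicate (or, via the remark following that proposition, a concept expression) $p$ is satisfiable w.r.t.~$(\Sigma,P)$ if and only if $p$ is satisfiable w.r.t.~the single forest logic program $\Phi(\Sigma)\cup P$. This collapses the hybrid reasoning task entirely to ordinary FoLP satisfiability checking, for which a sound, complete, and terminating algorithm is available.

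Second, I would bound the size blow-up of the reduction. By Proposition~\ref{prop:folpispol}, $\Phi(\Sigma)\cup P$ is indeed a FoLP and its size is polynomial in the size of $\Sigma$, hence polynomial in the size of the f-hybrid knowledge base $\langle\Sigma,P\rangle$. The one point where size must be watched is the encoding of number restrictions through rule~(\ref{eq:numbergreater}); this is kept polynomial by the unary-representation convention adopted in that proposition. Consequently, the instance handed to the FoLP algorithm is only polynomially larger than the original input.

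Third, I would apply Proposition~\ref{prop:complexity}: satisfiability w.r.t.~a FoLP $P'$ is decided in time double exponential in $|P'|$. Equivalently, one may guess a complete clash-free completion structure --- whose node bound $b$ is double exponential in $|P'|$ --- and verify completeness together with clash-freeness (non-contradiction, absence of redundant nodes, and absence of positive cycles in the dependency graph $G$) within that same bound. Since $|\Phi(\Sigma)\cup P|$ is polynomial in $|\langle\Sigma,P\rangle|$, substituting the polynomial size into the double-exponential running time leaves a bound that is still double exponential in $|\langle\Sigma,P\rangle|$: a polynomial composed inside a double exponential does not raise the level. As deterministic double-exponential time is contained in nondeterministic double-exponential time (and the guess-and-verify reading of the tableau algorithm works directly within the nondeterministic bound), the whole procedure lies in \xnexptime{2} in the size of the f-hybrid knowledge base, which is exactly the claim.

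Each step is routine given the earlier results, so the only genuine obstacle is the bookkeeping about how sizes propagate through the composition. Concretely, I expect the care to go into making explicit that the polynomial blow-up of Proposition~\ref{prop:folpispol} is absorbed by the double-exponential running time of Proposition~\ref{prop:complexity}, and into observing that the (deterministic) bound of that proposition may legitimately be read as membership in the nondeterministic class \xnexptime{2}; no new combinatorial argument is needed.
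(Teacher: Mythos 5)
Your proposal is correct and follows exactly the route the paper takes: the paper derives this proposition in a single sentence by combining the satisfiability-preserving translation (Proposition~\ref{prop:translation}), its polynomial size (Proposition~\ref{prop:folpispol}), and the double-exponential bound on the FoLP algorithm (Proposition~\ref{prop:complexity}). Your additional remarks about the polynomial blow-up being absorbed by the double exponential and about reading the (non)deterministic bound as membership in \xnexptime{2} merely make explicit what the paper leaves implicit.
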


As satisfiability checking of \ALC concepts w.r.t. an \ALC TBox  (note that \ALC is a fragment of \SHOQ) is \exptime-complete \cite[Chapter 3]{dlbook}, we have that satisfiability checking w.r.t. f-hybrid knowledge bases is \exptime-hard.

\begin{proposition}\label{prop:fhybridhard}
Satisfiability checking w.r.t. f-hybrid knowledge bases is \exptime-hard.
\end{proposition}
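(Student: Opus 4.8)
The plan is to give a polynomial-time reduction from \ALC concept satisfiability with respect to an \ALC TBox, which is \exptime-complete \cite[Chapter 3]{dlbook}, to concept satisfiability with respect to f-hybrid knowledge bases. The key observation is that \ALC is a syntactic fragment of \SHOQ, so every \ALC TBox $\Sigma$ is in particular a \SHOQ{} knowledge base, and hence $\langle\Sigma,\emptyset\rangle$ --- pairing $\Sigma$ with the empty FoLP --- is a well-formed f-hybrid knowledge base by Definition \ref{def:f-hybrid}.

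First I would set up the reduction: given an \ALC concept $C$ and an \ALC TBox $\Sigma$, map the instance $(C,\Sigma)$ to the f-hybrid knowledge base $\langle\Sigma,\emptyset\rangle$ together with the concept expression $C$. This map is clearly computable in polynomial (indeed linear) time, since it leaves $\Sigma$ untouched and adjoins an empty rule component.

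Second, I would verify correctness by appealing to the layering property of the f-hybrid semantics noted immediately after Definition \ref{def:semanticsfhybrid}: when the rule component is empty, a model $(U,\Int,M)$ of $\langle\Sigma,\emptyset\rangle$ reduces to a DL model $\Int$ of $\Sigma$ (with $M=\emptyset$), and conversely every DL model of $\Sigma$ over domain $\DeltaI$ yields such a model. Consequently, by the definition of concept satisfiability with respect to an f-hybrid knowledge base, $C$ is satisfiable w.r.t.~$\langle\Sigma,\emptyset\rangle$ exactly when there is a model $(U,\Int,M)$ with $C^{\Int}\neq\emptyset$, which holds iff $C$ is satisfiable w.r.t.~$\Sigma$ in the usual DL sense. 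The reduction is therefore both sound and faithful, and since \ALC concept satisfiability w.r.t. a TBox is \exptime-hard, so is satisfiability checking w.r.t. f-hybrid knowledge bases.

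There is essentially no deep technical obstacle here, as the hardness is inherited for free from the fact that f-hybrid knowledge bases subsume \SHOQ{} (hence \ALC) as the special case of an empty rule component. The only point requiring care is to confirm that the f-hybrid notion of concept satisfiability coincides with the DL one in this degenerate case --- in particular, that the unique name assumption and the requirement that the universe be a superset of the constants impose no extra constraints for a pure \ALC TBox, which contains neither nominals nor individuals. Since such a $\Sigma$ places no demands on constants, any DL model extends to a model of $\langle\Sigma,\emptyset\rangle$ on the same domain, so the equivalence is immediate.
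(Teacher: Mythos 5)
Your proposal is correct and follows essentially the same route as the paper, which justifies the proposition in a single sentence by noting that \ALC{} is a fragment of \SHOQ{} and that \ALC{} concept satisfiability w.r.t.\ a TBox is \exptime-complete, so hardness is inherited by taking the rule component to be empty. Your additional care in checking that the f-hybrid notion of satisfiability collapses to the DL one when $P=\emptyset$ (via the layering remark after Definition \ref{def:semanticsfhybrid}) is a welcome but unsurprising elaboration of the paper's implicit argument.
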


\section{Simple Forest Logic Programs}
\label{sec:simpleFoLPs}

\emph{Simple Conceptual Logic Programs (CoLPs)}, were defined in \cite{feier+heymans-SoundComplAlgSimpleCoLPs:08} as a fragment of \emph{Conceptual Logic Programs (CoLPs)} \cite{heymans-amai2006}. As mentioned in the introduction, simple Conceptual Logic Programs simplify Conceptual Logic Programs by introducing a restriction on predicate recursion in programs. Here we adopt a similar restriction on Forest Logic Programs, and we obtain a fragment which we call simple Forest Logic Programs (simple FoLPs). As we will see, our algorithm can be easily adapted such that it checks satisfiability w.r.t. simple FoLPs in exponential time, one exponential level lower than the time needed for FoLPs.


Some preliminaries are needed for introducing this fragment. For such a FoLP $P$, let $D(P)$ be the \emph{marked positive predicate dependency graph}: $D(P)$ is a directed graph that has as vertices the non-free predicates from $P$ and as arcs tuples $(p,q)$ if there is either a rule of the form (\ref{eq:unary}) or a rule of the form (\ref{eq:binary}) with a head literal $l_1$ and a positive body literal $l_2$ such that $pred(l_1)=p$, and $pred(l_2)=q$. An edge $(p,q)$ is called \emph{marked}, if $q$ is a predicate in some $\delta_m$ for rules (\ref{eq:unary}), respectively $\delta$ for rules (\ref{eq:binary}). In order for $P$ to be a simple FoLP, $D(P)$ must not contain any cycle that has a marked edge.


The restriction on $D(P)$ ensures that there is no path from some atom $p(x)$ to some atom $p(y)$ in the atom dependency graph of $P_U$ which does not contain some atom $q(z)$, such that $q$ is free, where $p \in\upreds{P}$,  $q\in\preds{P}$, $U$ is some arbitrary universe, and $x,y \in U$, $x \neq y$.
Consider the program $P$:

\begin{programn}
\nrule{r_1}{p(X)}{q(X), f(X,Y),\naf{p(Y)}}
\nrule{r_2}{q(X)}{p(X)}
\nrule{r_3}{f(X,Y)}{g(X,Y), q(Y)}
\end{programn}

The marked positive dependency graph is depicted in Figure \ref{fig:depgra}. While $(p,q,p)$ is an unmarked cycle, $(q,p,f,q)$ is a marked cycle, and thus $P$ is not a simple FoLP. However, if the last rule of $P$ is dropped, it becomes a simple FoLP.

\begin{figure}[htbp]
\begin{center}
\begin{tikzpicture}[>=latex']
\usetikzlibrary{calc}
  \tikzset{node distance = 2cm}
  \GraphInit[vstyle=Normal]
  \Vertex{p}
  \EA(p){q}
  \SO(p){f}
  \EA(f){g}
  \tikzset{LabelStyle/.style = {fill=white,sloped}}
  \tikzset{EdgeStyle/.style = {->,bend right}}

  \Edge(p)(q)
  \Edge(q)(p)
  \Edge(p)(f)
  \Edge[label=$*$](f)(q)

  \Edge(f)(g)
\end{tikzpicture}
\end{center}
\caption{Marked Dependency Graph $D(P)$}
\label{fig:depgra}
\end{figure}
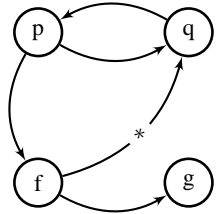

\subsection{Reasoning with Simple FoLPs}

Similarly as for FoLPs we define an initial completion structure for checking the satisfiability of a unary predicate $p$ w.r.t. a FoLP $P$. The completion is expanded via expansion rules, whose application is governed by applicability rules. All expansion rules for FoLPs (rules (i)-(vi)) are employed also in this case. As concerns the applicability rules, rule \emph{(vii) Saturation} stays the same, rule \emph{(viii) Blocking} is modified such that instead of the complex condition for FoLPs an anywhere subset blocking technique is applied, and rule \emph{(ix) Redundancy} is dropped. We give below the formal definition for the new blocking rule:

\subsubsection{(viii') Blocking}
A node $x \in N_\EF$ is \emph{blocked} if there is a saturated node $y\in N_\EF$, with $y \not \in \cts{P}$, such that $\ct(x) \subseteq \ct(y)$. Like for FoLPs, we call $(y,x)$ a \emph{blocking pair}. No expansions can be performed on a blocked node.

Intuitively, if there is a saturated node $y$ in $EF$ which is not a constant, whose content includes the content of $x$, as there are no paths from any $p(x)$ to some $q(y)$ (due to the restriction that there is no cycle in the marked positive dependency graph of $P$), one can reuse the justification for $y$ when dealing with $x$. Note that $y$ and $x$ do not have to be on the same path in a tree in $\EF$. Such a blocking technique is called ``anywhere blocking''.

The notions of \emph{contradictory}, \emph{clash-free}, \emph{complete} completion structure are defined analogously as for FoLPs.

\begin{proposition}[termination]\label{prop:stermination}
Let $P$ be a simple FoLP and $p \in \upreds{P}$.  Then, one can construct a finite complete
completion structure by a finite number of applications of the expansion
rules (i)-(vi) to the initial completion structure for $p$ w.r.t. $P$, taking into account
the applicability rules (vii) and (viii').
\end{proposition}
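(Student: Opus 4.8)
The plan is to follow the termination argument for general FoLPs (Proposition~\ref{prop:termination}) almost verbatim, with one decisive change: the depth bound that there came from the redundancy rule~(ix) is now supplied directly by the simpler subset-blocking rule~(viii'). As in the general case, it suffices to show that the extended forest $\EF$ stays finite along its three dimensions --- the number of trees, the arity of each tree, and the depth of each branch --- and that, $\EF$ being finite, only finitely many expansion steps can occur, since each node and arc carries finitely many (possibly negated) predicates and each of these is expanded at most once.

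The first two dimensions and the per-node work are the easy part. The number of trees is fixed at $|\cts{P}|+1$ by the initial completion structure and is never increased by any rule. The arity is bounded by $rank(P)$: successors are created only by rule~\emph{(i) Expand unary positive}, which introduces at most as many successors as the degree of the rule it picks, and summing over the unary predicates whose positive membership in a node's content must be justified yields at most $rank(P)$ successors per node. Saturating a single node also terminates: rules~\emph{(ii) Choose unary} and~\emph{(vi) Choose binary} each add a predicate that was previously absent from a finite set of (un/binary) predicates, while rules (i),(iii),(iv),(v) each switch one status from $\unexpa$ to $\expa$ exactly once.

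The heart of the proof, and the point at which the simple-FoLP setting departs from the general one, is the depth bound. Writing $n=|\upreds{P}|$, there are at most $2^{n}$ possible contents for a node, since a content is a subset of $\upreds{P}$. I would argue that no branch can carry more than $2^{n}+1$ saturated non-constant nodes: by applicability rule~\emph{(vii) Saturation}, a non-constant node is expanded only after all its ancestors are saturated, so if a branch contained $2^{n}+1$ saturated non-constant nodes, the pigeonhole principle would give two of them, an ancestor $y$ and a descendant $x$ with $y <_F x$, having $\ct(x)=\ct(y)$ and $y\notin\cts{P}$; then $\ct(x)\subseteq\ct(y)$ and rule~\emph{(viii') Blocking} declares $x$ blocked, so $x$ receives no successors and the branch stops. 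Hence each branch has length bounded by a constant depending only on $n$, and together with the arity bound this makes $\EF$ finite.

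Combining the bounds shows $\EF$ has only boundedly many nodes and arcs, so the expansion process halts after finitely many rule applications, yielding a complete completion structure. The step I expect to require the most care is the depth bound: one must pin down that the blocker demanded by rule~\emph{(viii') Blocking} is already saturated at the moment $x$ is examined --- which is precisely what the saturation ordering of rule~\emph{(vii)} guarantees --- and one must treat constants separately, noting that they occur only as the finitely many roots of trees and as targets of $\ES$-arcs, and hence cannot generate unbounded depth. It is worth remarking that, in contrast with the general case, the marked-cycle restriction defining simple FoLPs plays no role in termination: plain subset blocking bounds the structure on its own, and that restriction is needed only to make this cheaper blocking \emph{sound}.
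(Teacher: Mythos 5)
Your proof is correct and follows essentially the same route as the paper's (very brief) sketch: finitely many possible node contents plus subset blocking bound the depth of every branch, the arity of each tree is bounded by $rank(P)$, the number of trees is fixed, and saturating a single node takes finitely many steps. The only nuance worth tightening is in your pigeonhole step: a node whose saturated content would coincide with that of an earlier saturated non-constant node must already have been blocked \emph{before} being expanded (its content at that moment, being a subset of its eventual content, is already a subset of the blocker's), which is what actually prevents it from ever acquiring successors --- but this is exactly the reading the paper intends, and your argument is if anything more detailed than the paper's own.
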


\begin{proof}[Proof sketch]
Clearly, if one has a finite completion structure that is not complete, a finite application of expansion rules would complete it unless successors are introduced. One cannot introduce successors indefinitely as given the finite number of possible contents of a node, the blocking condition will eventually be met.
\end{proof}

\begin{proposition}[soundness]\label{prop:ssoundness}
Let $P$ be a simple FoLP and $p\in \upreds{P}$. If there exists a complete clash-free completion structure for $p$ w.r.t. $P$ (expanded according to rule (i)-(vii) and (viii')), then $p$ is satisfiable w.r.t. $P$.
\end{proposition}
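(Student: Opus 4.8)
The plan is to mirror the soundness proof for general FoLPs (Proposition \ref{prop:soundness}): from the given complete clash-free completion structure $\CS=\langle \EF,\ct,\st,G\rangle$ I would construct a graph-shaped open interpretation $(U,M)$ and show it is an open answer set of $P$ satisfying $p$. Concretely, I first turn the forest-shaped structure into a graph. For every blocking pair $(y,x)$ — where, by rule (viii'), $y$ is a saturated non-constant node with $\ct(x)\subseteq\ct(y)$ — I set the content of the blocked node $x$ equal to $\ct(y)$ and, for each arc $(y,y\cdot i)$ leaving $y$, add a redirected arc from $x$ to $y\cdot i$ carrying the content of $(y,y\cdot i)$; then I drop all negated predicates. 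Put $U=N_{\EF}$ and let $M$ collect the atoms $q(z)$ (resp.\ $f(z,z')$) for every positive $q$ in the content of a node $z$ (resp.\ $f$ in the content of an arc) of the resulting graph.

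Second, I would check that $(U,M)$ is a model of $\ground{P}{U}$. For every positive atom of $M$ whose argument is a non-blocked node, the application of the expand-positive rules (i)/(iv) that put the predicate into the content supplies a ground rule of the required shape whose body is satisfied by $M$; absences of atoms are handled by the expand-negative rules (iii)/(v), which — applied only once all successors are present (saturation) — refute every ground body; constraints are dealt with through the $\mathit{constr}$-encoding. For atoms whose argument is a blocked node $x$, satisfaction is inherited from the blocking witness $y$: since $x$ and its redirected outgoing arcs replicate the content of the saturated node $y$ and its arcs, every rule instance at $x$ is satisfied exactly as the corresponding instance at $y$. This part is essentially identical to the general FoLP argument and is routine.

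The crux is the third step, minimality: showing $M$ is a subset-minimal model of the GL-reduct of $\ground{P}{U}$ w.r.t.\ $M$, equivalently that every atom of $M$ is finitely and non-circularly justified (well-supported in the sense of \cite{fages-NewfixpointSem:91}). Here the ingredients specific to simple FoLPs enter. The clash-freeness of $\CS$ still forbids positive cycles in $G$, which rules out circular \emph{local} justification at a single node. What replaces the $G$-path blocking condition of rule (viii) is the defining restriction of simple FoLPs: $D(P)$ contains no cycle through a marked edge, which — as noted after the definition of $D(P)$ — entails that in the atom dependency graph of $P_U$ every path between two atoms $p(x)$ and $p(y)$ with $x\neq y$ passes through a free atom $q(z)$. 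Free atoms are produced by free rules and need no support in the reduct, so every such path is broken at a self-justified atom. I would make this precise by assigning each atom of $M$ a well-founded rank: marked dependencies strictly descend in the forest (they move to a successor node), non-marked dependencies stay local and are acyclic by the $G$-condition, and free atoms are minimal. Using the marked-cycle-free property to bound descent between repeated predicates, I would conclude that the justification tree of every atom is finite, so $M$ coincides with the least fixpoint of the immediate-consequence operator of the reduct and is therefore an answer set. Since $p\in\ct(\roo)$ gives $p(\roo)\in M$, the predicate $p$ is satisfiable w.r.t.\ $P$.

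The main obstacle is exactly this minimality argument under \emph{anywhere} blocking: because rule (viii') allows the blocking witness $y$ to be any saturated non-constant node rather than an ancestor, the redirected arcs can point to nodes off the current branch, so one cannot argue well-foundedness by tree depth alone. The delicate point is to show that even with these cross-branch redirections no atom depends on itself through an infinite chain; the replacement of the per-pair $G$-path test by the global acyclicity of $D(P)$ (equivalently, the free-atom-cut property of $P_U$) is precisely what makes the ranking well-defined, and verifying that this global property survives the arc redirection is where the care is needed.
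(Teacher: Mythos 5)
Your proposal follows essentially the same route as the paper's (sketched) proof: construct the graph-shaped open interpretation exactly as in the soundness proof for general FoLPs, verify modelhood the same way, and then argue minimality not via the $connpr_G$ blocking test but via the defining restriction on $D(P)$ --- the absence of marked cycles, hence the fact that any dependency path from $p(x)$ to $p(y)$ with $x\neq y$ is cut by a free (self-justified) atom --- which is precisely why the paper says subset blocking suffices here. Your additional elaboration of the well-founded ranking and the cross-branch redirection issue is a faithful filling-in of the paper's one-line claim that ``there are no infinite dependency chains which are not cycles in the atom dependency graph of the grounded program,'' so no substantive divergence or gap.
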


\begin{proof}[Proof sketch]
Similarly to the case for FoLPs, from a clash-free complete completion structure, one can construct an open interpretation and show that this interpretation is an open answer set of $P$ that satisfies $p$.  Here, due to the restrictions on the the predicate dependency graph of the program, the subset blocking condition is enough to ensure minimality of such an open interpretation. There are no infinite dependency chains which are not cycles in the atom dependency graph of the grounded program.
\end{proof}

\begin{proposition}[completeness]\label{prop:scompleteness}
Let $P$ be a simple FoLP and $p\in \upreds{P}$. If $p$ is satisfiable w.r.t. $P$, then there exists a clash-free complete completion structure for $p$ w.r.t. $P$.
\end{proposition}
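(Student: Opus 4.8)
Since $p$ is satisfiable w.r.t.\ $P$, the forest model property (Proposition~\ref{prop:forestmodelproperty1}) guarantees that $p$ is forest satisfiable: there is an open answer set $(U,M)$ of $P$, an extended forest $\EF$, and a labeling $\mathcal{L}$ as in Definition~\ref{def:forest-sat1} with $p\in\mathcal{L}(\roo)$. The plan is to build a complete clash-free completion structure by letting this forest model drive every non-deterministic choice in the expansion rules (i)--(vi), and to use the anywhere subset blocking rule (viii') to terminate the construction. This mirrors the completeness argument for full FoLPs (Proposition~\ref{prop:completeness}), but since the redundancy rule is dropped the second, ``shrinking'' stage of that proof is no longer needed.

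First I would fix a correspondence $x\mapsto\bar{x}$ between the nodes of the completion structure being built and the nodes of $\EF$, expanding the initial completion structure in breadth-first order while maintaining, for every node expanded so far, the invariant that $q\in\ct(x)$ iff $q(\bar{x})\in M$ (and dually $\naf q\in\ct(x)$ iff $q(\bar{x})\notin M$), with the analogous condition for arc contents via $\mathcal{L}(\bar x,\bar y)$. The guided choices are: for \emph{(i) Expand unary positive} of a non-free $q$, pick a ground rule witnessing $q(\bar x)\in M$ --- one exists because $M$ is a model --- and reuse or create the successors that this rule uses at $\bar{x}$ in $M$; for \emph{(ii) Choose unary} and \emph{(vi) Choose binary}, insert the positive or negative predicate according to whether the corresponding atom lies in $M$; and for the \emph{expand negative} rules (iii) and (v), refute every relevant ground body by a literal that is false in $M$, which must exist since $M$ satisfies none of those bodies. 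In parallel, for each positive atom I would add to the dependency graph $G$ the arcs pointing to the body atoms of its chosen justifying rule, so that $G$ faithfully records the support relation underlying $M$.

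Termination and completeness then follow as in Proposition~\ref{prop:stermination}: there are only finitely many possible node contents, so during the breadth-first expansion every node introduced deep enough has a content already realised by some saturated non-constant node, which blocks it via (viii'); hence only finitely many nodes are expanded and the resulting structure is complete. For clash-freeness, the structure is non-contradictory because each content mirrors the consistent interpretation $M$, so $\{q,\naf q\}\subseteq\ct(x)$ can never arise; the redundancy condition is vacuous for simple FoLPs; and the only remaining obligation is that $G$ contains no positive cycle.

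The main obstacle is precisely this last point, and it is where the simple-FoLP restriction is essential. Because the marked positive dependency graph $D(P)$ has no cycle through a marked edge, there is no path in the atom dependency graph of $\ground{P}{U}$ from an atom $q(x)$ to an atom $q(y)$ that avoids a free atom; together with the fact that $M$, being an answer set, supports every atom through a finite, non-circular justification, this forces the support arcs recorded in $G$ to be free of positive cycles. The delicate step is to check that reusing the justification of a blocking node $y$ for the blocked node $x$ (as licensed by (viii') and exploited in the soundness sketch, Proposition~\ref{prop:ssoundness}) does not reintroduce a circular dependency; this again follows because any dependency chain connecting two atoms over the same predicate must pass through a free atom and therefore cannot close up. Carrying out this analysis of dependencies through $D(P)$ is the technical heart of the argument, while the remaining bookkeeping is routine.
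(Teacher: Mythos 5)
Your proposal is correct and follows essentially the same route as the paper's (very brief) proof sketch: guide the non-deterministic expansion rules with a forest model obtained from forest satisfiability, and rely on the anywhere subset blocking rule (viii') for termination, with no shrinking stage needed since the redundancy rule is absent. The additional detail you supply on why $G$ stays free of positive cycles (via the restriction on the marked positive dependency graph and the finite justification of atoms in an open answer set) is consistent with how the paper handles the analogous point in the full FoLP completeness proof.
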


\begin{proof}[Proof sketch]
If $p$ is satisfiable w.r.t. $P$ then $p$ is forest-satisfiable w.r.t. $P$. We construct a clash-free complete completion structure for $p$ w.r.t. $P$, by guiding the non-deterministic application of the expansion rules with the help of a forest model of $P$ which satisfies $p$ and by taking into account the constraints imposed by the saturation and the new blocking rule.
\end{proof}

\begin{proposition}\label{prop:scomplexity}
The algorithm runs in the worst case in exponential time in the size of the program.
\end{proposition}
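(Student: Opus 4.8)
The plan is to follow the argument of Proposition \ref{prop:complexity} step by step, checking that every quantity which was double exponential there becomes single exponential here; the single change responsible for this is that rule \emph{(viii')} blocks \emph{anywhere} rather than only against ancestors.

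First I would bound the size of a complete completion structure. Write $n=|\upreds{P}|$ and $r=rank(P)$, where $n$ is linear and $r$ polynomial in the size of $P$. Every internal node is saturated and hence carries a \emph{full} content, i.e.\ for each $q\in\upreds{P}$ exactly one of $q,\naf q$ occurs in it, so there are at most $2^n$ possible such contents. Since rule \emph{(viii')} blocks a node as soon as its content is a subset of the content of \emph{any} already-saturated non-constant node, no two expanded non-constant nodes can carry the same full content: the one created later would have been blocked. Hence the extended forest contains at most $2^n$ expanded non-constant nodes and at most $|\cts{P}|+1$ roots; each expanded node has at most $r$ successors and blocked nodes are leaves, so the structure has $O((2^n+|\cts{P}|)(r+1))$ nodes, which is single exponential in the size of $P$. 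Note that the redundancy rule \emph{(ix)} is dropped, so the $2^{n^2}$ factor of Proposition \ref{prop:complexity} disappears entirely.

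Next I would determinise the algorithm exactly as in Proposition \ref{prop:complexity}, building an AND/OR structure whose OR vertices carry an unexpanded content and whose AND children are the possible saturations (realisations) of that content. The crucial observation is that anywhere blocking lets me identify any two OR vertices carrying the same content, since whether such a content can be realised no longer depends on its position in the forest. The AND/OR structure therefore becomes a \emph{DAG} with at most one OR vertex per content, i.e.\ exponentially many OR vertices, each having an exponential number of realisations as AND children and each realisation referring to at most $r$ successor contents; thus the DAG has single-exponentially many vertices and arcs. As before, justifying a negated unary predicate, and hence saturating a node, costs exponential time, so labelling each vertex is exponential.

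Finally I would redo the bookkeeping of the dependency graphs. Because the universe is now only single exponential, each graph $G$ spawned by a realisation has single-exponential size, and the exponentially many OR vertices spawn only exponentially many such graphs; checking the clash-free conditions (no contradiction, no positive cycle) in each is single exponential. Evaluating the DAG bottom up, declaring a vertex \emph{true} when some realisation has all its successor vertices \emph{true} and its local conditions satisfied, is then feasible in single-exponential time, and $p$ is satisfiable iff every root evaluates to \emph{true}; this yields the claimed bound. The hard part will be the middle step: I must argue that anywhere blocking genuinely justifies collapsing equal-content OR vertices, i.e.\ that for simple FoLPs the expandability of a node depends only on its content. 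This is precisely what the restriction on $D(P)$ (no cycle through a marked edge) buys, since it prevents positive dependencies from descending unboundedly into the forest and so keeps the well-supportedness condition from coupling a node to its ancestors --- the very coupling that forces a genuine tree, and hence the double-exponential blow-up, in the general FoLP case.
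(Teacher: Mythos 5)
Your proposal follows the same two-step route as the paper's proof: first bound the size of a complete completion structure by observing that anywhere blocking forces all saturated non-constant internal nodes to have pairwise distinct (full) contents, giving at most $2^n+|\cts{P}|$ internal nodes and hence $O((2^n+|\cts{P}|)(r+1))$ nodes in total --- this matches the paper's bound $b=(2^p+c)(r+1)-r$ essentially verbatim --- and then determinise via an AND/OR structure as in Proposition~\ref{prop:complexity}. Where you genuinely diverge is in the second step: the paper merely asserts that ``the new deterministic version will still run in the worst case in exponential time,'' whereas you notice that a naive AND/OR \emph{tree} would not obviously do so (the completion structure can still have a single branch of length $2^n$, so exponential branching raised to exponential depth threatens a double-exponential blow-up) and you repair this by collapsing equal-content OR vertices into a DAG, justified by the position-independence of realisability that anywhere blocking and the acyclicity condition on $D(P)$ provide. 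This is a substantive addition: it is essentially the standard global-caching argument from \exptime{} DL procedures, and it supplies a justification for a step the paper leaves implicit. Your final bookkeeping on the dependency graphs is consistent with the paper's count of atoms. The one place you rightly flag as ``the hard part'' --- that realisability of a content is independent of where the node sits --- is indeed the load-bearing claim and would need to be tied back to the soundness argument for rule \emph{(viii')} to be fully rigorous, but as a sketch your argument is at least as complete as, and more carefully justified than, the one in the paper.
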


\begin{proof}[Proof sketch]
The size of a completion structure is bounded by the following factors: if we leave all the leaves of the trees in the completion apart, there are at most $2^p+c$ nodes, where $p=|\upreds{P}|$, and $c=|cts(P)|$, as there are at most $2^p$ different possible configurations for the content of a unary node, and all the nodes which are not leaves or constants have to have different content (otherwise they would form blocking pairs and at least one of them would be a leaf). The maximum number of leaves is $r(2^p+c-1)$, where $r=rank(P)$ is the maximum arity of any of the trees in the extended forest. So, the completion has in the worst case an exponential number of nodes in the size of the program: $b=(2^p+c)(r+1)-r$. As was the case for FoLPs, the nondeterministic algorithm can be determinized using an AND/OR extended forest. The new deterministic version will still run in the worst case in exponential time, and thus we can conclude that the algorithm runs in exponential time.
\end{proof}

Note that the complexity of simple FoLPs is one level lower than the complexity of full FoLPs, the decrease in complexity being achieved by employing the anywhere blocking technique. This, at its turn, has been made possible through the restriction imposed on the shape of simple FoLPs. By allowing anywhere blocking for full FoLPs we would lose the soundness of the algorithm (in particular the interpretation constructed as described in the soundness proof would not always be minimal).

\begin{proposition}\label{prop:sfinitemodelproperty}
Simple FoLPs have the bounded finite model property: if there is an open answer set, there is an open answer set with a universe that is bounded by a number of elements which can be specified in function of the program at hand.
\end{proposition}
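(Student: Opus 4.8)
The plan is to obtain the property as a direct corollary of the three results just established for simple FoLPs, exactly in the manner that Proposition~\ref{prop:finitemodelproperty} was derived for full FoLPs; the only difference lies in which size bound gets plugged in. Concretely, I would chain \emph{completeness} (Proposition~\ref{prop:scompleteness}), the size estimate from the \emph{complexity} argument (Proposition~\ref{prop:scomplexity}), and \emph{soundness} (Proposition~\ref{prop:ssoundness}).

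First I would reduce the existence of an open answer set to unary predicate satisfiability. If $(U,M)$ is an open answer set with $M\neq\emptyset$, then $M$ contains some atom, and hence some unary predicate $p\in\upreds{P}$ is satisfiable w.r.t.\ $P$ (if $M$ happened to contain only a binary atom, one names its source term via a fresh free unary predicate, or argues directly; the degenerate case $M=\emptyset$ is settled at once by taking the universe $\cts{P}$ together with at most one anonymous element and the empty interpretation, which is already bounded). Applying Proposition~\ref{prop:scompleteness} to such a satisfiable $p$ then yields a clash-free complete completion structure for $p$ w.r.t.\ $P$.

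The crucial observation, already made in the proof of Proposition~\ref{prop:scomplexity}, is that because rule \emph{(viii$'$) Blocking} uses \emph{anywhere} subset blocking and there are only $2^p$ distinct possible contents for a unary node, every non-leaf, non-constant node of such a structure must carry a distinct content; hence the total number of nodes is bounded by $b=(2^p+c)(r+1)-r$, where $p=|\upreds{P}|$, $c=|\cts{P}|$, and $r=rank(P)$. I would then invoke \emph{soundness} (Proposition~\ref{prop:ssoundness}): unraveling this completion structure produces an open answer set of $P$ whose universe is precisely the node set of the completion, and therefore has at most $b$ elements. Since $b$ is specified solely in terms of $P$, every open answer set can be replaced by one over a universe of at most $b$ elements, which is exactly the bounded finite model property.

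The substantive content has already been discharged in Propositions~\ref{prop:ssoundness}--\ref{prop:scomplexity}, so the main point to verify here is merely that the three statements compose correctly --- in particular that soundness maps completion nodes bijectively onto universe elements, so that the node bound $b$ of the complexity proof transfers verbatim to a bound on the universe. The one genuinely nontrivial ingredient being reused is that anywhere subset blocking, made available by the acyclicity restriction on the marked positive dependency graph $D(P)$, forces pairwise-distinct contents on all interior nodes and thereby caps their count at $2^p+c$; this is precisely what lets the simple, far smaller bound $b$ replace the redundancy-based double-exponential bound required for full FoLPs.
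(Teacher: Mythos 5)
Your proposal is correct and follows essentially the same route as the paper: the property is obtained as a corollary of Propositions~\ref{prop:scompleteness} and~\ref{prop:ssoundness}, using the node bound $b=(2^p+c)(r+1)-r$ established in the complexity analysis, since soundness yields an open answer set whose universe is exactly the node set of the completion. The extra care you take in reducing ``there is an open answer set'' to unary predicate satisfiability is a reasonable elaboration of a step the paper leaves implicit.
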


\begin{proof}[Proof sketch]
The property follows as a corollary of the soundness and completeness results. The completeness proof shows that from an open answer set one can construct a clash-free complete completion structure with maximum $b$ nodes, where $b$ is as defined above. At the same time, the soundness result shows that any clash-free complete structure gives rise to an open answer set whose universe is exactly the set of nodes of the completion. Thus, any open answer set can be reduced to an open answer set with a bounded-size universe.
\end{proof}

\subsection{Simple F-hybrid Knowledge Bases}

Similar with defining f-hybrid knowledge bases one can define simple f-hybrid knowledge bases which are combinations of \ALCHOQ{} knowledge bases with simple FoLPs. An \ALCHOQ{} knowledge base can be seen as a \SHOQ{} knowledge base where no transitive roles are allowed.

\begin{definition}\label{def:sf-hybrid}
  A \emph{simple f-hybrid knowledge base} is a pair $\langle \Sigma,P\rangle$ where $\Sigma$ is an \ALCHOQ{} knowledge base and
  $P$ is a simple FoLP.
\end{definition}

Note that the f-hybrid KB in example \ref{ex:fhybrid} is a simple f-hybrid KB.

The semantics of simple f-hybrid knowledge bases is defined similarly as the semantics of f-hybrid knowledge bases. We employ the same strategy for reasoning with simple f-hybrid knowledge bases as the one used for reasoning with f-hybrid knowledge bases: translating satisfiability checking in the DL part of the knowledge base, the \ALCHOQ{} knowledge base, into satisfiability checking in the LP part of the hybrid formalism, FoLPs. In order to do this we define the  \emph{closure} \clos{\Sigma} of an \ALCHOQ{} knowledge base $\Sigma$ and the transformation $\Phi(\Sigma)$ from an \ALCHOQ{} knowledge base to a FoLP in a similar way as their homonym transformation in Section \ref{sec:fhybrid}: we simply drop the axioms which deal with transitivity in the general case. In particular, by dropping axiom \ref{eq:transexists2}, the obtained FoLP becomes a simple FoLP:

\begin{proposition}\label{prop:sfolpispol}
Let $\langle\Sigma,P\rangle$ be an \ALCHOQ{} knowledge base.  Then, $\Phi(\Sigma)\cup P$ is a simple FoLP, and has a size that is polynomial in the size of $\Sigma$.
\end{proposition}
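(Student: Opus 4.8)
The plan is to prove the two claims separately: that $\Phi(\Sigma)\cup P$ is a \emph{simple} FoLP and that its size is polynomial in the size of $\Sigma$. The size bound is immediate from Proposition~\ref{prop:folpispol}: since \ALCHOQ{} is just \SHOQ{} without transitive roles, $\Phi(\Sigma)$ is exactly the \SHOQ{} transformation with the transitivity rules~(\ref{eq:transexists}) and~(\ref{eq:transexists2}) omitted, so it is a FoLP whose size is polynomial in $\clos{\Sigma}$ (the only delicate point, the quadratic blow-up of inequalities in rule~(\ref{eq:numbergreater}) under unary encoding of numbers, is unchanged). Hence the real work is to show that the marked positive predicate dependency graph $D(\Phi(\Sigma)\cup P)$ contains no cycle through a marked edge.

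First I would classify the marked edges contributed by $\Phi(\Sigma)$. A marked edge $(p,q)$ arises only when $q$ occurs as a unary predicate applied to a successor term (a $\delta_m$-predicate). Scanning the rules of $\Phi(\Sigma)$, the only such occurrences are $E(Y)$ in the exists rule~(\ref{eq:exists}) and $E(Y_1),\ldots,E(Y_n)$ in the at-least rule~(\ref{eq:numbergreater}); both yield a marked edge from the compound concept ($\exists Q.E$ or $\numberrestgreater{n}{Q}{E}$) to its \emph{proper subexpression} $E$. Every other positive body literal in a definitional rule either is a role name (made free by~(\ref{eq:freerole}), hence not a vertex of $D$) or is a strict subexpression applied to the head term (rules~(\ref{eq:neg})--(\ref{eq:disj})), while the value and at-most rules~(\ref{eq:forall}) and~(\ref{eq:numberless}) contribute no positive edges at all. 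Consequently every arc out of a compound-concept predicate strictly decreases concept-structural complexity, so the compound predicates form a DAG and admit no cycle whatsoever. The crucial observation is that the genuine source of marked cycles in the \SHOQ{} case is the recursive transitivity rule~(\ref{eq:transexists2}), which produces a marked self-loop on $\exists Q.E$; since \ALCHOQ{} has no transitive roles this rule (and~(\ref{eq:transexists})) never appears, which is exactly why the output is simple here but only a general FoLP in Proposition~\ref{prop:folpispol}.

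It remains to rule out marked cycles that mix predicates of $\Phi(\Sigma)$ with predicates of $P$; this is the main obstacle, because a union of two simple FoLPs need not be simple. I would resolve it using two structural facts. The compound-concept predicates are freshly introduced by $\Phi$ and, by the standing assumption, never coincide with predicates of $P$, so no rule of $P$ has a compound concept in its body; combined with the previous paragraph this means no cycle can pass through a compound predicate. The only predicates genuinely shared between $\Phi(\Sigma)$ and $P$ are the atomic concept and role names of $\Sigma$, and these are rendered \emph{free} in the union by the free rules~(\ref{eq:freeconcept}) and~(\ref{eq:freerole}); free predicates are not vertices of $D$, so deleting them from the dependency graph of $P$ (which, by Definition~\ref{def:sf-hybrid}, is already free of marked cycles) can only break paths and never create a new cycle. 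Thus $D(\Phi(\Sigma)\cup P)$ decomposes, for the purpose of cycles, into the acyclic compound-concept part and the $P$-part, with no cycle crossing between them; in particular it has no marked cycle, and $\Phi(\Sigma)\cup P$ is a simple FoLP.
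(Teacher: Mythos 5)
Your proof is correct and its core idea coincides with the paper's (which is only a two-sentence sketch): every positive body literal of a rule in $\Phi(\Sigma)$ that is applied to a successor term points to a strict subexpression, so the only possible source of a marked cycle would be the recursive transitivity rule~(\ref{eq:transexists2}), and that rule is absent for \ALCHOQ{}. Where you go beyond the paper is in the last paragraph: the paper's sketch says nothing about why the \emph{union} $\Phi(\Sigma)\cup P$ remains simple, even though, as you rightly note, simplicity is not preserved under union in general. Your resolution --- compound-concept predicates are fresh and hence never occur in $P$, while the shared atomic concept and role names are made free by rules~(\ref{eq:freeconcept}) and~(\ref{eq:freerole}) and therefore are not vertices of the marked dependency graph, so no marked cycle can cross between the two components and removing vertices from $D(P)$ cannot create one --- is exactly the argument needed to close that gap, and it is sound. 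So this is the same route as the paper, carried out with the missing interaction argument supplied.
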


\begin{proof}[Proof sketch]
That $\Phi(\Sigma)\cup P$ is a FoLP which has a size that is polynomial in the size of $\Sigma$ follows from proposition \ref{prop:folpispol} and the fact that any \ALCHOQ{} is a \SHOQ{} knowledge base. That the resulted FoLP is a simple FoLP can be seen by analysis of the shape of axioms used for defining $\Phi$ introduced in Section \ref{sec:fhybrid}: the only axiom which introduces predicate recursion is axiom \ref{eq:transexists2} which has been eliminated in this version of the translation.
\end{proof}

\begin{proposition}\label{prop:stranslation}
Let $\langle \Sigma,P\rangle$ be a simple f-hybrid knowledge base.  Then, $p$ is satisfiable w.r.t.~$(\Sigma,P)$ iff $p$ is satisfiable w.r.t.~$\Phi(\Sigma)\cup P$.
\end{proposition}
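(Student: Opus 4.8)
The plan is to reduce the claim to Proposition \ref{prop:translation} rather than redo the full model correspondence from scratch. The key observation is that an \ALCHOQ{} knowledge base is, by definition, a \SHOQ{} knowledge base in which no transitivity axiom occurs, and a simple FoLP is in particular a FoLP. Hence $\langle \Sigma,P\rangle$ is also an ordinary f-hybrid knowledge base, and, since the semantics of simple f-hybrid knowledge bases is given by exactly the conditions of Definition \ref{def:semanticsfhybrid}, the notion of a model of $\langle\Sigma,P\rangle$ — and therefore satisfiability of $p$ w.r.t.~$(\Sigma,P)$ — coincides under both readings.

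First I would check that the two translations agree on this input. The only rules that the \SHOQ{} transformation $\Phi$ adds beyond the \ALCHOQ{} one are the transitivity rules \ref{eq:transexists} and \ref{eq:transexists2}, which are emitted solely when there is a role $S$ with $\trans(S)\in\Sigma$ (as a transitive subrole of $Q$), respectively when $\trans(Q)\in\Sigma$. Because $\Sigma$ contains no transitivity axioms, both side-conditions are vacuous, so no such rules are ever produced. Consequently the FoLP $\Phi(\Sigma)$ obtained by the \ALCHOQ{} translation coincides with the FoLP produced by the \SHOQ{} translation of the same $\Sigma$; by Proposition \ref{prop:sfolpispol} this program is moreover a simple FoLP, and \clos{\Sigma} together with all remaining defining rules is left unchanged by the omission.

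With these two facts in hand the statement follows immediately: $p$ is satisfiable w.r.t.~$(\Sigma,P)$ read as a simple f-hybrid knowledge base iff it is satisfiable w.r.t.~$(\Sigma,P)$ read as an f-hybrid knowledge base (same semantics), iff it is satisfiable w.r.t.~$\Phi(\Sigma)\cup P$ by Proposition \ref{prop:translation}, where the final occurrence of $\Phi(\Sigma)$ is the common value of the two translations. For completeness I would also sketch the self-contained argument that mirrors the two directions of Proposition \ref{prop:translation}: from a model $(U,\Int,M)$ one builds $(U,N)$ with $N = M \cup \set{C(x)\mid x\in C^{\Int},\, C\in\clos{\Sigma}} \cup \set{R(x_1,x_2)\mid (x_1,x_2)\in R^{\Int},\, R\in\clos{\Sigma}}$ and verifies it is an open answer set satisfying $p$, and conversely reads $\Int$ and $M$ off a given open answer set.

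I expect this direct route to be strictly easier than in the \SHOQ{} case. The one place that required genuine care there — namely showing that the recursive existential rule \ref{eq:transexists2} correctly captures transitive chains and that minimality of the open answer set is preserved despite that recursion — disappears here, since for an \ALCHOQ{} knowledge base $\Phi(\Sigma)$ contains no recursive DL rule at all. The main (and only minor) obstacle is therefore purely bookkeeping: confirming that removing the transitivity machinery leaves \clos{\Sigma} and the remaining rules verbatim equal to the \SHOQ{} ones on a transitivity-free $\Sigma$, so that the model correspondence established in Proposition \ref{prop:translation} transfers unchanged.
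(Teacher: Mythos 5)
Your proposal is correct, and it takes a slightly different (and arguably cleaner) route than the paper. The paper's own proof consists of the single remark that the argument is ``similar to the proof of Proposition~\ref{prop:translation}'', i.e.\ it intends the reader to re-run the two-directional model correspondence (building $(U,N)$ from $(U,\Int,M)$ and vice versa) on the transitivity-free input. You instead reduce the statement to a \emph{literal instance} of Proposition~\ref{prop:translation}: since an \ALCHOQ{} knowledge base is a \SHOQ{} knowledge base without transitivity axioms, the side conditions triggering rules (\ref{eq:transexists}) and (\ref{eq:transexists2}) (and the corresponding clauses in the definition of $\clos{\Sigma}$) are vacuous, so the \ALCHOQ{} translation and the \SHOQ{} translation produce the same program, and the earlier proposition applies verbatim. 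This buys you a proof with no new verification at all, at the cost of the (easy) bookkeeping check that the two translations agree; the paper's intended route re-establishes the correspondence directly and is what you would fall back on if you wanted a self-contained argument, which your final paragraph correctly sketches. Your remark that the only delicate point of the \SHOQ{} case --- the recursive rule (\ref{eq:transexists2}) and its interaction with minimality --- disappears here is accurate and is precisely why the reduction is safe.
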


The proof for the above proposition is similar with the proof for \ref{prop:translation}. That there exists such a polynomial translation from simple f-hybrid knowledge bases to  forest logic programs, together with the complexity of the terminating, sound, and complete algorithm for satisfiability checking w.r.t. simple FoLPs, we have the following result:

\begin{proposition}\label{prop:simplefhybridmember}
Satisfiability checking w.r.t. simple f-hybrid knowledge bases is in \exptime{}.
\end{proposition}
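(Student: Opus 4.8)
The plan is to obtain the \exptime{} upper bound by composing the polynomial translation $\Phi$ with the single-exponential decision procedure for simple FoLPs, exactly mirroring how Proposition~\ref{prop:fhybridmember} is derived in the full case. Concretely, given a simple f-hybrid knowledge base $\langle\Sigma,P\rangle$ with $\Sigma$ an \ALCHOQ{} knowledge base and $P$ a simple FoLP, together with a predicate $p$ whose satisfiability is to be decided, I would first invoke Proposition~\ref{prop:stranslation} to reduce the question to satisfiability of $p$ w.r.t.\ the single program $\Phi(\Sigma)\cup P$. If instead a concept expression $C$ from $\Sigma$ is the object of the query, I would apply the remark following Proposition~\ref{prop:translation}, adding a fresh rule $p(X)\gets C(X)$ and testing the new predicate $p$; this enlarges the program only by a constant amount and preserves simplicity.

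The second step accounts for the size of the target program. By Proposition~\ref{prop:sfolpispol}, $\Phi(\Sigma)\cup P$ is indeed a simple FoLP, and its size is polynomial in the size of $\Sigma$, hence polynomial in the size of $\langle\Sigma,P\rangle$. In particular, neither \clos{\Sigma} nor the inequalities generated by the number-restriction rules incur more than a polynomial blow-up, under the standard unary-encoding convention for the bounds occurring in number restrictions that is already assumed in Proposition~\ref{prop:folpispol}.

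The final step applies Proposition~\ref{prop:scomplexity}: the algorithm for simple FoLPs runs in time exponential in the size of its input program. Substituting the polynomial size bound from the previous step, the overall running time is $2^{q(n)}$ for some polynomial $q$, where $n$ is the size of $\langle\Sigma,P\rangle$. This yields membership in \exptime{}.

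The only point requiring care — the crux of the argument rather than a genuine obstacle — is confirming that this composition remains at a single exponential level: a polynomial-size input fed to a single-exponential algorithm stays single-exponential, since $2^{\mathrm{poly}(n)}$ is \exptime{}-bounded in $n$. This is precisely where the simple fragment differs from the full case: in Proposition~\ref{prop:fhybridmember} the underlying FoLP algorithm is already double-exponential (Proposition~\ref{prop:complexity}), whereas here Proposition~\ref{prop:scomplexity} supplies only a single exponential, so one level is saved. One should also note that the determinization via the AND/OR extended forest invoked in the proof of Proposition~\ref{prop:scomplexity} is applied to the fixed program $\Phi(\Sigma)\cup P$ produced once, up front, so it cannot interact with the translation in a way that inflates the bound.
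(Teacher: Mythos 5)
Your proposal is correct and follows essentially the same route as the paper: the result is obtained by composing the polynomial translation $\Phi$ (Propositions~\ref{prop:sfolpispol} and~\ref{prop:stranslation}) with the single-exponential algorithm for simple FoLPs (Proposition~\ref{prop:scomplexity}), observing that $2^{\mathrm{poly}(n)}$ remains within \exptime{}. The paper states this composition only in passing before the proposition, so your slightly more explicit write-up (including the concept-satisfiability case) is a faithful elaboration rather than a different argument.
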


As satisfiability checking of \ALC concepts w.r.t. an \ALC TBox  (note
that \ALC is a fragment of \ALCHOQ) is \exptime-complete \cite[Chapter
3]{dlbook}, we have that satisfiability checking w.r.t. simple f-hybrid knowledge bases is \exptime-hard, and combined with the result above, that satisfiability checking w.r.t. simple f-hybrid knowledge bases is \exptime-complete.

\begin{proposition}\label{prop:simplefhybridhard}
Satisfiability checking w.r.t. simple f-hybrid knowledge bases is \exptime-complete.
\end{proposition}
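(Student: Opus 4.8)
The plan is to prove both matching bounds that together yield \exptime-completeness. The upper bound, membership in \exptime, is immediate: it is exactly the content of Proposition \ref{prop:simplefhybridmember}, which is obtained by composing the polynomial translation $\Phi$ (Proposition \ref{prop:sfolpispol}) with the exponential-time algorithm for simple FoLPs (Proposition \ref{prop:scomplexity}). Hence the only remaining work lies in the \exptime lower bound.

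For the lower bound I would give a trivial reduction from \ALC concept satisfiability with respect to an \ALC TBox, a problem known to be \exptime-hard \cite[Chapter 3]{dlbook}. Given an instance consisting of an \ALC concept $C$ and an \ALC TBox $\Sigma$, observe that \ALC is a syntactic fragment of \ALCHOQ, so $\Sigma$ is itself an \ALCHOQ knowledge base; moreover the empty program $\emptyset$ is vacuously a simple FoLP. Thus $\langle \Sigma, \emptyset\rangle$ is a well-formed simple f-hybrid knowledge base, and the map $(C,\Sigma)\mapsto (C,\langle\Sigma,\emptyset\rangle)$ is computable in polynomial (indeed linear) time.

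It then remains to argue correctness of the reduction, namely that $C$ is satisfiable w.r.t.\ $\Sigma$ in the description-logic sense iff $C$ is satisfiable w.r.t.\ the simple f-hybrid knowledge base $\langle\Sigma,\emptyset\rangle$. This is a direct consequence of Definition \ref{def:semanticsfhybrid} and the remark following it: when the rule component is empty, the projected program $\Pi(P_U,\Int)$ is empty, its unique answer set is $\emptyset$, and so a model $(U,\Int,M)$ of $\langle\Sigma,\emptyset\rangle$ reduces to a DL model $\Int$ of $\Sigma$ together with $M=\emptyset$. Consequently, satisfiability of the concept $C$ w.r.t.\ $\langle\Sigma,\emptyset\rangle$ asks exactly for such a model with $C^{\Int}\neq\emptyset$, which is precisely DL satisfiability of $C$ w.r.t.\ $\Sigma$. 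The two notions therefore coincide, the reduction is correct, and satisfiability checking w.r.t.\ simple f-hybrid knowledge bases is \exptime-hard. Combining this with the \exptime membership from Proposition \ref{prop:simplefhybridmember} gives \exptime-completeness.

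I do not anticipate a genuine technical obstacle, since the hardness is inherited directly from the embedded DL and the upper bound is already established; the only point requiring care is the clean correspondence between DL models and f-hybrid models in the degenerate case $P=\emptyset$ (including the harmless observation that \ALC concept satisfiability is insensitive to restricting attention to countable universes, as required of f-hybrid universes). This correspondence is, however, built into the layered semantics of Definition \ref{def:semanticsfhybrid} rather than something that must be constructed, so the argument is an embedding argument rather than a proof with real combinatorial content.
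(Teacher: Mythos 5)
Your proposal is correct and follows essentially the same route as the paper: membership comes from Proposition \ref{prop:simplefhybridmember}, and \exptime-hardness is inherited from \ALC concept satisfiability w.r.t.\ an \ALC TBox by viewing $\Sigma$ as an \ALCHOQ{} knowledge base paired with an empty (hence trivially simple FoLP) rule component. You merely spell out the degenerate-case semantics that the paper leaves implicit in its one-sentence argument.
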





\section{Discussion and Related Work}
\label{sec:discussion}

We compare f-hybrid knowledge bases to r-hybrid knowledge bases from \cite{rosati-rr2008}, which extend \dlpluslog{} from \cite{rosati-kr2006} with inequalities and negated DL atoms.

In \cite{rosati-rr2008}, an r-hybrid knowledge base consists of a DL knowledge base (the specific DL is a parameter) and a disjunctive Datalog program where each rule is \emph{weakly DL-safe}:
\begin{itemize}
\item every variable in the rule appears in a positive atom in the body
of the rule (\emph{Datalog safeness}), and
\item every variable either occurs in a positive non-DL atom in the body of
the rule, or it only occurs in positive DL atoms in the body of the rule.
\end{itemize}

The semantics of r-hybrid and f-hybrid knowledge bases overlap to a large extent. The main difference is that f-hybrid knowledge bases do not make the \emph{standard names assumption}, in which basically the domain of every interpretation is the same infinitely countable set of constants.

Some key differences to note are the following:

\begin{itemize}

\item We do not require Datalog safeness neither do we require weakly DL-safeness.  Indeed, f-hybrid knowledge bases may have a rule component (i.e., the program part) that is not weakly DL-safe. Take the f-hybrid knowledge base $\langle \Sigma, P \rangle$ from Example \ref{ex:fhybrid} with $P$:
\begin{program}
\tsrule{unhappy(X)}{\naf{Father(X)}}
\end{program}
The atom $\lit{Father(X)}$ is a DL-atom such that the rule is neither Datalog safe nor weakly DL-safe. Modifying the program to
\begin{program}
\tsrule{unhappy(X)}{Human(X),\naf{Father(X)}}
\end{program}
leads to a Datalog safe program ($X$ appears in a positive atom $\lit{Human(X)}$ in the body of the rule), however, it is still not weakly DL-safe as $X$ is not appearing only in positive DL-atoms.

On the other hand, both the above rules are FoLPs and thus constitute a valid component of an f-hybrid knowledge base.

\item In the case of r-hybrid knowledge bases, due to the safeness conditions, it suffices for satisfiability checking to ground the rule component with the constants appearing explicitly in the knowledge base.\footnote{\cite{rosati-rr2008,rosati-kr2006} considers checking satisfiability of knowledge bases rather than satisfiability of predicates. However, the former can easily be reduced to the latter.} One does not have such a property for f-hybrid knowledge bases.  Consider the f-hybrid knowledge base $\langle \Sigma, P\rangle$ with  $\Sigma=\emptyset$ and the program $P$

\begin{program}
\tsrule{a(X)}{\naf{b(X)}}
\tsrule{b(0)}{}
\end{program}

This program is a FoLP, but it is not Datalog safe nor is it weakly DL-safe.  Grounding only with the constants in the program yields the projection

\begin{program}
\tsrule{a(0)}{\naf{b(0)}}
\tsrule{b(0)}{}
\end{program}

such that $a$ is not satisfiable.  However, grounding with, e.g., $\{0,x\}$, one gets
\begin{program}
\tsrule{a(0)}{\naf{b(0)}}
\tsrule{a(x)}{\naf{b(x)}}
\tsrule{b(0)}{}
\end{program}
such that $a$ is indeed satisfiable, in correspondence with one would expect.

\item Decidability for satisfiability checking  of r-hybrid knowledge bases is guaranteed if decidability of the conjunctive query containment/union of conjunctive queries containment problems is guaranteed for the DL at hand. In contrast, we relied on a translation of DLs to FoLPs for establishing decidability, and not all DLs can be translated this way; we illustrated the translation for \SHOQ{}.

\end{itemize}

Conceptual modeling using FoLPs is not restricted to simulating DL KBs: one can also translate \emph{object-role modeling (ORM)} models as sets of FoLP rules. In \cite{phdthesis-heymans}p.96 a translation of a particular ORM model to a CoLP (thus, also a FoLP) is provided. While a formal translation from ORM models to CoLPs/FoLPs is not provided there, the example translation shows how one can use CoLP satisfiability checking to verify that the various ORM object types can be populated, that some derived properties do (not) hold, etc.

MKNF$^+$ knowledge bases \cite{Motik+Rosati-Reconciling10}, consist of a DL component and a component of so-called MKNF$^+$ rules. Such MKNF$^+$ rules allow for modal operators $\mathbf{K}$ and $\mathbf{not}$ in front of atoms, but also for non-modal atoms, unlike their predecessor, hybrid MKNF knowledge bases \cite{motikrosati-2006,motik-iswc2006}; non-modal atoms can be eliminated by a transformation leading to MKNF knowledge bases. Also, unlike the rules in hybrid MKNF knowledge bases, atoms in MKNF$^+$ rules are `generalized', in the sense that they can be arbitrary first-order formulae. This allows the approach to capture languages like \emph{EQL-Lite}$(\mathcal{Q})$ \cite{calvanese:eql-lite:07}, dl-programs by \cite{eiter-ai2008} and disjunctive dl-programs by \cite{Lukasiewicz04anovel}. Other approaches to integrating ontologies and rules which are generalized by MKNF$^+$ knowledge bases are: \cite{levy96carin}, $\mathcal{AL}$-log \cite{doni-lenz-nard-scha-98}, DL-safe rules \cite{motik}, the Semantic Web Rule Language (SWRL) \cite{horrocks-www2004}, and r-hybrid knowledge bases \cite{rosati-rr2008}.

MKNF knowledge bases are in the general case undecidable. In order to regain decidability a \emph{DL-Safety} condition is imposed, together with a notion of admissibility which concerns decidability for the DL inference. As with r-hybrid knowledge bases, our f-hybrid knowledge bases do not have such a restriction of the interaction between the structural DL component and the rule component, but rely instead on the existence of an integrating framework (FoLPs under an open answer set semantics) for which we provided reasoning support in this article.


\emph{Description Logic Programs} \cite{grosof} represent the common subset of OWL-DL ontologies and Horn logic programs (programs without negation as failure or disjunction). As such, reasoning can be reduced to normal LP reasoning. In \cite{motik}, a clever translation of \SHIQD (\SHIQ with data types) combined with \emph{DL-safe rules} to disjunctive Datalog is provided. The translation relies on a translation to clauses and subsequently applying techniques from basic superposition theory. Reasoning in $\mathcal{DL}\mathit{+log}$ \cite{rosati-kr2006} and r-hybrid knowledge bases (see above) does not use a translation to other approaches, but defines a specific algorithm based on a partial grounding of the program and a test for containment of conjunctive queries over the DL knowledge bases. \emph{dl-programs} \cite{eiter-ai2008} have a more loosely coupled take on integrating DL knowledge bases and logic programs by allowing the program to query the DL knowledge base while as well having the possibility to
send (controlled) input to the DL knowledge base. Reasoning is done via a stable model computation of the logic program, interwoven with queries that are oracles to the DL part.

\emph{Description Logic Rules (DL rules)} \cite{Krotzsch+Rudolph+Hitzler-DLRules:2008} are defined as decidable fragments of SWRL. Rules have a tree-like structure
similar to the structure of FoLPs. They are positive rules with only unary and binary atoms, corresponding to concept expressions and role names in a specific DL, where some relations between the terms appearing in the atoms in a rule have to be fulfilled: (i) every term can be reached by maximum one path from another term (a term reaches another if it is the first argument of the first atom in a chain of binary atoms where the last argument of the last atom is the term reached), (ii) the first term in the head is an `initial' term, i.e., it is not reached from any other term, (iii) each non-initial node is reached from exactly one initial node. Thus, a syntactical comparison between FoLP rules and DL rules yields the following:

\begin{itemize}
\item FoLPs allow for a negation as failure operator, while DL rules do not support any type of negation
\item FoLPs allow for binary atoms conjunctions, i.e. the presence of binary atoms having identical arguments in the body of a rule, while DL rules disallow this (the presence of such atoms would imply the presence of two paths between the two terms which compose the arguments of these atoms)
\item DL rules allow for term tree depths higher than 1, i.e., for constructions like $f(X,Y),$ $ g(Y, Z), \ldots$ in the body of a rule. FoLPs allow only term trees of depth 1, but such constructions can be seen as syntactic sugar in our language as one can always simulate a rule with term tree depth of $n$ via $n$ FoLP rules with term tree depth of 1.
\item DL rules allow for unsafe rules like $f(X,Y) \gets C(X)$, or $f(X, Y) \gets g(Z, T)$, while FoLP rules do not allow for such constructions.
\end{itemize}

Although Description Logic Rules have tree-shaped bodies and are from this perspective similar to FoLPs, their semantics is not a minimal model semantics. Like Description Logics, their semantics is first-order based. Depending on the underlying DL, one can distinguish between $\mathcal{SROIQ}$ rules, $\mathcal{E}\mathcal{L}^{++}$ rules, Description Logic Program rules, and ELP rules \cite{Krotzsch+Rudolph+Hitzler-ELPRules:2008}.

The most expressive fragment, $\mathcal{SROIQ}$ rules, does not actually extend $\mathcal{SROIQ}$, as the rules can be mapped to $\mathcal{SROIQ}$. In order to ensure that such a translation is possible some more restrictions are imposed on the rule component. One of these restrictions concerns the fact that simple roles are defined also with respect to the definition of their counterpart binary atoms in the rule KB: any binary atom which is defined via a rule with more than one atom in the body corresponds to a non-simple role, and thus cannot appear in a qualified number restriction, a role disjunction axiom or a role reflexivity axiom. Obviously, there is no such restriction on FoLPs as the translation is performed in the other direction, from the DL KB to the rule KB, and thus there is no need to have such a simplicity assumption in the rule KB.

In the case of $\mathcal{E}\mathcal{L}^{++}$ rules, the DL rules are the core expressive mechanism to which the $\mathcal{E}\mathcal{L}^{++}$ KBs are reduced. No simplicity or regularity constraints are imposed on the rule KB.

Description Logic Program rules have as an underlying formalism the DLP fragment described above. So-called DL2 KBs are defined as combinations of DLP rules KBs with DLP KBs, which additionally might contain role disjunction axioms and/or role asymmetry axioms. No simplicity or regularity condition is imposed. Such a KB can be transformed into a set of function-free first-order Horn rules.

The last type of DL rules, ELP rules, can be seen as an extension of both $\mathcal{E}\mathcal{L}^{++}$ rules and Description Logic Program rules, hence their name. In \cite{Krotzsch+Rudolph+Hitzler-ELPRules:2008} a new type of DL rules, so-called extended DL rules, is introduced. This extended type of rules allows for `role conjunctions' in rule bodies, i.e., constructions like $f(X, Y), g(X, Y)$ as long as both $f$ and $g$ are simple roles, or the presence of binary atoms $f(X, X)$ in the rule bodies as long as $f$ is simple. Also, a relaxed restriction on simple roles\footnote{The restriction is relaxed as compared to the restriction on $\mathcal{SROIQ}$ rules; there is no such restriction for general DL rules.} is introduced: only certain role chains are omitted from DL rules with simple roles in the head, rules like $f(X, Y) \gets a(X) \wedge b(Y)$ and $f(X, Y) \gets g(X, Y) \wedge D(Y)$ not precluding $f$  to be a simple role. Note that rules of the first type are not allowed by FoLPs.

The focus in DL rules is on extending DLs with rule bases which are as
expressive as possible while at the same time preserving the
computational properties of the initial DL. This leads sometimes to
rather intricate syntactical characterizations of different fragments.
Syntactically, some of these fragments allow for more complex rule
shapes than FoLP rules, but FoLPs distinguish themselves through the
fact that they have a \emph{negation as failure} operator and adopt a minimal
model semantics, thus adding a different type of expressivity to such
combinations of rules and ontologies, which is not specific to the DL
world. This seems to come at the price of reasoning complexity (note
that we do not have a tight characterization of FoLPs).

There are several extensions of DL which adopt a minimal-style semantics like autoepistemic \cite{donini+nardi+rosati-DLMinimalKnowledgeandNAF:02}, default \cite{baader+hollunder-DefaultsTerminRepresSys:95} and circumscriptive DL \cite{bonatti+lutz+wolter-NonMonDLCirc:06,Grimm+Hitzler-CircumscriptiveOWL:2008,grimm+hitzler-PreferentialTableauxCalculusALCO:09}. The first two are restricted to reasoning with explicitly named individuals, while \cite{Grimm+Hitzler-CircumscriptiveOWL:2008,grimm+hitzler-PreferentialTableauxCalculusALCO:09} allow for defeats to be based on the existence of unknown individuals. A tableau-based method for reasoning with the DL $\mathcal{ALCO}$ in the circumscriptive case has been introduced in \cite{Grimm+Hitzler-ReasoningCircumscriptiveALCO:2007}. A special preference clash condition is introduced there to distinguish between minimal and
non-minimal models which is based on constructing a new classical DL knowledge base and checking its satisfiability.

Datalog$^\pm$ \cite{Calì09FrameworkTractable,Calì_Gottlob_Lukasiewicz_2009UnifyingApproach} is an extension of Datalog which can simulate some DLs from the DL-Lite family \cite{dllite}. The extension consists in allowing a special type of rules with existentially quantified variables in the head, called tuple generating dependencies (TGDs). Note that our free rules are different from TGDs, as they allow for universally quantified variables which do not appear in the body of the rule to appear in the head.

The formalism is undecidable in the general case. Like in the case of OASP, several syntactical restrictions have been imposed on the shape of TGDs in order to regain decidability. Two such restrictions are: (1) every rule should have a guard, an atom which contains all variables in the rule body, giving rise to \emph{guarded Datalog$^\pm$}, and (2) every rule should have a singleton body atom, giving rise to \emph{linear Datalog$^\pm$}. The guardedness condition has been relaxed to \emph{weakly-guardedness}, where the weak guard has to contain only the variables in the body that appear in so-called affected positions, positions where newly invented values can appear during reasoning \cite{CaliGK08TamingChase}. Reasoning relies on a proof technique from database theory, the chase algorithm, which repairs databases according to the set of dependencies.

Some further generalizations to the guarded fragment of Datalog$^{\pm}$ are so-called \emph{sticky sets} of TGDs \cite{Cali+Gottlob+Pieris:AdvancedProcessingOntologicalQueries10}, \emph{weakly-sticky} sets of TGDS, and \emph{sticky-join} sets of TGDs \cite{2010_Cali_Query-Answering-Non-Guarded} which generalize both sticky sets and linear TGDs. All these fragments are defined by imposing restrictions on multiple occurrences of variables in rule bodies. The syntactical restrictions on rules bodies are orthogonal to the ones we imposed for achieving decidability on FoLPs: neither Datalog$^{\pm}$ rules are enforced to have a tree-shape like FoLPs, nor variables in FoLP rules have to fulfill the conditions required for the different sets of TGDs to belong to one of the previously mentioned decidable fragments of Datalog$^{\pm}$. TGDs do not contain negation. However, so-called stratified normal TGDs have been introduced, which are TGDs whose body atoms can appear in a negated form together with a semantics in terms of canonical models. FoLPs support full negation as failure (under the stable models semantics).

In the area of proof systems for Answer Set Programming, \cite{lin+you-AbductionLPNewDef:02} describes a goal rewrite system for brave reasoning under the stable model semantics which is sound and complete only for partial stable models. If the program has no odd loops (cycles in the predicate dependency graph of the program), its partial stable models and its stable models coincide. Note that such programs cannot have constraints as they are represented using rules in which a predicate depends negatively on itself. The problem with such rules is that they can render the program inconsistent, and thus, the rewriting, even if it is successful, is no longer valid. In our approach, we overcome this problem by going beyond the dependencies generated by the predicate checked to be satisfiable: we construct a complete answer set by taking care that the content of every node in the completion structure is saturated.
As concerns termination, \cite{lin+you-AbductionLPNewDef:02} distinguishes between positive, negative, odd, and even loops and deal with them accordingly. In terms of our approach, this amounts to checking for cycles in the dependency graph $G$ and identifying inconsistencies. However, for achieving termination, \cite{lin+you-AbductionLPNewDef:02} proposes to consider only ``domain restricted programs'', which can be instantiated only on domain predicates over variables which do not appear in the head. In our case, we do not have such a restriction: there are FoLPs (actually CoLPs) in which no constant appears and which still have infinite groundings. As such, we need the more complicated blocking mechanism for ensuring that there are no atoms with infinite justifications in the open answer set.

A resolution-based calculus for credulous reasoning in ASP which is sound for ground order-consistent programs and complete for ground finite recursive programs is introduced in \cite{bonatti+pontelli+son:credulousASP-08}. The calculus is extended to the nonground case, where it is proved to be sound for programs whose ground versions are order consistent, and complete for finitely recursive, odd-cycle free programs. In particular, the calculus is not sound for programs which have odd cycles, which are needed for simulating constraints. An extension for ground programs with constraints is provided, but no general solution is provided for the non-ground case. As already mentioned we have no problems in dealing with such constraints. Also the calculus is not complete for programs which are not finitely recursive, i.e., for programs for which there is at least a ground atom which depends on an infinite number of other ground atoms (w.r.t. the atom dependency graph of the grounded program). Our approach deals with programs which may not be finitely recursive: consider a FoLP which contains the rule $\prule{a(X)}{f(X, Y), a (Y)}$; grounding the program with an infinite universe leads to an infinite path in its atom dependency graph of the form $a(x_1), a(x_2), \ldots$.

A formalism related to FoLPs is $\mathbb{FDNC}$ \cite{simkus}. $\mathbb{FDNC}$ is an extension of ASP with function symbols where rules are syntactically restricted in order to maintain decidability. While the syntactical restriction is similar to the one imposed on FoLP rules, predicates having arity maximum two, and the terms in a binary literal can be seen as arcs in a forest (imposing the Forest Model Property), the direction of deduction is different: while for FoLPs, all binary literals in a rule body have an identical first term which is also the term which appears in the head, for $\mathbb{FDNC}$ (with the exception of one rule type) the second term is the one which also appears in the head. $\mathbb{FDNC}$ rules are required to be safe unlike FoLP ones. The complexity for standard reasoning tasks for $\mathbb{FDNC}$ is \exptime-complete and worst-case optimal algorithms are provided.

\cite{gebser-iclp06} introduces a system based on tableau methods for Answer Set Programming (ASP). Unlike in our case, where a clash-free complete completion structure represents an open answer set which satisfies a certain predicate, a branch in a tableau as described in \cite{gebser-iclp06} corresponds to a successful/unsuccessful computation of an answer set and an entire tableau represents a traversal of the search space. Note that in the case of FoLPs a computation of all models is not feasible as their number may be infinite. Also, the tableau calculi in \cite{gebser-iclp06} addresses only the propositional ASP case, as any ASP program can be grounded using only the constants present in the program, while in our case grounding is performed dynamically, introducing new individuals when needed.

\cite{Lierler-iclp08} describes an extension of an abstract framework for executing DPLL which computes supported models and stable models of a ground logical program. The framework employs a graph structure for encoding the different computation paths. Models are constructed in a bottom-up fashion: transition rules prescribe how new atoms are derived as being part/not being part of the model based on existing support/counter-support for such atoms. As such, there are similarities between these transition rules and our expansion rules which justify the presence/absence of unary/binary atoms in an open answer set. However, our expansion rules also have to introduce new elements in the domain and to perform grounding, and thus, they become much more complex. The abstract DPLL framework has also a nondeterministic choice rule which assigns the value true to a certain literal which is otherwise not constrained. This rule is similar in a sense with our Choose unary/binary expansion rules: while our approach is a top-down approach, and we are not interested in constructing models per se, it turned out to be necessary to construct a whole model for ensuring soundness of the approach.

\section{Conclusions and Outlook}
\label{sec:conclusions}

We introduced FoLPs, a logic programming paradigm suitable for integrating ontologies and rules, and provided a sound, complete, and terminating algorithm for satisfiability checking that runs in double exponential time. We showed how to use FoLPs as the underlying integration vehicle for reasoning with f-hybrid knowledge bases, a non-monotonic framework that integrates \SHOQ{} with FoLPs, without having to resort to (weakly) DL-safeness. We also introduced a restricted variant of FoLPs, simple FoLPs, which allow integration of \ALCHOQ{} knowledge bases with themselves and provided a sound, complete, and terminating algorithm for satisfiability checking that runs in exponential time.

From a theoretical perspective, the combination of stable model semantics and open domains posed specific challenges for our tableau-based algorithm: among these, were ensuring that every atom in the constructed model is finitely justified, and that the constructed model is part of an actual open answer set. In dealing with this, our approach differentiates from other existing approaches in the literature.

We are currently looking into extensions of FoLPs (and of the tableau algorithm) which would allow one to simulate DLs richer than \SHOQ{}, in the direction of $\mathcal{SROIQ}(\mathbf{D})$, the DL underlying OWL-DL\footnote{\url{http://www.w3.org/2007/OWL}} in OWL 2.

\bibliographystyle{acmtrans}


\newpage

\appendix
\begin{center}
    {\bf APPENDIX}
  \end{center}

\section{Additional Preliminaries}

A \emph{labeled tree}\index{labeled tree} is a pair $(T,t)$ where $T$ is a tree and $t: T \to \Sigma$ is a labeling function; sometimes we will identify the tree $(T,t)$ with $t$. For a labeled tree $t:T\to\Sigma$, the subtree of $t$ at $x\in T$ is $t[x]:T[x]\to \Sigma$ such that $t[x](y) = t(y)$ for $y\in T[x]$.


A labeled forest is a tuple $(F, f)$ where $F$ is a forest and $f:N_F \to \Sigma$ is a labeling function; sometimes we will identify the forest $(F,f)$ with $f$. A labeled forest $(F,f)$, with $F=\{T_c \mid c \in C\}$, induces a set of labeled trees $\{(T_c,t_c) \mid c \in C\}$, with $t_c: T_c \to \Sigma$ defined as follows: $t_c(x)=f(x)$, for any $x \in T_c$. Figure \ref{figure:forest} depicts a labeled forest which contains two labeled trees $t_a$ and $t_b$ (their roots are $a$ and $b$, respectively).

\vspace{10mm}
\begin{center}
\begin{figure}[htbp]
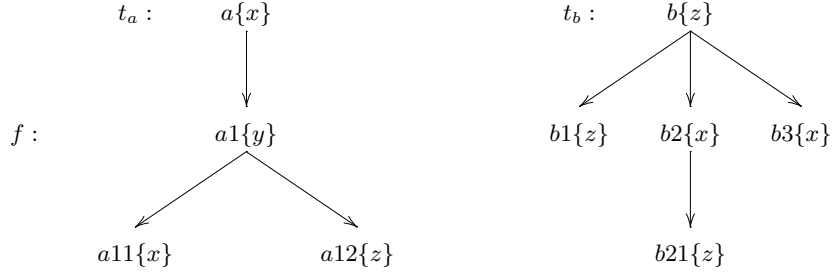

\Treek[1]{3}{&\K{$t_a:$} &\K{$a\{x\}$} \AR{d} &&& \K{$t_b:$}& \K{$b\{z\}$}\AR{d}\AR{dl}\AR{dr}&\\
\K{$f:$}&&\K{$a1\{y\}$} \AR{dl} \AR{dr} &&&\K{$b1\{z\}$}&\K{$b2\{x\}$}\AR{d}&\K{$b3\{x\}$}\\
&\K{$a11\{x\}$}&&\K{$a12\{z\}$}&&&\K{$b21\{z\}$}\\
}
\vspace{3mm}
\caption{A Simple Labeled Forest}
\label{figure:forest}
\end{figure}
\end{center}

A labeled extended forest is a tuple $\langle \EF, \mathit{ef} \rangle$ where $\EF$ is an extended forest and $\ef:N_{EF} \to \Sigma$ is a labeling function; sometimes we will identify the extended forest $\langle \EF, \ef\rangle$ with $\ef$. A labeled extended forest can be seen as a set of labeled extended trees, where a labeled extended tree is a tuple $(T^{\ef}, t^{\ef})$, where $T^{\ef}$ is an extended tree and $t^{\ef}: T^{\ef} \to \Sigma$ is a labeling function defined such that $t^{\ef}(x)=\ef(x)$, for $x \in T^{\ef}$. For a labeled extended tree $t^{\ef}:T^{\ef}\to\Sigma$, the subtree of $t^{\ef}$ at $x\in T$ is $t^{\ef}[x]:T^{\ef}[x]\to \Sigma$ such that $t^{\ef}[x](y) = t^{\ef}(y)$ for $y\in T^{\ef}[x]$.

Figure \ref{figure:labextforest} depicts an extended labeled forest (a labeled version of the extended forest from Figure \ref{figure:extforest}).

\begin{center}
\begin{figure}[htbp]
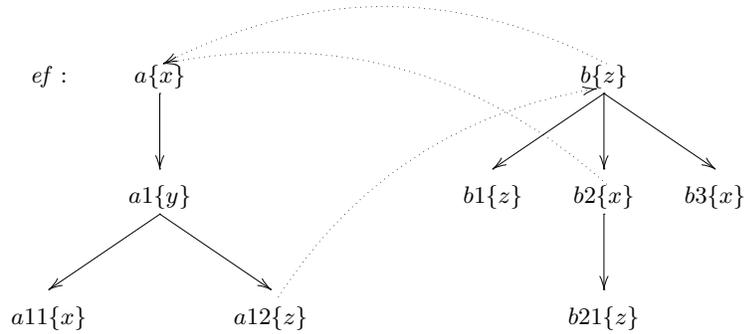

\vspace{5mm}
\Treek[1]{3}{\K{$\ef:$} &\K{$a\{x\}$} \AR{d} &&&& \K{$b\{z\}$}\AR{d}\AR{dl}\AR{dr}\UpLinkC[.>]{0,-4}&\\
&\K{$a1\{y\}$} \AR{dl} \AR{dr} &&&\K{$b1\{z\}$}&\K{$b2\{x\}$}\AR{d}\UpLinkC[.>]{-1,-4}&\K{$b3\{x\}$}\\
\K{$a11\{x\}$}&&\K{$a12\{z\}$}\LinkA[.>]{-2,3}&&&\K{$b21\{z\}$}\\
}
\vspace{3mm}
\caption{A labeled extended forest}
\label{figure:labextforest}
\end{figure}
\end{center}

We introduce the operation of replacing in a labeled extended forest $\ef$ an extended subtree $t^{\ef}[x]$ with another extended subtree $t^{\ef}[y]$, where both $x$ and $y$ are from $N_{\EF}$, and denote this operation with $replace_{\ef}(x, y)$. Figure \ref{fig:replaceoperation} describes the result of applying the replace operation on the extended forest from Figure \ref{figure:extforest} with two different sets of operators. In the first case, $t^{\ef}_b[b2]$ is replaced with $t^{\ef}_a[a1]$, while in the second case $t^{\ef}_a[a1]$ is replaced with $t^{\ef}_a[a12]$. Note that the names of nodes of the subtree which is replaced are not changed with the names of the nodes from the replacing subtree, but new names are generated for the new nodes in concordance with the naming scheme for nodes of that tree. Also, observe how in the first replacement one of the 'extra' arcs of $t_b$, $(b2,a)$, is dropped (it was part of the replaced extended subtree) and a new 'extra' arc is introduced, $(b22,b)$, which mirrors the arc $(a12,b)$ from the replacing extending subtree. Similarly, in the second transformation, $(a12,b)$ is dropped and $(a1,b)$ is introduced.

\begin{center}
\begin{figure}[htbp]
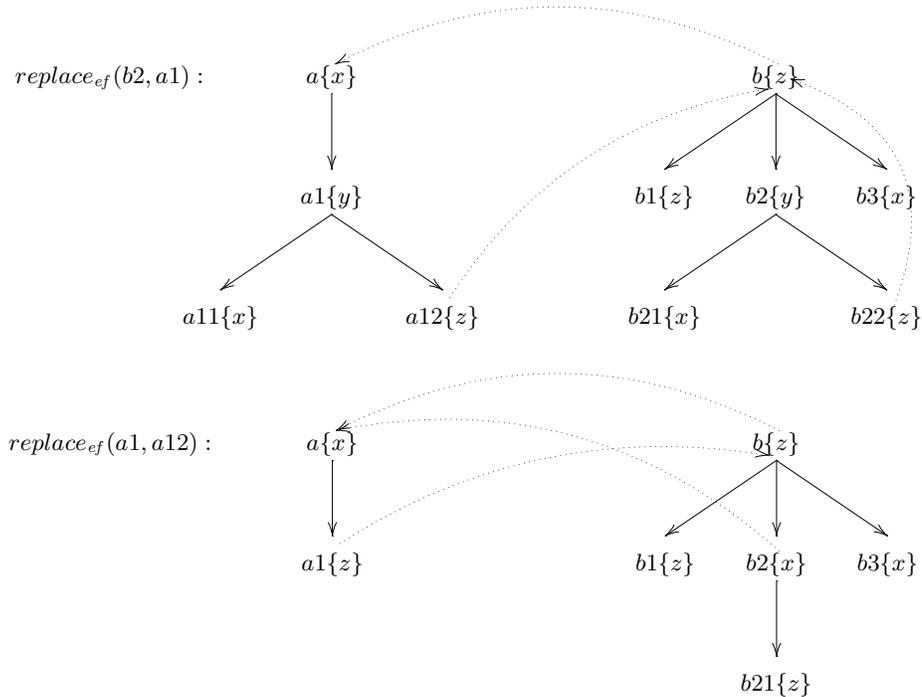

\vspace{5mm}
\Treek[1]{3}{\K{$replace_{\ef}(b2,a1):$} &&\K{$a\{x\}$} \AR{d} &&&& \K{$b\{z\}$}\AR{d}\AR{dl}\AR{dr}\UpLinkC[.>]{0,-4}&\\
&&\K{$a1\{y\}$} \AR{dl} \AR{dr} &&&\K{$b1\{z\}$}&\K{$b2\{y\}$}\AR{dr}\AR{dl}&\K{$b3\{x\}$}\\
&\K{$a11\{x\}$}&&\K{$a12\{z\}$}\LinkA[.>]{-2,3}&&\K{$b21\{x\}$}&&\K{$b22\{z\}$}\UpLinkF[.>]{-2,-1}}\\
\vspace{10mm}
\Treek[1]{3}{\K{$replace_{\ef}(a1,a12):$} &&\K{$a\{x\}$} \AR{d} &&&& \K{$b\{z\}$}\AR{d}\AR{dl}\AR{dr}\UpLinkC[.>]{0,-4}&\\
&&\K{$a1\{z\}$}\LinkA[.>]{-1,4} &&&\K{$b1\{z\}$}&\K{$b2\{x\}$}\AR{d}\UpLinkC[.>]{-1,-4}&\K{$b3\{x\}$}\\
&&&&&&\K{$b21\{z\}$}\\
}
\vspace{3mm}
\caption{Two applications of the replace operator on $\ef$}
\label{fig:replaceoperation}
\end{figure}
\end{center}

\section{Proofs}

\subsection{Soundness Proof}

\begin{proof}
From a clash-free complete completion structure for $p$ w.r.t. $P$, we  construct an open interpretation, and show that this interpretation is an open answer set of
$P$ that satisfies $p$. Let $\langle \EF,$ $\ct,$ $\st,$ $G \rangle$ be such a clash-free complete completion structure with $EF=\langle F,\ES \rangle$ the extended forest and $G=(V,A)$ the corresponding dependency graph and let $\bl$ be the set of blocking nodes corresponding to the completion.
\begin{enumerate}

\item \emph{Construction of open interpretation}.

We construct a new graph $\Gexte=(\Vexte,\Aexte)$ by extending $G$ in the following way: first, we set $\Vexte=V$ and $\Aexte=A$, and then for every pair $(x,y)\in \bl$ do the following:

\begin{itemize}
 \item (a) for every $p$ such that $p(x) \in V$, add $p(y)$ to $\Vexte$: $\Vexte=\Vexte \cup \{p(y)\}$;
\item  (b) for every $f$ and $z$ such that $f(x,z) \in V$, add $f(y,z)$ to $\Vexte$: $\Vexte=\Vexte \cup \{f(y,z)\}$;
\item (c) for every $p$, $q$ such that $(p(x), q(x)) \in \Aexte$,  add $(p(y), q(y))$ to $\Aexte$: $\Aexte=\Aexte \cup \{(p(y), q(y))\}$;
\item (d) for every $p$, $q$, $z$ such that $(p(x), q(z)) \in \Aexte$,  and $z \neq x$ add $(p(y), q(z))$ to $\Aexte$: $\Aexte=\Aexte \cup \{(p(y), q(z))\}$;
\item (e) for every $p$, $f$, $z$ such that $(p(x), f(x,z)) \in \Aexte$,  add $(p(y), f(y,z))$ to $\Aexte$: $\Aexte=\Aexte \cup \{(p(y), f(y,z))\}$;
\item (f) for every $f$, $q$, $z$ such that $(f(x,z),q(x)) \in \Aexte$,  add $(f(y,z),q(y))$ to $\Aexte$: $\Aexte=\Aexte \cup \{(f(y,z), q(y))\}$;
\item (g) for every $f$, $q$, $z$ such that $(f(x,z),q(z)) \in \Aexte$,  add $(f(y,z),q(z))$ to $\Aexte$: $\Aexte=\Aexte \cup \{(f(y,z), q(z))\}$;
\item (h) for every $f$, $g$, $z$ such that $(f(x,z),g(x,z)) \in \Aexte$,  add $(f(y,z),g(y,z))$ to $\Aexte$: : $\Aexte=\Aexte \cup \{(f(y,z), g(y,z))\}$;
\end{itemize}

Basically, this amounts to copying the content of the blocking node into the content of the blocked node, and also all the connections from/within the blocking node as connections from/within the blocked node (or, in other words, the content of the blocked node is identical with the content of the blocking node and it is justified in a similar way).

Let there be an open interpretation $(U,M)$, with $U=N_{\EF}$, i.e., the universe is the set of nodes in the extended forest, and $M=\Vexte$, i.e., the interpretation corresponds to the set of nodes in the extended graph.

\item \emph{$M$ is a model of $P_U^M$}. All free rules are trivially satisfied.

Take a ground unary rule: $r^{\prime}:\prule{a(x)}{\posi{\beta}(x), (\gamma_m^{+}(x,y_m), \delta_m^{+}(y_m))_{1\leq m\leq k}}$ from $P_{U}^{M}$ originating from $r:a(s)$ $\gets \beta(s),$ $(\gamma_m(s,t_m),$ $\delta_m(t_m))_{1\leq m \leq k},$ $\psi$, with $\nega{\beta}(x) \nsubseteq M$, for all $1\leq m\leq k$: $\nega{\gamma_m}(x,y_m) \nsubseteq M$ and $\nega{\delta_m}(y_m) \nsubseteq M$, and for all $t_i \neq t_j \in \psi$: $y_i \neq y_j$. Assume that $M \models \posi{\beta}(x) \cup \bigcup_{1\leq m\leq k} \gamma_m^{+}(x,y_m)\cup \bigcup_{1\leq m\leq k} \delta_m^{+}(y_m)$ (together with the assumptions about the negative part of the rule, this amounts to $M \models \beta(x) \cup \bigcup_{1\leq m<\leq k}\gamma_m(x,y_m) \cup \bigcup_{1\leq m \leq k}\delta_m(y_m) \cup \psi$) and $a(x) \notin M$ (the rule is not satisfied).

Depending on $x$ there are two cases:
\begin{itemize}
\item $x$ is not a blocked node. Then $\naf a \in \ct(x)$, $x$ is saturated, and no expansion rules can be further applied to $\naf a$. This means that for every ground rule derived from a rule $r \in P_a$ with head $a(x)$, the \textit{expand unary negative} rule has been applied. Such a rule is $r^\prime$. The application of the \textit{expand unary negative} rule to $\naf a \in \ct(x)$ and $r^\prime$ leads to one of the following situations:
\begin{itemize}
\item there is a unary predicate symbol $\pm q \in \beta$, such that $\mp q \in \ct(x)$ (the result of $update(\naf a(x),\mp q,x)$), or in other words, $\mp q(x) \in M$. This contradicts with $M \models \beta(x)$.
\item there are two successors of $x$, $y_i$ and $y_j$ such that $y_i = y_j$ and $t_i \neq t_j \in \psi$. This contradicts the assumption that for all $t_i \neq t_j \in \psi$: $y_i \neq y_j$.
\item for some $1 \leq m \leq k$, there is a binary/unary predicate symbol $\pm f \in \gamma_m$/$\pm q \in \delta_m$ such that $\mp f \in \ct(x,y_m)$/$\mp q \in \ct(y_m)$ (the result of $update($ $\naf a(x),$ $\mp f,(x,y_m))$ / $update(\naf a(x),\mp q,y_m)$), or in other words, $\mp f(x,y_m)$ $ \in $ $M/$ $\mp q$ $(y_m) \in M$. This contradicts with $M \models \gamma_m(x,y_m)$/$M \models \delta_m(y_m)$.
\end{itemize}
\item $x$ is a blocked node. Let $y$ be such that $(y,x) \in \bl$: by replacing $x$ with $y$ in $r^{\prime}$, one obtains a ground rule $r^{''}$ which again should not be satisfied because due to the construction of $M$, $M \models \beta(x) \cup \bigcup_{1\leq m<\leq k}\gamma_m(x,y_m) \cup \bigcup_{1\leq m \leq k}\delta_m(y_m) \cup \psi$ implies $M \models \beta(y) \cup \bigcup_{1\leq m<\leq k}\gamma_m(y,y_m)$ $\cup \bigcup_{1\leq m \leq k}\delta_m(y_m)$ $ \cup$ $ \psi$ and $a(x)\notin M$ implies $a(y) \notin M$. Thus, this case is reduced to the previous one.
\end{itemize}
Both cases lead to a contradiction, thus the original assumption that rule $r^{\prime}$ is not satisfied by $M$ was false. Thus, every unary rule is satisfied by $M$.

The proof for the satisfiability of binary rules is similar.

\item \emph{$M$ is a minimal model of $P_U^M$}. Before proceeding with the actual proof we introduce a notation and a lemma which will prove useful in the following. Let $\nEF$ be the directed graph $(N_{\EF}, \nAEF)$ which has as nodes all the nodes from $\EF$ and as arcs all the arcs of $\EF$ plus some 'extra' arcs which point from blocked nodes to successors of corresponding blocking nodes $\nAEF=A_{\EF} \cup \{(y,z) \mid \exists x \mbox{ s. t. } (x,y) \in \bl \wedge z \in succ_{\EF}(x)\}$. The new graph captures in a more accurate way the structure of $M$: blocked nodes are connected to successors of the corresponding blocking nodes, as their contents is justified similarly to the content of the blocking nodes. Figure \ref{figure:newextendedforest} exemplifies the construction of $\nEF$ from an extended forest $\EF$ by addition of extra arcs: $(x,y)$ is a blocking pair, $z_1, \ldots, z_n$, and $b$ are the successors of $x$, so extra arcs from $y$ to each of these successors are added (the dotted arrows). Among the successors of $x$ the one which is on the same path with $y$ is singled out and denoted with $z$.


\begin{center}
\begin{figure}[htbp]
\Treek[0]{3}{ & & & \K{$a$} \ARdashc{d} \ARdashc{drr} \ARdashc{dll}&&&&&&\K{$b$} \ARdashc{d} \ARdashc{drr}\ARdashc{dll}\\
&&&\K{$x$} \ARc{d} \ARc{drr} \ARc{dll} \UpLinkG{-1,6}&&&&\UpLinkH{-1,-4}&&&&\\
&\K{$z_1$} \ARdashc{d} \ARdashc{dl}& \K{$\ldots$} & \K{$z$}
\ARdash{dd} \ARdashc{dl} \ARdashc{dr}
& \K{$\ldots$} &
\K{$z_n$}
\ARdashc{d} \ARdashc{dr} &&\\
&&&&&&\\
&&&\K{$y$} \UpLinkCD[.>]{-2,-2}\UpLinkCD[.>]{-2,0}\UpLinkCD[.>]{-2,2}\UpLinkG[.>]{-4,6} \\
}
\caption{Constructing $\nEF$: $(x,y)$ is a blocking pair}
\label{figure:newextendedforest}
\end{figure}
\end{center}

\begin{lemma}\label{lemma:connection}
For every $x, y \in N_{\EF}$, if there is a path $Pt_1=(p(x), \ldots, l_1) \in paths_G / paths_{G_{ext}}$, with $l_1=q(y)$ for some $q \in \upreds{P}$ or $l_1=g(y,z)$ for some $g \in \bpreds{P}$, and $x \neq y$, then there is a path $Pt_2=(x, \ldots, y) \in paths_{\EF}/ paths_{\nEF}$ such that for every $z \in Pt_2$ there is a unary atom $l_2 \in Pt_1$ with $args(l_2)=z$.
\end{lemma}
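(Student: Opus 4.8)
The plan is to prove the lemma by first isolating a \emph{single-step} property of the arcs of the dependency graph and then chaining these steps along the unary atoms of $Pt_1$. For the $G/\EF$ version, I would inspect exactly how arcs are created, namely through the $\update$ operation inside the expansion rules. Reading off the \emph{expand unary positive} rule, every arc leaving a unary atom $p(z)$ goes either to a unary atom with argument $z$ (the local $\beta$ part), to a binary atom $g(z,w)$ with $w\in succ_{\EF}(z)$ (the $\gamma_m$ part), or to a unary atom with argument $w\in succ_{\EF}(z)$ (the $\delta_m$ part). Reading off the \emph{expand binary positive} rule, every arc leaving a binary atom $f(z,w)$ goes to a unary atom with argument $z$, to a binary atom with arguments $(z,w)$, or to a unary atom with argument $w$. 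Thus each arc either keeps the principal node fixed or advances it along an arc of $\EF$.

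Next I would track the subsequence of unary atoms of $Pt_1$, say $p(x)=u_0(v_0), u_1(v_1),\ldots,u_t(v_t)$. The single-step property shows that between two consecutive unary atoms there is at most a maximal run of binary atoms, all of which (by the binary-to-binary case) share a fixed argument pair $(v_i,w)$ with $w\in succ_{\EF}(v_i)$; hence the next unary argument $v_{i+1}$ is either $v_i$ or $w$, so in every case $v_{i+1}=v_i$ or $v_{i+1}\in succ_{\EF}(v_i)$. Consequently $v_0,v_1,\ldots,v_t$ is a walk in $\EF$ that at each step either stays put or moves to a successor; deleting consecutive repetitions yields a genuine element $Pt_2$ of $paths_{\EF}$ from $x$ to $v_t$ whose nodes are all among the $v_i$, each therefore being the argument of the unary atom $u_i(v_i)\in Pt_1$. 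It then remains to identify $v_t$ with $y$: if $l_1=q(y)$ is unary then $v_t=y$ directly; if $l_1=g(y,z)$ is binary, the terminal run of binary atoms leading to $l_1$ all carry the argument pair $(y,z)$ and, by the unary-to-binary case, emanate from a unary atom with argument $y$, so again $v_t=y$ (the degenerate possibility $x=y$ being excluded by hypothesis).

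For the $\Gexte/\nEF$ version I would verify that the very same single-step property survives the extension. The extra arcs of $\Gexte$ added in items (c)--(h) of the soundness construction are obtained by relabelling a blocking node $x$ with the blocked node $y$; each such arc connects atoms whose principal nodes are either equal, or are $y$ and a successor $z\in succ_{\EF}(x)$, and precisely those pairs $(y,z)$ are the extra arcs placed in $\nEF$. Hence, reading $succ$ with respect to $\nEF$ instead of $\EF$, both the step analysis and the chaining argument go through verbatim.

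The main obstacle is the bookkeeping around binary atoms: since the lemma only offers unary atoms as witnesses for the nodes of $Pt_2$, one must show that any maximal block of binary atoms interposed in $Pt_1$ neither loses track of the current node nor advances it by more than one $\EF$-arc, which is exactly what the argument-preserving binary-to-binary case and the endpoint-projecting binary-to-unary case guarantee. The remaining delicate point is pinning down $v_t=y$ when the path ends in a binary atom, which is handled by the terminal-run analysis above; everything else is a routine induction on the length of $Pt_1$ over the single-step property.
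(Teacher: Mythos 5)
Your proof is correct and follows essentially the same route as the paper's: both extract the sequence of arguments of the unary atoms along $Pt_1$, use the local structure of the arcs created by $\update$ (unary-to-unary/binary at the same node or a successor, binary-to-binary preserving the argument pair) to show consecutive unary arguments are joined by arcs of $\EF$ (resp.\ $\nEF$), and rule out the path terminating at a binary atom $g(y,z)$ with a last unary argument different from $y$. Your version is somewhat more explicit about the single-step case analysis, the compression of repeated arguments, and the $\Gexte/\nEF$ variant, but the underlying argument is the same.
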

\begin{proof}

Let $S=(x_1=x, x_2, \ldots, x_n)$ be a tuple of nodes from $\EF/ \nEF$ constructed in the following way: consider each element $l$ of $Pt_1$ at a time: if $args(l)=y$ and $y$ is not already part of the tuple, add $y$ to the tuple. We show that $S \in paths_{\EF} / paths_{\nEF}$ and furthermore that $x_n=y$.

For every two consecutive elements of $S$, $x_i$ and $x_{i+1}$, with $1 \leq i <n$, there must be two unary atoms $l'$ and $l''$ in $Pt$, with $args(l')=x_i$ and $args(l'')=x_{i+1}$, respectively, such that there is no other unary atom $l$ in the sub-path of $Pt_1$: $(l', \ldots, l'')$. It is easy to see that such a sub-path has the form: $(l'=r(x_i), f_1(x_i,x_{i+1}), \ldots, f_m(x_i,x_{i+1}), l''=s(x_{i+1}))$, with $r, s \in \upreds(P)$, and $f_1, \ldots f_m \in \bpreds(P)$, and thus $(x_i, x_{i+1}) \in A /  \nAEF$ for every $1 \leq i <n$: $(x_1, \ldots, x_n)$ is a path in $\EF/ \nEF$.

To see that $x_n=y$, consider the opposite: $x_n \neq y$. Then there must be a unary atom $l=r(x_n)$ in $Pt_1$ with $args(l)=x_n$ such that there is no other unary atom in the sub-path of $Pt_1$: $(r(x_n), \ldots, g(y,z))$. This would imply that the sub-path has the form $r(x_n), f_1(x_n,t), \ldots, f_m(x_n,t), g(y,z)$, where $t$ is some successor of $x_n$ in $\EF/ \nEF$: $(x_n, t) \in A / \nAEF$. But there is no arc of the form $(f_m(x_n,t), g(y,z))$ in $A / \nAEF$ with $x_n \neq y$, so we obtain a contradiction.


\end{proof}

Now we can proceed to the actual proof of statement. Assume there is a model $M^{\prime} \subset M$ of $Q=P_U^M$. Then $\Exists{l_1\in M}{l_1 \notin M^{\prime}}$. Take a rule $r_1 \in Q$ of the form $\prule{l_1}{\beta_1}$ with $M \models \beta_1$; note that such a rule always exists by construction of $M$ and expansion rule (i) . If $M^{\prime} \models \beta_1$, then $M^{\prime} \models l_1$ (as $M^{\prime}$ is a model), a contradiction. Thus, $M^\prime \not\models \beta_1$ such that $\Exists{l_2 \in \beta_1}{l_2 \notin M^\prime}$. Continuing with the same line of reasoning, one obtains an infinite sequence $\{l_1, l_2,\ldots\}$ with $(l_i \in M)_{1 \leq i}$ and $(l_i \notin M^\prime)_{1 \leq i}$. $M$ is finite (the complete clash-free completion structure has been constructed in a finite number of steps, and when constructing $M$($\Vexte$) we added only a finite number of atoms to the ones already existing in $V$), thus there must be $1 \leq (i,j)$, $i \neq j$, such that $l_i=l_j$. We observe that $(l_i, l_{i+1})_{1 \leq i} \in \Eexte$ by construction of $\Eexte$ and expansion rule (i), so our assumption leads to the existence of a cycle in $\Gexte$.


\begin{claim}\label{claim:directcycle}
Let $C=(l_1, l_2, \ldots, l_n=l_1)$  be a cycle in $\Gexte$. If one of the following holds:
\begin{itemize}
 \item (i) there is no unary atom in $C$ and for every $l_i=f_i(x_i,y_i)$, $1 \leq i \leq n$, $x_i$ is not blocked
 \item (ii) there is at least one unary atom in $C$ and for every unary atom in $C$: $l_{j}$ with $args(l_j)=y_j$, $y_j$ is not a blocked node in $CS$, $1 \leq j \leq n$.
\end{itemize}
then  $C$ is a cycle in $G$.
\end{claim}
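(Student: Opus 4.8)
The plan is to show that under either hypothesis every arc of the cycle $C$ already belongs to the original arc set $A$, so that $C$ is literally a cycle in $G$. To this end I would first isolate the single feature shared by all the arcs that were added when passing from $G$ to $\Gexte$. Inspecting the construction clauses (c)--(h), one sees that \emph{every} arc in $\Aexte \setminus A$ has as its source either a unary atom $p(y)$ or a binary atom $f(y,z)$ whose (first) argument $y$ is a blocked node: clauses (c),(d),(e) produce arcs out of $p(y)$ and clauses (f),(g),(h) produce arcs out of $f(y,z)$, in each case with $y$ blocked. Dually, among these new arcs the only ones whose \emph{target} is a binary atom with a blocked first argument are (e) (from a unary $p(y)$) and (h) (from a binary $g(y,z)$ on the same pair $(y,z)$).

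A second observation, which I would record as a short remark, is that the original graph $G$ contains no binary atom $f(y,z)$ with $y$ blocked. Indeed, such an atom is only ever put into $V$ while \emph{expanding} the node $y$ (through the update operations inside the expand-unary/binary positive rules), and no expansion rule is ever applied to a blocked node. Consequently every binary atom with a blocked first argument lives only in $\Vexte \setminus V$ (it is introduced by clause (b)), and hence \emph{all} arcs of $\Gexte$ incident to it are new arcs, whose incoming instances can only be of the forms (e) or (h) by the remark above.

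With these in hand, case (i) is immediate: every $l_i$ is a binary atom $f_i(x_i,y_i)$ with $x_i$ not blocked, so the source of each arc of $C$ is a binary atom with a non-blocked first argument, which by the first observation rules out all of (c)--(h); hence every arc of $C$ lies in $A$. For case (ii) I would argue by contradiction, assuming $C$ uses at least one new arc and examining its source. If that source is a unary atom, it is a unary atom of $C$ whose argument is blocked, contradicting the hypothesis of (ii). If the source is a binary atom $f(y,z)$ with $y$ blocked, I trace $C$ \emph{backwards} from $f(y,z)$: by the second observation its incoming arc is itself new and of type (e) or (h). Type (e) exhibits a predecessor $p(y)$, a unary atom of $C$ with blocked argument, again contradicting (ii); type (h) yields a predecessor $g(y,z)$, another binary atom with the same blocked first argument $y$, and the argument repeats. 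Since $C$ is finite, this backward walk must either reach an (e)-arc (producing the forbidden unary atom) or remain forever among binary atoms on the pair $(y,z)$; in the latter case $C$ would contain no unary atom at all, contradicting the standing assumption of (ii) that $C$ has at least one unary atom. Either way we reach a contradiction, so $C$ uses no new arc and is a cycle in $G$.

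The main obstacle is precisely the binary-source clauses (g) and (h): unlike the unary-source clauses, the mere fact that the source has a blocked first argument does not directly expose a forbidden unary atom, and one must combine the non-existence of blocked-first-argument binary atoms in $G$ with the finiteness of $C$ and the ``at least one unary atom'' clause of (ii) to close the loop. The remainder is routine case-checking against the eight construction clauses (a)--(h).
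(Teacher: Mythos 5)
Your proof is correct, and for the crucial sub-case it takes a genuinely different route from the paper. The two arguments agree on the easy parts: both observe that every arc in $\Aexte\setminus A$ has as source either a unary atom $p(y)$ or a binary atom $f(y,z)$ with $y$ blocked, which immediately disposes of case (i) and of unary-source new arcs in case (ii). The divergence is in how a new arc with a \emph{binary} source $f(y,z)$, $y$ blocked, is excluded under (ii). The paper invokes the connection lemma (Lemma~\ref{lemma:connection}): the hypothesis of (ii) supplies a unary atom $p(z)$ on $C$ with $z$ not blocked, hence $z\neq y$; the $\Gexte$-path from $p(z)$ to $f(y,\cdot)$ then projects onto a path in the (extended) forest from $z$ to $y$, every node of which --- in particular $y$ itself --- carries a unary atom on the graph path, yielding a unary atom of $C$ with blocked argument and thus a contradiction. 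You instead argue purely combinatorially: you first note that no binary atom with a blocked first argument occurs in $V$ at all (blocked nodes are never expanded, so they acquire no outgoing arcs and no such atoms are ever added by the update operations), whence every in-arc of such an atom in $\Gexte$ is itself new and of type (e) or (h); a finite backward chase along $C$ then either hits a type-(e) arc, exposing a unary atom $p(y)$ with $y$ blocked, or shows that $C$ consists entirely of binary atoms over $(y,z)$, contradicting the standing assumption of (ii). Your version is more self-contained --- it does not rely on the graph-to-forest path correspondence --- at the cost of the auxiliary observation about $V$, which is true but nowhere stated explicitly in the paper and which you should justify carefully (it rests on blocked nodes being unexpanded leaves, so that no expansion rule ever calls $\update$ on one of their outgoing arcs); the paper's version is shorter given that Lemma~\ref{lemma:connection} is already in place and is reused again in the subsequent claims.
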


\begin{proof}
From the construction of $\Gexte$ one can see that any arc which is added to $G$ is of the form $(p(x),l)$ or $(f(x,y),l)$, where $p$ is some unary predicate, $f$ is some binary predicate, and $x$ is a blocked node.
It is clear that when condition (i) or condition (ii) holds there is no arc of the first form in $C$. As concerns arcs of the latter type, it is again obvious that there are no such arcs if condition (i) is fulfilled. In case condition (ii) holds, assume there is an arc $(f(x,y), l)$ where $x$ is a blocking node. We know that there must be at least one unary atom in the cycle. Let this be $p(z)$. In this case there is a path in $G$ (and also in $\Gexte$) from $p(z)$ to $f(x,y)$ and $z$ is different from $x$ by virtue of (ii). According to lemma \ref{lemma:connection} this path contains a unary atom with argument $x$ (as any path in $\EF$ from $z$ to $x$ contains $x$). However this contradicts with condition (ii) which says that there is no such atom in $C$.
\end{proof}

\begin{claim} \label{claim:inducedcycle}
Let $C=(l_1, l_2, \ldots, l_n=l_1)$  be a cycle in $\Gexte$. If one of the following holds:
\begin{itemize}
 \item (i) there is no unary atom in $C$ and for some $l_i=f_i(x_i,y_i)$, $1 \leq i \leq n$, $x_i$ is blocked
 \item (ii) there is at least one unary atom in $C$ and all unary atoms have the same argument $y$ which is a blocked node
\end{itemize}
then $G$ contains a cycle.
\begin{proof}
We will treat the two cases separately:

(i) First, notice that in this case (when there is no unary atom in the cycle), $args(l_1)=args(l_2)=\ldots=args(l_n)=(x,y)$ as there is no arc in $\Aexte$ from a binary atom $f(x,y)$ to another binary atom $g(z,t)$, with $x \neq z$ or $y \neq t$ (by construction of $\Gexte$). So the cycle can be written as $C=(f_1(x,y), f_2(x,y), \ldots, f_n(x,y)=f_1(x,y))$, where $(f_i \in \bpreds{P})_{1 \leq i \leq n}$. Let $z$ be the blocking node corresponding to $x$: $(z,x)\in \bl$. As $((f_i(x,y), f_{i+1}(x,y)) \in \Aexte)_{1 \leq i <n}$, it follows that $((f_i(z,y), f_{i+1}(z,y)) \in A)_{1 \leq i <n}$, so $C'=(f_1(z,y), f_2(z,y), \ldots, f_n(z,y)=f_1(z,y))$ is a cycle in $G$.

(ii) Let $p_1(y), p_2(y), $ $\ldots, p_n(y)$ be the unary atoms in $C$ with $y$ being a blocked node. Without loss of generality we consider $p_n=p_1$. Then the cycle can be written as: $C=(p_1(y),$ $f_{11}(y,z_1), \ldots,$ $f_{1m_{1}}(y,z_1),$ $p_2(y),$  $f_{21}$ $(y,z_2),$  $\ldots, f_{2m_{2}}(y,z_2)), \ldots p_n(y)=p_1(y)$ where $(f_{ij}\in \bpreds{P})_{1 \leq i <n, 1\leq j \leq m_i}$, $((y, z_i) \in \nAEF)_{1 \leq i <n}$ (as the only binary atoms reachable from $p(y)$ are of the form $f(y,z)$, where $(y,z) \in \nAEF$). Similar with the previous case one can show that $C'=(p_1(x), f_{11}(x,z_1), \ldots, f_{1m_{1}}(x,z_1), p_2(x), f_{21}(x,z_2),$  $\ldots, f_{2m_{2}}(x,z_2)), \ldots p_n(x)=p_1(x)$, where $x$ is the corresponding blocking node for $y$: $(x,y) \in \bl$ is a cycle in $G$.
\end{proof}
\end{claim}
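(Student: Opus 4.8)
The plan is to show that, under either hypothesis, the cycle $C$ of $\Gexte$ can be \emph{projected} back into $G$: I would replace every blocked node that occurs as the relevant argument of an atom of $C$ by its corresponding blocking node, and then argue that the resulting closed walk $C'$ consists entirely of arcs of $A$, so it is a cycle of $G$. The whole argument rests on how $\Gexte$ was built from $G$ in steps (a)--(h): every atom and every arc added on top of $G$ has a blocked node as its relevant argument (the unique node of a unary atom, or the first component of the argument pair of a binary atom), and each added arc is by construction a faithful copy of an arc of $A$ obtained by swapping that blocked node for the corresponding blocking node. I would also exploit the fact that a blocking node is always interior — it carries the very successors that the blocked node reuses — and is therefore never itself blocked; hence its atoms and outgoing arcs are the genuine ones produced by the expansion rules, i.e.\ they already lie in $V$ and $A$, so no iterated lifting is needed.

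First I would pin down the shape of $C$ in each case. In case (i) all atoms of $C$ are binary, and by inspecting which arcs the operation $\update$ can ever create one sees that an arc leaving a binary atom $f(x,y)$ can reach another binary atom only when that atom carries the same argument pair $(x,y)$; consequently all atoms of $C$ share one pair $(x,y)$, with $x$ blocked by hypothesis. In case (ii) every unary atom of $C$ sits at the same blocked node $y$, and between two consecutive unary atoms $p_i(y),p_{i+1}(y)$ the walk can only pass through binary atoms of the form $f(y,z_i)$ with $z_i$ a successor of the blocking node; an arc of $\Gexte$ leading from $p_i(y)$ to a unary atom at some $z_i\neq y$ (a step-(d) arc) would insert into $C$ a unary atom with argument different from $y$, contradicting the hypothesis, so such arcs are never used and each segment has the form $p_i(y),f_{i1}(y,z_i),\dots,f_{im_i}(y,z_i),p_{i+1}(y)$.

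Then I would carry out the lift. In case (i), letting $z$ be the blocking node with $(z,x)\in\bl$, each arc $(f_j(x,y),f_{j+1}(x,y))$ of $C$ was added by step (h) as a copy of $(f_j(z,y),f_{j+1}(z,y))\in A$, so $C'=(f_1(z,y),\dots,f_n(z,y)=f_1(z,y))$ is a cycle of $G$. In case (ii), letting $x$ be the blocking node with $(x,y)\in\bl$, I would replace $y$ by $x$ throughout and match each arc of a segment with its original at $x$ (unary-to-binary arcs by step (e), binary-to-binary by step (h), binary-to-unary by step (f)); the closed walk $C'=(p_1(x),f_{11}(x,z_1),\dots,p_n(x)=p_1(x))$ then has all of its arcs in $A$ and is a cycle of $G$.

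The delicate point, and the step I expect to cost the most care, is guaranteeing that the projected walk lands in the \emph{original} graph $G$ rather than once more in the copied part of $\Gexte$. This is precisely where the interiority of the blocking node matters: since blocked nodes are leaves (no expansion, hence no successors is ever created on them), a node carrying a successor cannot be blocked, so $f_j(z,y)$ and $p_i(x)$ together with the relevant arcs are genuine members of $V$ and $A$. A secondary, purely bookkeeping obstacle is verifying that each arc shape actually occurring in $C$ is matched by exactly the construction step claimed above, so that the arc of $A$ it mirrors is indeed present; getting the correspondence between the three arc shapes and steps (e), (f), (h) right is what makes the final conclusion that $C'\subseteq A$ go through.
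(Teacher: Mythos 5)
Your proposal follows essentially the same route as the paper's proof: in case (i) you observe that all atoms of the cycle share one argument pair $(x,y)$ and lift the cycle to the blocking node $z$ via the copied arcs, and in case (ii) you decompose the cycle into segments $p_i(y), f_{i1}(y,z_i),\dots,f_{im_i}(y,z_i), p_{i+1}(y)$ and replace the blocked node $y$ by its blocking node $x$ throughout, matching each arc of $\Gexte$ with the arc of $A$ it copies. The extra care you take about why the lifted walk lands in $G$ itself (blocking nodes are never blocked) is a point the paper leaves implicit, but the argument is the same.
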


\begin{claim} \label{claim:blockingviolation}
Let $C=(l_1, l_2, \ldots, l_n=l_1)$  be a cycle in $\Gexte$. If there are at least two unary atoms in $C$ with different arguments and at least one unary atom has as argument a blocked node $y$ then there is a path in $G$ from an atom $l_1$ to an atom $l_2$ where $args(l_1)=x$, $args(l_2)=y$, and $x$ is the corresponding blocking node for $y$: $(x,y) \in \bl$.
\end{claim}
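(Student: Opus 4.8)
The plan is to exploit the single structural fact underlying the construction of $\Gexte$ from $G$ in steps (a)--(h): every arc added by the blocking construction has as its source an atom whose first argument is a \emph{blocked} node, and it is an exact copy of an arc of $G$ whose source is the corresponding blocking node. Consequently, an arc of $\Gexte$ that is \emph{not} already in $G$ must emanate from some atom $p(w)$ or $f(w,\cdot)$ with $w$ blocked; and whenever $(x',y')\in\bl$, the arcs out of the $y'$-atoms in $\Gexte$ become, after replacing $y'$ by $x'$, arcs of $G$ out of the $x'$-atoms. I would use this observation repeatedly to ``lift'' portions of the cycle that run through blocked nodes to portions of $G$ running through the corresponding blocking nodes. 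I would also record at the outset that every atom occurring on a cycle of $\Gexte$ is necessarily non-free, since free atoms receive no outgoing dependency arcs during expansion and hence have out-degree zero; this guarantees that the $y$-atom eventually produced is non-free, so that the resulting path is a genuine witness against the blocking condition of rule (viii) in the surrounding minimality argument.

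First I would locate the point at which the cycle leaves $y$. Since $C$ contains a unary atom $p(y)$ with $y$ blocked and, by hypothesis, a unary atom with a different argument, the cyclic sequence of unary-atom arguments contains both $y$ and some other value, so there is a pair of consecutive unary atoms $p(y), q(z)$ on $C$ with $z\neq y$, separated only by binary atoms. All arcs out of atoms with first argument $y$ are blocking arcs (as $y$ is a leaf, no such atoms exist in the original $G$), so this run has the shape $p(y)\to f(y,z)\to\cdots\to q(z)$; by steps (e), (g) and (h) this forces $z\in\succe_{\EF}(x)$ and mirrors a genuine path $p(x)\to f(x,z)\to\cdots\to q(z)$ in $G$, the copied arcs being original arcs of $G$ out of the saturated (hence expanded, and itself non-blocked) blocking node $x$. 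This already yields a $G$-path from an atom with argument $x$ to the $z$-atom $q(z)$.

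Next I would trace the remainder of $C$ from $q(z)$ to the first atom with argument $y$, say $l^{\ast}$, and argue that its final arc lies in $G$: arcs entering an atom with first argument $y$ come only from other such $y$-atoms (steps (c), (f)) or from the parent of $y$ in $\EF$ (an original arc of $G$, created when that parent was expanded to introduce $y$); since $l^{\ast}$ is reached from an atom with argument $\neq y$, the entering arc must be the latter. Concatenating this final arc with the lifted initial segment then gives a path in $G$ from an atom with argument $x$ to a unary atom with argument $y$, which is exactly the conclusion of the claim.

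The hard part is that the middle portion of $C$ joining $q(z)$ to $l^{\ast}$ may itself traverse \emph{other} blocked nodes, whose outgoing arcs are again blocking arcs absent from $G$. I would deal with this by applying the same lifting to each such blocked node — replacing each traversed blocked node by its blocking node via steps (a)--(h) — and then checking that these local substitutions compose into a single well-formed $G$-path while leaving the two endpoints (argument $x$ at the start and argument $y$ at the end) untouched. Making this composition precise, most cleanly by induction on the number of distinct blocked nodes whose atoms occur on $C$ (or by first passing to a cycle of $\Gexte$ that is minimal in a suitable sense), together with a careful appeal to Lemma~\ref{lemma:connection} to certify that the lifted node sequences really are paths, is where the bulk of the bookkeeping lies.
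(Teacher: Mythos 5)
Your decomposition of the cycle — lift the outgoing run $p(y)\to f(y,z)\to\cdots\to q(z)$ to $p(x)\to f(x,z)\to\cdots\to q(z)$ in $G$, and observe that the arc entering the first $y$-atom is an original arc from the parent of $y$ — matches the paper's treatment of those two pieces. The genuine gap is exactly where you locate "the bulk of the bookkeeping": the middle portion. The iterated "local substitution" you propose cannot be made to compose. If the segment from $q(z)$ to $l^{\ast}$ passes through another blocked node $y'$ with blocking pair $(x',y')$, then the arc of $\Gexte$ entering the $y'$-atom terminates at an atom with argument $y'$, while the lifted copies of its outgoing arcs emanate from the corresponding atom with argument $x'$; these are distinct vertices of $G$, and nothing in the construction supplies a $G$-path from the former to the latter (the claim itself, even applied inductively, would only give a path in the wrong direction, from $x'$-atoms to $y'$-atoms). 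So the substituted fragments do not concatenate into a single path of $G$, and no induction on the number of blocked nodes repairs this.

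The idea you are missing is the one the paper uses to make the middle portion a non-issue: project the whole cycle $C$ down to a cycle in the graph $\nEF$ via Lemma \ref{lemma:connection}, and then exploit the structure of $\nEF$ around a blocked node. A blocked node $y$ is a non-constant leaf, so its only incoming arc in $\nEF$ is from its parent and its only outgoing arcs are the backward arcs to $\succe_{\EF}(x)$; consequently every cycle of $\nEF$ through $y$ must contain the descending tree path from some $z\in\succe_{T_c}(x)$ down to $y$, and either does or does not additionally pass through $x$. In the case that matters (the cycle avoids $x$), the cycle in $\nEF$ is exactly $path_{T_c}(z,y)$ plus the backward arc, so the segment of $C$ from $q(z)$ back to $p(y)$ runs only through internal nodes of that tree path, none of which can be blocked (blocked nodes are leaves); hence every arc on that segment has its source at a non-blocked node and is already an arc of $G$ — no further lifting is needed. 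The case where the cycle does contain $x$ is dispatched similarly, using the final descending segment $x\to z\to\cdots\to y$. Without this structural argument your proof does not go through; with it, the worry about intermediate blocked nodes disappears entirely.
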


\begin{proof}
Let $t$ be the argument of a unary atom in the cycle different from $y$. As there is a path in $\Gexte$ from some $p(t)$ to some $q(y)$ and also viceversa from some $q(y)$ to some $p(t)$ according to lemma \ref{lemma:connection} there must also be a path in $\nEF$ from $t$ to $y$ and a path from $y$ to $t$. In other words there exists a cycle in $\nEF$ which involves both $y$ and $t$. Furthermore for every element of the cycle in $\nEF$, there is a unary atom in $C$ which has this element as an argument.
From the way $\nEF$ was constructed (see also Figure \ref{figure:newextendedforest}), one can see that any cycle in $\nEF$ which involves a blocked node $y$  which makes part from a tree $T$ in the corresponding simple forest contains the path in $T$ from $z$ to $y$, where $z$ is the node which is a successor of $x$  in $T$, and is on the same path in $T$ as $x$ and $y$, $x$ being the corresponding blocking node for $y$: formally, $(x,y) \in \bl$, $z \in succ_T(x)$, $z \in path_T(x,y)$. There are two kinds of cycles in $\nEF$:
\begin{itemize}
\item cycles which contain $x$, $z$, and  $y$ (these cycles will contain also elements from other trees than $T$): in this case there is a unary atom $l_1$ with argument $x$ in $C$ and there is as well a unary atom $l_2$ with argument $y$ in $C$ (from the condition of the claim) - so the claim is satisfied
\item cycles which contain $z$, and $y$, but do not contain $x$ (actually, this is a unique such cycle which has all elements from $path_T(z,y)$): in this case there are two unary atoms $l_2$, and $l_3$ in $C$, with arguments $y$, and $z$ respectively, such that there is no other unary atom on the path induced by $C$ in $\Gexte$ from $l_2$ to $l_3$. In this case this path has the form: $p(y), f_1(y,z), \ldots, f_n(y,z), q(z)$. Due to the construction of $\Gexte$, the existence of the path $(p(y), f_1(y,z), \ldots, f_n(y,z), q(z))$ in $\Gexte$ implies the existence of the path $(p(x), f_1(x,z), \ldots, f_n(x,z), q(z))$ in $G$. At the same time note that there is a path in $G$ from $q(z)$ to $p(y)$. So, $(p(x), q(z)) \in conn_G$ and $(q(z), p(y)) \in conn_G$, thus $(p(x), p(y)) \in conn_G$  and the claim is satisfied.
\end{itemize}
\end{proof}

One can see that the hypotheses of the three claims cover all possible types of cycles $C$ in $\Gexte$ and that the consequences of having such a cycle are contradicting in each case with the fact that $\langle \EF,$ $\ct,$ $\st,$ $G,$ $\bl \rangle$ is a complete clash-free completion structure (in the case of the first two claims, one obtains that there must be a cycle in $G$, while the conclusion of the third claim contradicts with the blocking condition for a pair of blocking nodes from $\bl$). Thus, there cannot be such a cycle $C$ in $\Gexte$ and $M$ is minimal.
\end{enumerate}
\end{proof}

\subsection{Completeness Proof}
\begin{proof}
If $p$ is satisfiable w.r.t. $P$ then $p$ is forest-satisfiable w.r.t. $P$ (Proposition \ref{prop:forestmodelproperty1}). We construct a clash-free complete completion structure for $p$ w.r.t. $P$, by guiding the non-deterministic application of the expansion rules with the help of a forest model of $P$ which satisfies $p$ and by taking into account the constraints imposed by the saturation, blocking, redundancy, and clash rules. The proof is inspired by completeness proofs in Description Logics for tableau, for example in \cite{horrocks99practical}, but requires additional mechanisms to eliminate redundant parts from Open Answer Sets.

In order to proceed we need to introduce the notion of \emph{relaxed completion structure} which is a tuple $\langle \EF,$ $\ct,$ $\st,$ $G,$ $\bl \rangle$, where $\EF$ is an extended forest, and  $G$, $\ct$, $st$, $bl$ represent the same kind of entities as their homonym counterparts in the definition of a completion structure. An \emph{initial relaxed completion structure for checking satisfiability of a unary predicate $p$ w.r.t. a FoLP $P$} is defined similarly as an initial completion structure for checking satisfiability of $p$ w.r.t. $P$. A relaxed completion structure is evolved using the expansion rules (i)-(vi) and the applicability rules (vii)-(viii). Note that the \emph{redundancy} rule is left out. A complete clash-free relaxed completion structure is a relaxed completion structure evolved from an initial relaxed completion structure for $p$ and $P$, such that no expansion rules can be further applied, which is not contradictory and for which $G$ does not contain positive cycles.

The first step of the proof consists in constructing a complete clash-free relaxed completion structure starting from a forest model of a FoLP $P$ which satisfies $p$. Note that in the general case, constructing a complete clash-free relaxed completion structure might be a non-terminating process (the termination for the construction of complete clash-free completion structures was based on the application of the redundancy rule), but as we will see in the following, the process does terminate when a forest model is used as a guidance.

So, let $(U,M)$ be an open answer set of a FoLP $P$ which satisfies $p$ which at the same time is a forest model of $P$. Then there exists an extended forest $\EF = \langle \set{T_{\roo}} \cup \set{T_a \mid a \in \cts{P}},\ES \rangle$, where $\roo$ is a constant, possibly one of the constants appearing in $P$, and a labeling function $\mathcal{L}:\set{T_{\roo}} \cup \set{T_a \mid a \in \cts{P}}\cup A_{\EF} \to 2^{\preds{P}}$ which fulfill the conditions from definition \ref{def:forest-sat1}.


We define an initial relaxed completion structure $\CS_0=\langle \EF^{\prime},$ $\ct,$ $\st,$ $G,$ $\bl\rangle$  for $p$ and $P$ such that $\EF^{\prime}= \langle F^{\prime},\ES^{\prime} \rangle$, $F^{\prime}=\set{T^{\prime}_{\roo}} \cup \set{T^{\prime}_a \mid a \in \cts{P}}$, where $\roo$ is the same $\roo$ used to define $\EF$, and $T_x=\set{x}$, for every $x \in \cts{P} \cup \set{\roo$}, and  $\ES^{\prime}=\emptyset$, $G=\grap{V}{A}$, $V=\{p(\roo)\}$, $A=\emptyset$, and $\ct(\roo)=\{p\}$, $\st(\roo,p)=\unexpa$, $\bl=\emptyset$. We will  evolve this completion structure using rules (i)-(viii). To this purpose we inductively define a function $\pi:N_{\EF^{\prime}} \to U $ that relates nodes in the relaxed completion structure to nodes in the forest model satisfying the following properties:

\[
\ddag
\begin{cases}
\{q \mid q\in \ct(z)\} \subseteq \mathcal{L}(\pi(z)), \mbox{ for all }z \in N_{\EF^{\prime}}\ \\
\{q \mid \naf{q}\in \ct(z)\} \cap \mathcal{L}(\pi(z)) = \emptyset, \mbox{ for all }z\in N_{\EF^{\prime}}\ \\
\end{cases}
\]

Intuitively, the positive content of a node/edge in the completion structure is contained in the label of the corresponding forest model node, and the negative content of a node/edge in the completion structure cannot occur in the label of the corresponding forest model node.

\begin{claim}
Let $\CS$ be a relaxed completion structure derived from $\CS_0$ and $\pi$ a function that satisfies (\ddag). If an expansion rule is applicable to $\CS$ then the rule can be applied such that the resulting relaxed completion structure $\CS'$ and an extension $\pi'$ of $\pi$ still satisfies (\ddag).
\end{claim}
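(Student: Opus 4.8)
The plan is to prove the claim as the inductive step in the construction of a complete clash-free relaxed completion structure, the induction being on the number of expansion-rule applications; the base case ($\CS_0$ with $\pi(\roo)$ the root of the guiding forest model) is immediate since $\ct(\roo)=\{p\}$ and $p\in\mathcal{L}(\roo)$. For the step I would fix the forest model $(U,M)$ of $P$, with extended forest $\EF$ and labeling $\mathcal{L}$ guaranteed by Proposition~\ref{prop:forestmodelproperty1}, read $(\ddag)$ as also including the analogous condition on arcs (extending $\pi$ to arcs by $\pi(x,y)=(\pi(x),\pi(y))$ and comparing $\ct(x,y)$ against $\mathcal{L}(\pi(x),\pi(y))$), and argue by a case analysis on which of the six expansion rules (i)--(vi) is applied. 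In every case the nondeterministic choices offered by the rule are resolved by consulting $M$ and $\mathcal{L}$, and the fact that $(U,M)$ is an \emph{answer set} --- hence both a model and a minimal, well-supported model of $\ground{P}{U}$ --- is exactly what makes the guided choice possible.

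For the positive rules I would proceed as follows. When \emph{expand unary positive} fires on $p\in\ct(x)$, $(\ddag)$ gives $p(\pi(x))\in M$; since $M$ is an answer set, $p(\pi(x))$ is supported by a ground instance of some $r\in P_p$ whose entire body holds in $M$, and I pick this $r$. By the forest-model property every positive binary atom of $M$ lies on an arc of $\EF$, so each successor witness $u_m$ of the instance is either a forest-successor of $\pi(x)$ or a constant reached through $\ES$; I create or reuse completion successors $y_m$ and set $\pi'(y_m)=u_m$, reusing a node when two witnesses coincide and taking distinct nodes whenever $\psi$ forces the corresponding terms apart (possible because the witnessing instance already satisfies those inequalities). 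Since the whole body holds in $M$, every predicate that $\update$ inserts into $\ct(x)$, $\ct(x,y_m)$, or $\ct(y_m)$ is consistent with $\mathcal{L}$ at the matching $\pi'$-image, so $(\ddag)$ is preserved; \emph{expand binary positive} is identical but simpler. For the two \emph{choose} rules I simply add $p$ (resp.\ $f$) when the atom is in $M$ and $\naf{p}$ (resp.\ $\naf{f}$) otherwise, which preserves $(\ddag)$ by definition of $\mathcal{L}$.

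For \emph{expand unary negative} on $\naf{p}\in\ct(x)$, $(\ddag)$ gives $p(\pi(x))\notin M$, and because $M$ is a \emph{model} every ground instance of every $r\in P_p$ with head $p(\pi(x))$ has a body falsified by $M$. I use this to drive the refutation: if the body is already falsified on a local literal $\pm q\in\beta$ I refute locally by inserting $\mp q$ into $\ct(x)$ (consistent, since then $M\models\mp q(\pi(x))$); otherwise $\beta(\pi(x))$ holds in $M$, local refutation is impossible, and for every assignment of the successor terms to successors of $x$ the corresponding forest-model grounding must be falsified in its successor part, so I insert the matching $\mp f$ or $\mp q$ on the appropriate arc or successor node. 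All inserted literals agree with $M$, so no node or arc ever receives both $q$ and $\naf{q}$ and $(\ddag)$ survives; \emph{expand binary negative} is the same argument restricted to the single successor $y$.

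The main obstacle is the bookkeeping around \emph{expand unary positive} and \emph{expand unary negative}: the extension $\pi'$ must simultaneously respect the inequality constraints $\psi$, handle the reuse-versus-creation of successors, and let \emph{expand unary negative} refute \emph{all} successor groundings without ever producing a clash. The observations that make this go through are that the forest-model property pins every binary witness onto an $\EF$-arc (so witnesses can always be realized inside the completion's own forest), that the witnessing answer-set instance already honours the $\neq$-constraints, and that driving every single choice by the one consistent interpretation $M$ automatically makes the refutations of \emph{expand unary negative} compatible across all successor assignments. Once these are in place, verifying that $(\ddag)$ is maintained in each of the six cases is routine, and the same guided-construction view is what is later exploited to argue that the process, fed by a fixed model, terminates.
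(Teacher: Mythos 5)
Your proposal is correct and follows essentially the same route as the paper: the nondeterministic choices of each expansion rule are resolved by consulting the guiding forest model $(U,M)$, with minimality/support of the answer set justifying the existence of a satisfied rule body for the positive rules and plain modelhood justifying the refutation of every ground body for the negative rules. You are somewhat more explicit than the paper about the arc component of $(\ddag)$, the treatment of $\psi$, and the negative cases (which the paper dispatches as ``similar''), but the underlying argument is the same.
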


We start by setting $\pi(x) = x$, for every $x \in \cts{P} \cup \{\roo\}$ (the roots of the trees in the relaxed completion structure correspond to the roots of the trees in the forest model). It is clear that (\ddag) is satisfied for $\CS_0$.  By induction let $\CS$ be a relaxed completion structure derived from $\CS_0$ and $\pi$ a function that satisfies (\ddag).
We consider the expansion rules and the applicability rules saturation and blocking:
\begin{enumerate}

\item \emph{Expand unary positive}. As $q\in \ct(x)$, we have, by the
induction hypothesis, that $q\in \Lc{\pi(x)}$.  Since $M$ is a minimal model there is an $r\in P_q$ of the form (\ref{eq:unary}) and a ground version $r':q(\pi(x))\gets \posi{\beta}(\pi(x)), (\gamma_m^{+}(\pi(x),z_m))_{1 \leq m \leq k}, (\delta_m^{+}(z_m))_{1 \leq m \leq k} \in (P_q)_U^M$ such that $M \models \posi{\beta}(\pi(x)) \cup (\gamma_m^{+}(\pi(x),z_m))_{1 \leq m \leq k} \cup (\delta_m^{+}(z_m))_{1 \leq m \leq k}$. Set $\rl(q,x) = r$ and $\update(q(x),\beta, x)$. Next, for each $1 \leq m \leq k$:
	\begin{itemize}
        \item If $z_m = \pi(z)$ for some $z$ already in $\EF^{\prime}$, take $y_m = z$; also, if $z \in cts(P)$ and $(x,z) \notin ES^{\prime}$ then $ES^{\prime}=ES^{\prime} \cup \{(x,z)\}$,
        \item if $z_m=\pi(x) \cdot s$ and $z_m$ is not yet the image of $\pi$ of some node in $\EF^{\prime}$, then add $x\cdot s$ as a new successor of $x$ in $F^{\prime}$: $T^{\prime}_c=T^{\prime}_c \cup \{x \cdot s\}$, where $x \in T^{\prime}_c$, set $\pi(x \cdot s) = \pi(x) \cdot s$ and $\pi(x, x \cdot s) = (\pi(x),\pi(x) \cdot s)$.
    \item $\update(q(x), \gamma_m, (x,y_m))$,
    \item $\update(q(x), \delta_m, y_m)$.
    \end{itemize}

 In other words we removed the nondeterminism from the \emph{expand unary positive rule}, by  choosing the rule $r$ and the successors corresponding to the open answer set $(U,M)$. One can verify that (\ddag) still holds for $\pi$.
 \item One can deal with the rules (ii-vi) in a similar way, making the
 non-deterministic choices in accordance with $(U,M)$.

\item \emph{Saturation}. No expansion rule can be applied on a node from $\EF^{\prime}$ which is not a constant until its predecessor is saturated. This rule is independent of the particular open answer set which guides the construction, so it is applied as usually.

\item \emph{Blocking}. Consider a node $x \in N_{\EF^{\prime}}$ which is selected for expansion. If there is a saturated node $y \in N_{\EF^{\prime}}$ which is not a constant, $y <_{T_c} x$, where $T_c \in F^{\prime}$,  $\ct(x) \subseteq \ct(y)$, and $connpr_{G}(y,x)=\emptyset$ then $x$ is blocked and $(y,x)$ is added to the set of blocking pairs: $\bl=\bl \cup \{(y,x)\}$. Furthermore, we impose that if there are more nodes $y$ which satisfy the condition we will consider as the blocking node for $x$ the one which is closest to the root of the tree $T_c$ (the tree from which $x$ makes part), so the node $y$ for which there is no node $z$ such that $z<_{T_c} y$, $\ct(x) \subseteq \ct(z)$, and $connpr_{G}(z,y)=\emptyset$. This choice over possible blocking nodes is relevant for the next stage of the proof, where a complete clash-free relaxed completion structure is transformed into a complete clash-free completion structure. The condition (\ddag) still holds for $\pi$ as we have not modified the content of nodes, but just removed some unexpanded nodes.

\end{enumerate}

So, (\ddag) holds for $CS'$ which was evolved from $CS$, no matter which expansion rule or applicability rule was used. It is easy to see, that if (\ddag) holds for a particular relaxed completion structure $CS$ then this fact together with the fact that $(U, M)$ is an open answer set of $P$ guarantees that $CS$ is clash-free. So, in order to obtain a complete clash-free relaxed completion structure one has just to apply rules (i-viii) in the manner described above. To see that the process terminates, assume it does not. Then, for every $x, y \in N_{\EF^{\prime}}$ such that $x<_F^{\prime}y$ and $\ct(x)=\ct(y)$, the blocking rule cannot be applied, so there is a path from a $p(x)$ to some $q(y)$. This suggests the existence of an infinite path in $G$ (as on any infinite branch in a tree from $F^{\prime}$ there would be an infinite number of nodes with equal content - there is a finite amount of values for the content of a node), which contradicts with the fact that any atom in an open answer set is justified in a finite number of steps\cite[Theorem 2]{heymans-amai2006}.

At this point we have constructed a complete clash-free relaxed completion structure $CS$ for $p$ w.r.t $P$ starting from a forest open answer set for $P$ which satisfies $p$.

The preference relation over different blocking nodes choices in the construction above has several consequences described by the following results:

\begin{lemma}\label{lemma:blockingnodefirst}
 Let $CS=\langle \EF,$ $\ct,$ $\st,$ $G$ $\bl \rangle$ be a complete clash-free relaxed completion structure constructed in the manner described above ($\EF= \langle \F,\ES \rangle$). Then, for every $x$ such that there exists a $y$ so that $(x,y) \in \bl$ ($x$ is a blocking node in $CS$), there is no node $z<_{T_c} x$, $T_c \in F$ such that $\ct(z)=\ct(x)$.
\end{lemma}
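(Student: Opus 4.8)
The plan is to argue by contradiction and to exploit the preference that is built into the blocking step of the guided construction, namely that among all admissible blocking nodes for a given blocked node we always select the one closest to the root of its tree. Adopt the notation of the lemma: let $(x,y)\in\bl$, so that $x$ is a blocking node and $y$ the node it blocks; by the blocking rule both lie in the same tree $T_c$, with $x<_{T_c}y$, $x\notin\cts{P}$, $\ct(y)\subseteq\ct(x)$, and $connpr_G(x,y)=\emptyset$. Suppose, towards a contradiction, that there is a node $z<_{T_c}x$ with $\ct(z)=\ct(x)$; since every proper ancestor of $y$ is saturated before $y$ is treated, $z$ is saturated. I would then show that $z$ is itself an admissible blocking node for $y$ that is strictly closer to the root than $x$, contradicting the choice of $x$. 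Two conditions must be verified for this: $\ct(y)\subseteq\ct(z)$ and $connpr_G(z,y)=\emptyset$. The first is immediate, since $\ct(z)=\ct(x)\supseteq\ct(y)$.

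The heart of the argument is the second condition, $connpr_G(z,y)=\emptyset$, which I would derive from $connpr_G(x,y)=\emptyset$ together with the fact that $x$ lies on the branch between $z$ and $y$. Suppose $(r,q)\in connpr_G(z,y)$ for some non-free $q$, i.e.\ there is a path in $G$ from $r(z)$ to $q(y)$. Since $z<_{T_c}y$ are distinct, Lemma \ref{lemma:connection} yields a path $Pt$ from $z$ to $y$ in $\EF$ every node of which is the argument of some unary atom occurring on the $G$-path. Because $y$ is not a constant, any path reaching $y$ in $\EF$ must descend the unique branch of $T_c$ ending at $y$ (the only arcs into a non-constant node are tree arcs from its parent, and $\ES$-arcs can re-enter a tree only at its constant root); consequently $Pt$ traverses every ancestor of $y$ on that branch, in particular $x$. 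Hence some unary atom $s(x)$ occurs on the $G$-path, and the suffix of that path from $s(x)$ to $q(y)$ witnesses $(s,q)\in connpr_G(x,y)$ with $q$ non-free, contradicting $connpr_G(x,y)=\emptyset$. Therefore $connpr_G(z,y)=\emptyset$.

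With both conditions established, $z$ qualifies as a blocking node for $y$ with $z<_{T_c}x$, so the preference rule would have selected $z$ (or an even higher node) rather than $x$, contradicting $(x,y)\in\bl$; this proves the claim. I expect the main obstacle to be precisely the structural step that routes every dependency path from $z$ to $y$ through $x$: it rests on Lemma \ref{lemma:connection} and on a careful case analysis of how a path in the extended forest can approach a non-constant node (tree arcs from the parent versus the backward $\ES$-arcs, which target only constant roots), and one must also check the stability of the relevant values of $connpr_G$, i.e.\ that no arcs added to $G$ after $y$ is blocked create new connections into the atoms $q(y)$ or out of the atoms at the saturated nodes $z$ and $x$ (the former because $y$ and its predecessor are no longer expanded, the latter because $z,x$ are saturated). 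Finally, a boundary case needs separate treatment, namely when the only equal-content ancestor $z$ is the constant root $c$ of $T_c$: a constant cannot serve as a blocking node, so the preference argument does not apply directly, and I would dispatch this case by examining whether the guided construction can place a blocking node and a constant root at equal content at all.
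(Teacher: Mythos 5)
Your proof is correct and follows essentially the same route as the paper's: assume an equal-content ancestor $z$ of the blocking node $x$, use Lemma \ref{lemma:connection} to force every $G$-path from an atom at $z$ to an atom at $y$ through an atom at $x$, deduce $connpr_G(z,y)=\emptyset$ from $connpr_G(x,y)=\emptyset$, and contradict the root-closest preference built into the guided blocking step. The two side issues you flag (stability of $connpr_G$ after blocking, and the boundary case where $z$ is the constant root $c$ and hence not an admissible blocking candidate) are indeed glossed over in the paper's own one-paragraph proof, but they are refinements rather than a different argument.
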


\begin{proof}
 Assume by contradiction that $x$ is a blocking node in $CS$, so, there is a $y$ such that $(x,y) \in \bl$, and that there exists also $z<_{T_c} x$, $T_c \in F$ such that $\ct(z)=\ct(x)$. Observe that $conn_G(z,y)=\{(p(z), q(y))\mid p \in \ct(z) \wedge q \in \ct(y) \wedge (\exists r \in \ct(x) \mbox{ s. t. }(p(z), r(x)) \in conn_G(z,x)\wedge(r(x), q(y)) \in connpr_G(x,y))\}$ (according to lemma \ref{lemma:connection} the existence of a path from a $p(z)$ to a $q(y)$ in $G$ implies the existence of a path from $z$ to $y$ in $\EF$; all paths from $z$ to $y$ in $\EF$ include the path from $z$ to $y$ in $T_c$ and conversely $x$, and then according to the same lemma there must be a atom in the initial path in $G$ with argument $x$: $r(x)$ in this case). But $connpr_G(x, y)=\emptyset$ as $(x,y) \in \bl$, so $connpr_G(z, y)=\emptyset$. Additionally, $\ct(z)=\ct(x)\supseteq\ct(y)$, so the existence of $z$ is in contradiction with the preference condition over potentially blocking nodes. Thus, the lemma holds.
\end{proof}

\begin{corollary}\label{corollary:limitblockingnodes}
Let $CS=\langle \EF,$ $\ct,$ $\st,$ $G,$ $\bl \rangle$ be a complete clash-free relaxed completion structure constructed in the manner described above ($\EF=\langle E,\ES \rangle$) and $IB$ a branch of a tree $T_c$ from $F$. Then there are at most $2^p$ distinct blocking nodes in $IB$ where $p=|\upreds{P}|$.
\end{corollary}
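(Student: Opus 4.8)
The plan is to combine two facts: the contents of saturated nodes range over a set of size at most $2^p$, and Lemma~\ref{lemma:blockingnodefirst} forces any two distinct blocking nodes on a single branch to have distinct contents. Since a branch is linearly ordered, this immediately bounds the number of blocking nodes by the number of available contents.

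First I would record that every blocking node is saturated. A pair $(x,y)\in\bl$ is created only by the \emph{Blocking} applicability rule, and in the construction described above this rule adds $(y,x)$ to $\bl$ precisely when the blocking node $y$ is a saturated, non-constant node; hence the first (blocking) component of any pair in $\bl$ is saturated. For a saturated node $x$, the saturation condition (vii) guarantees that for every $q\in\upreds{P}$ we have $q\in\ct(x)$ or $\naf q\in\ct(x)$, and since $CS$ is clash-free it is not contradictory, so $q$ and $\naf q$ cannot both lie in $\ct(x)$. Thus the unary part of $\ct(x)$ corresponds to a choice of exactly one sign for each of the $p=|\upreds{P}|$ unary predicates, giving at most $2^p$ possible contents for a saturated node, and in particular for any blocking node.

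Next I would show that the blocking nodes of the branch $IB$ carry pairwise distinct contents. The branch $IB$ is a maximal path in $T_c$, so its nodes are linearly ordered by $\leq_{T_c}$. Given two distinct blocking nodes $x_1,x_2\in IB$, without loss of generality $x_1<_{T_c}x_2$. Applying Lemma~\ref{lemma:blockingnodefirst} to the blocking node $x_2$ yields that there is no node $z<_{T_c}x_2$ in $T_c$ with $\ct(z)=\ct(x_2)$; since $x_1<_{T_c}x_2$, this forces $\ct(x_1)\neq\ct(x_2)$. Hence the assignment sending each blocking node of $IB$ to its content is injective.

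Combining the two observations, the blocking nodes on $IB$ have pairwise distinct contents, each of which is one of at most $2^p$ possible saturated contents, so $IB$ contains at most $2^p$ distinct blocking nodes, as claimed. The argument is essentially a counting argument once the lemma is in hand; the only points requiring care are that blocking nodes are indeed saturated (so that the relevant count is $2^p$ rather than the larger number of partial contents) and that a branch is linearly ordered (so that, for any two blocking nodes, the lemma can be invoked at the later one to separate their contents). I do not expect a genuine obstacle here, as both points are immediate from the definitions and the preceding lemma.
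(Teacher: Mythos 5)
Your proof is correct and follows essentially the same route as the paper: invoke Lemma~\ref{lemma:blockingnodefirst} to conclude that blocking nodes on a branch have pairwise distinct contents, then bound the number of possible contents by $2^p$. Your explicit justification that blocking nodes are saturated (so the count is $2^p$ rather than the larger number of partial contents) is a detail the paper leaves implicit, but it is not a different argument.
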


\begin{proof}
The result follows from the fact that there cannot be two blocking nodes with equal content on the same path in a tree according to the previous
lemma and the finite number of values for the content of a node which is given by the cardinality of the power set of $\upreds{P}$.
\end{proof}

The next step is to transform a relaxed clash-free complete completion structure $CS=\langle \EF,$ $\ct,$ $\st,$ $G,$ $\bl \rangle$, where $\EF=\langle \F, \ES \rangle$, into a complete clash-free completion structure, that is, a complete clash-free relaxed completion structure which has no redundant nodes. This is done by applying a series of successive transformations on the relaxed completion structure - each transformation ``shrinks'' the completion structure in the sense that the newer returned relaxed completion structure has a lesser number of nodes than the original one and is still complete and clash-free. The result of applying the transformation is a relaxed clash-free complete completion structure which has a bound on the number of nodes on any branch which matches the bound $k$ from the redundancy condition, which is thus a clash-free complete completion structure.

A way to shrink a (relaxed) completion structure is that whenever two nodes $u$ and $v$ in a tree $T_c$ from $F$ are on the same path, $u<_{T_c} v$, and they have equal content, $\ct(u)=\ct(v)$, the subtree $T_c[u]$ is replaced with the subtree $T_c[v]$. We call such a transformation $collapse_{CS}(u,v)$ and its results is a new relaxed completion structure $CS'=\langle \EF',$ $\ct',$ $\st',$ $G',$ $\bl' \rangle$, where the elements of this new completion structure are defined in the following.

Let $ef:N_{\EF} \to C$ be a labeled extended forest which associates to every node of $\EF$ a label from a set of distinguished constants $C$ such that $\ef(x) \neq \ef(y)$ for every $x$ and $y$ in $N_{\EF}$ such that $x \neq y$. Let $\ef'=replace_\ef(u,v)$ be a new labeled extended forest and $\EF'$ be the corresponding unlabeled extended forest. For every $x \in \EF'$ let $\overline{x}$ be the counterpart of $x$ in $\EF$ in the sense that: $\ef'(x)=\ef(\overline{x})$. Note that for every $x \in \EF'$ there is a unique such counterpart in $\EF$. For simplicity we also introduce the notation $\overline{S}$ to refer to the counterpart tuple (the tuple of counterpart nodes) corresponding to the tuple of nodes from $S$ from $T'$ . Formally, $\overline{(x_1,\ldots,x_n)}=(\overline{x_1}, \ldots, \overline{x_n})$. With the help of this notion of counterpart node we will define also the other components of the resulted completion structure ($\EF'$ has already been defined):
\begin{itemize}
\item $G'=(V', A')$. The set of nodes $V'$ of the new graph $G'$ contains all atoms $l$ for which there is a atom in $V$ formed with the same predicate symbol as $l$ and having as arguments the counterpart of the arguments of $l$. Additionally, $V'$ contains binary atoms which connect the predecessor of $u$ (it is the same both in $\EF$ and $\EF'$) with the new node $u$ which were also present in $V$ - this is necessarily as $\overline{u}=v$, so otherwise these connections would be lost:
      \begin{align*}
      V'=&\{l_1 \mid \exists l_2 \in V \mbox{ s. t. } pred(l_1)=pred(l_2) \wedge \overline{args(l_1)}=args(l_2) \} \cup \\
      &\{f(z,u) \mid z \in T' \wedge f(z,u) \in V\}.
      \end{align*}
      The set of arcs $A'$ of the new graph $G'$ contains all pair of atoms $(l_1,l_2)$ for which there is a corresponding pair in $E$, $(l_3,l_4)$, such that $l_3$ and $l_4$  have the same predicate symbols as $l_1$ and $l_2$, respectively, and their argument tuples are the counterpart of the argument tuples of $l_1$, and $l_2$, respectively. Additionally, $A'$ contains arcs from $A$ which connect atoms whose arguments include the predecessor of $u$ (it is the same both in $T$ and $T'$) with atoms whose arguments include the new node $u$ - this is necessarily as $\overline{u}=v$, so otherwise these connections would be lost:
      \begin{align*}
      A'=&\{(l_1,l_2) \mid \exists (l_3, l_4) \in A \mbox{ s. t. } pred(l_1)=pred(l_3) \wedge pred(l_2)=pred(l_4) \\
      &\qquad \qquad\wedge \overline{args(l_1)}=args(l_3) \wedge \overline{args(l_2)}=args(l_4)\} \cup \\
      &\{(l_1, l_2) \mid (l_1, l_2) \in E \wedge u \in arg(l_2) \wedge z \in arg(l_1) \wedge z<u\}.
      \end{align*}
\item $\ct'(x)=\ct(\overline{x})$, for every $x \in \ef'$;
\item $\st'(x)=\st(\overline{x})$, for every $x \in \ef'$;
\item $\bl'=\{(x,y) \mid (\overline{x}, \overline{y}) \in \bl \wedge connpr_{G'}(x,y)=\emptyset\}$. We maintain those blocking pairs whose counterparts in $\EF$ formed a blocking pair, and which further more still fulfill the blocking condition.
\end{itemize}

Note that the result of applying the transformation on a complete clash-free relaxed completion structure might be an incomplete clash-free relaxed completion structure. If completeness of the original structure was achieved by applying among others the blocking rule, the transformation might leave some branches ``unfinished'' in case the blocking node is eliminated or simply because two nodes who formed a blocking pair are still found in the new structure, but they do not longer fulfill the blocking condition. We will describe two cases in which the transformation can be applied without losing the completeness of the resulted structure by means of two lemmas. Before that, however, we need to state a general result which will prove useful in the demonstration of the two lemmas. The result states that if as a result of applying the $collapse$ transformation on a complete clash-free relaxed completion structure one obtains a completion structure in which the path between a blocking pair of nodes remains untouched (every node in the original path is the counterpart of some node in the new structure), then the nodes which have as counterparts the nodes of the blocking pair form a blocking pair in the new completion structure.

\begin{lemma}\label{lemma:integrityblockingpath}
Let $CS=\langle \EF,$ $\ct,$ $\st,$ $G,$ $\bl \rangle$, $\EF=\langle \F, \ES\rangle$ be a complete clash-free relaxed completion structure and $CS'=\langle \EF',$ $\ct',$ $\st',$ $G',$ $\bl' \rangle$ the result returned by $collapse_CS(u,v)$, where $u$ and $v$ are two nodes from $\EF$ which fulfill the usual conditions necessary for the application of $collapse$. Then, for every $(x, y) \in bl$: if for every $z \in path_{T_c}(x,y)$ ($x,y \in T_c)$, exists $z' \in \EF'$ such that $\overline{z'}=z$, then $(x', y') \in bl'$, where $x', y' \in \EF'$, $\overline{x'}=x$ and $\overline{y'}=y$.
\end{lemma}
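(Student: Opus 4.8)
The plan is to verify directly the two conditions in the definition of $\bl'$, namely that, for the nodes $x',y' \in N_{\EF'}$ with $\overline{x'}=x$ and $\overline{y'}=y$, the pair $(\overline{x'},\overline{y'})=(x,y)$ lies in $\bl$ and that $connpr_{G'}(x',y')=\emptyset$. The first holds by the assumption $(x,y)\in\bl$; and since the hypothesis guarantees every node of $path_{T_c}(x,y)$ has a preimage under the counterpart map (unique, by injectivity of that map), the nodes $x'$ and $y'$ are well defined, lie in the same tree of $\EF'$ with $x'<_{T_c'}y'$, and inherit $\ct'(y')=\ct(y)\subseteq\ct(x)=\ct'(x')$, so the structural prerequisites of a blocking pair carry over. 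All the work is therefore in showing $connpr_{G'}(x',y')=\emptyset$.

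First I would extract a clean geometric dichotomy from the preservation hypothesis. Since $u$ has no preimage (its position is now occupied by a copy of $v$), and the set of nodes admitting a preimage is exactly $(N_{\EF}\setminus T_c[u])\cup T_c[v]$, the assumption that every $z\in path_{T_c}(x,y)$ has a preimage forces $path_{T_c}(x,y)\cap(T_c[u]\setminus T_c[v])=\emptyset$, and in particular $u\notin path_{T_c}(x,y)$. Because $path_{T_c}(x,y)$ is a chain and the only entry point into $T_c[u]$ from above is through $u$, this leaves precisely two cases: either (A) $path_{T_c}(x,y)\cap T_c[u]=\emptyset$, whence $x,y\notin T_c[u]$ and both survive unchanged, so $x'=x$ and $y'=y$; or (B) $x\in T_c[v]$, whence the whole chain stays inside $T_c[v]$ and $x',y'$ are the corresponding nodes of the fresh copy of $T_c[v]$ installed at position $u$.

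Next I would argue by contradiction. Suppose $connpr_{G'}(x',y')\neq\emptyset$, i.e.\ there are $p,q\in\upreds{P}$ with $q$ not free and a path $Pt'$ in $G'$ from $p(x')$ to $q(y')$. I would project $Pt'$ into $G$ atom-by-atom through the counterpart map $l\mapsto\overline{l}$ (replacing each argument node by its counterpart). By the definition of $A'$, every arc of the first kind is sent to a genuine arc of $G$, so the image is a walk in $G$ from $p(x)$ to $q(y)$ except at the places where $Pt'$ uses one of the added ``second kind'' arcs, namely an arc entering the copied root at position $u$ from an ancestor $z<u$. Since any walk suffices to witness membership in $conn_G$, it is enough to bridge each such boundary step inside $G$; here I would combine $\ct(u)=\ct(v)$ with Lemma~\ref{lemma:connection} applied to $Pt'$, which pins the forest shadow of $Pt'$ to a path from $x'$ to $y'$ in $\EF'$ and so controls exactly when the copied region is entered and left. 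In both Case (A) and Case (B) this reduces matters to the facts that the copied subtree is an isomorphic replica of $T_c[v]$ and that exits of either region occur only through shared constants. The resulting $G$-walk gives $(p(x),q(y))\in conn_G$ with $q$ not free, contradicting $connpr_G(x,y)=\emptyset$, which holds because $(x,y)\in\bl$.

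I expect the bridging of the second-kind arcs to be the main obstacle. These arcs are inserted by hand in the construction of $G'$ precisely to retain the incoming dependencies of the relocated node, and they are the only arcs whose counterpart image is not automatically an arc of $G$; the delicate point is to rule out that they manufacture an $x$-to-$y$ dependency in $G'$ that was absent in $G$. This is exactly where the preservation of the \emph{entire} path $path_{T_c}(x,y)$ (forcing the clean (A)/(B) split) must be used together with Lemma~\ref{lemma:connection} and the acyclicity of $G$ guaranteed by clash-freeness.
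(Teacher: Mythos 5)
Your overall strategy coincides with the paper's: reduce the claim to $connpr_{G'}(x',y')=\emptyset$ (the other condition for membership in $\bl'$ is literally $(\overline{x'},\overline{y'})\in\bl$, which is given), assume for contradiction a path in $G'$ from some $p(x')$ to some $q(y')$ with $q$ not free, use Lemma~\ref{lemma:connection} to pin the unary atoms of that path to the nodes of $path_{T_c'}(x',y')$, and project the path back into $G$ through the counterpart map so as to contradict $connpr_G(x,y)=\emptyset$. Your preliminary dichotomy (A)/(B) is a correct consequence of the preservation hypothesis, although the paper does not need it in this explicit form; the paper instead uses the hypothesis to identify $path_{T_c}(x,y)$ with the counterpart image of $path_{T_c'}(x',y')$ node by node.

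The gap is exactly where you say you expect it, and the fix you sketch would not close it. The only arcs of $A'$ whose counterpart images need not lie in $A$ are the second-kind arcs, which run from atoms whose argument contains the predecessor $z$ of $u$ into atoms whose argument contains $u$. You propose to \emph{bridge} such a crossing using $\ct(u)=\ct(v)$ together with Lemma~\ref{lemma:connection}; but $\ct(u)=\ct(v)$ only guarantees that the corresponding atom at $v$ is a vertex of $G$, not that $G$ contains any path from the atom at $u$ to the one at $v$, so the projected walk in $G$ would simply break at that point and no contradiction with $connpr_G(x,y)=\emptyset$ is obtained. The correct observation --- the one the paper's transfer step ``$(l_i',l_{i+1}')\in connpr_{G'}$ implies $(l_i,l_{i+1})\in connpr_{G}$'' tacitly relies on --- is that no bridging is ever needed, because a second-kind arc cannot occur on the segment between consecutive unary atoms of the projected path. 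Such an arc would force $x_{i+1}'=u$ and $x_i'=z$ with $z$ the predecessor of $u$, hence both $z=\overline{z}$ and $v=\overline{u}$ would lie on $path_{T_c}(x,y)$; since $z<_{T_c}u<_{T_c}v$ and $path_{T_c}(x,y)$ is an order-interval, $u$ would then lie on $path_{T_c}(x,y)$ as well, contradicting the lemma's hypothesis because $u$ has no preimage under the counterpart map. Thus the preservation hypothesis is used not to enable a bridge but to show that the problematic arcs never arise; once that is in place, every arc on the relevant segments is of the first kind, the whole path maps to a path of $G$ from $p(x)$ to $q(y)$, and the contradiction with $(x,y)\in\bl$ goes through.
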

\begin{proof}
Let $\EF$, $\EF'$, $x$, $y$, $x'$, and $y'$ be as defined in the lemma. The conditions for the two nodes $x'$ and $y'$ from $\EF'$ to form a blocking pair: $(x',y') \in \bl'$, are that $(\overline{x}, \overline{y}) \in \bl$ and $connpr_{G'}(x',y')=\emptyset$. The first condition is part of the prerequisites of the lemma, so it remains to be proved that $connpr_{G'}(x',y')=\emptyset$. Assume by contradiction that there exists a path in $G'$  from a $p(x')$ to a $q(y')$. Then according to lemma \ref{lemma:connection} there is a path $Pt$ in $\EF'$ from $x'$ to $y'$ such that for every $z\in P$ there exists a unary atom with argument $z$ in the path in $G'$ from $p(x')$ to $q(y')$. Any path in $\EF'$ from $x'$ to $y'$ includes the path in $T'_c$ (the tree from which both $x'$ to $y'$ make part) from $x'$ to $y'$. Assume $path_{T'_c}(x', y')=(x{_1}'=x', x{_2}', \ldots, x{_n}'=y')$: then $Pt$ contains the unary atoms $l{_1}', l{_2}', \ldots, l{_n}'$ with $args(l{_i}')=x_i'$, for $1 \leq i \leq n$ such that $(l{_i}', l_{i+1}') \in connpr_{G'}$, for every $1 \leq i <n$. Let $\overline{x_i'}=x_i$. As every node on the path $path_{T_c}(x,y)$ is the counterpart of some node in $path_{T'_c}(x',y')$ and every node in $path_{T'_c}(x',y')$ has the some counterpart in $path_{T_c}(x,y)$, one can conclude that $path_{T_c}(x,y)=(x_1, x_2, \ldots, x_n)$. Also, from the definition of $collapse$ one can see that the presence of unary atoms $l{_i}'$ with $args(l{_i}')=x_i'$ in $Pt / G'$ implies the presence of atoms $l_i$ with $args(l_i)=x_i$ and $pred(l_i)=pred({l_i}')$ in $G$, for every $1 \leq i \leq n$. Furthermore $(l{_i}', l_{i+1}') \in connpr_{G'}$ implies $(l{_i}, l_{i+1}) \in connpr_{G}$, for every $1 \leq i <n$. The latter results leads to: $(l_1, l_n) \in connpr_{G}$ with $args(l_1)=x_1=\overline{x_1}'=x$ and $args(l_n)=x_n=\overline{x_n}'=y$, or in other words to $(pred(l_1), pred(l_n))$ $ \in $ $connpr_{G}(x,y)$. This contradicts with the fact that $(x,y) \in \bl$, and thus $connpr_G(x,y)=\emptyset$.
\end{proof}

\begin{lemma}\label{lemma:shrinkabove}
Let $CS=\langle \EF,$ $\ct,$ $\st,$ $G,$ $\bl \rangle$, $\EF=\langle \F, \ES \rangle $ be a complete clash-free relaxed completion structure. If there are two nodes $u$ and $v$ in a tree $T_c$ in $F$ such that $u<_{T_c}v$, $\ct(u)=\ct(v)$, and there is no blocking node $x^{\prime}$, $x^{\prime} <_{T_c} v$, $collapse_{CS}(u,v)$ returns a complete clash-free relaxed completion structure.
\end{lemma}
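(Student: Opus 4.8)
The plan is to verify the two defining properties of a complete clash-free relaxed completion structure for $CS'=collapse_{CS}(u,v)$ --- completeness and clash-freeness --- by transporting the corresponding properties of $CS$ through the counterpart relation. Recall that $collapse_{CS}(u,v)$ replaces the extended subtree rooted at $u$ by a fresh copy of the one rooted at $v$ and defines $\ct'$, $\st'$, $G'$, and $\bl'$ from $CS$ via the counterpart map $\overline{\cdot}\colon N_{\EF'}\to N_{\EF}$. First I would record that $\overline{\cdot}$ is injective (distinct nodes of $\EF'$ carry distinct labels under $\ef'$, and $\ef$ is injective) and that $\ct'(x)=\ct(\overline{x})$ and $\st'(x)=\st(\overline{x})$ for every $x\in N_{\EF'}$, with $\overline{u}=v$. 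This already transports the local situation: every surviving node has exactly the content, status, outgoing arcs, and successors of its counterpart, so if no expansion rule was applicable at $\overline{x}$ in $CS$, none is applicable at $x$ in $CS'$.

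The heart of the argument is completeness. Since $CS$ is complete, its only leaves that are \emph{not} saturated are blocked nodes; a surviving non-saturated leaf of $CS'$ therefore corresponds to a blocked leaf of $CS$, and it suffices to show that each such leaf remains blocked in $CS'$. I would run a case analysis on a blocking pair $(x,y)\in\bl$ (with $x$ the blocking ancestor and $y$ the blocked descendant) according to its position relative to the deleted region $T_c[u]\setminus T_c[v]$. The hypothesis that there is no blocking node $x'<_{T_c}v$ is exactly what excludes the dangerous configurations: it forbids any blocking node from being an ancestor of $v$, hence from lying on the path from the root down to $u$ or strictly between $u$ and $v$; consequently no surviving blocked node can be orphaned by the deletion. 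The remaining pairs lie either entirely outside $T_c[u]$ (so they and their connecting path are untouched) or entirely inside $T_c[v]$ (so they are copied verbatim with their connecting path preserved). In both cases every node of $path_{T_c}(x,y)$ is the counterpart of a node of $\EF'$, so Lemma~\ref{lemma:integrityblockingpath} furnishes a corresponding blocking pair $(x',y')\in\bl'$ with $\overline{x'}=x$ and $\overline{y'}=y$. Pairs contained wholly in $T_c[u]\setminus T_c[v]$ disappear together with their nodes and cause no harm. Hence every non-saturated leaf of $CS'$ is blocked, and $CS'$ is complete.

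For clash-freeness, non-contradiction is immediate from $\ct'(x)=\ct(\overline{x})$: a node of $CS'$ containing $\{a,\naf{a}\}$ would force its counterpart to do so in $CS$. It remains to rule out positive cycles in $G'$. I would argue by contradiction: a positive cycle $C'$ in $G'$ maps, by applying $\overline{\cdot}$ to the arguments of its atoms, to a closed walk in $G$, using that every arc of $A'$ either descends from an arc of $A$ between counterparts or is one of the extra arcs that faithfully mirror a connection of $G$ running into the node $u$ (whose counterpart is $v$). Since $\overline{\cdot}$ is injective on node arguments, distinct atoms of $C'$ map to distinct atoms, so the resulting walk is a genuine positive cycle in $G$, contradicting the clash-freeness of $CS$.

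The main obstacle is the completeness step: one must be certain that removing $T_c[u]\setminus T_c[v]$ cannot leave a surviving blocked node without a blocker, and that the copy of $T_c[v]$ together with the newly introduced arcs does not resurrect a non-empty $connpr$ between a surviving blocking pair. The ``no blocking node above $v$'' hypothesis is precisely the ingredient that makes the case analysis exhaustive and safe, while Lemma~\ref{lemma:integrityblockingpath} (which in turn rests on Lemma~\ref{lemma:connection}) is the tool guaranteeing that a preserved blocking path continues to satisfy the empty-$connpr$ blocking condition after the collapse.
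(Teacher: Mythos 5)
Your proposal is correct and follows essentially the same route as the paper's proof: a case analysis on the surviving blocked leaves according to where the corresponding blocking pair sits relative to the deleted region $T_c[u]\setminus T_c[v]$, using the hypothesis that no blocking node lies strictly below the root and above $v$ to rule out orphaned blocked nodes, and invoking Lemma~\ref{lemma:integrityblockingpath} to transfer the preserved pairs to $\bl'$. The only difference is that you additionally sketch the clash-freeness transfer (non-contradiction and absence of positive cycles in $G'$), which the paper's proof leaves implicit.
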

\begin{proof}
We have to show that $CS'=collapse_{CS}(u,v)$ is complete, that is, no expansion rule further applies to this completion structure. We will consider every leaf node $x$ of $\EF'$ and show that no rule can be applied to further expand such a node. There are three possible cases as concerns the counterpart of $x$ in $\EF$, $\overline{x}$ (which at its turn is a leaf node in $\EF$):
\begin{itemize}
\item  $\overline{x}$ is  a blocked node in $CS$, which does not make part from the tree $T_c$ from which $u$ and $v$ make part. Let $T_d$ be the tree from which $\overline{x}$ makes part: then there is a node $y' \in T_d$ such that $(y',\overline{x}) \in \bl$. No node was eliminated from $T_d$ as a result of the transformation so for every $z \in path_{T_c}(\overline{x},y{'})$, exists $z' \in \EF'$ such that $\overline{z'}=z$. Thus lemma \ref{lemma:integrityblockingpath} can be applied: $(x,y) \in bl'$, where $y$ is the node in $\EF'$ for which $\overline{y}=y'$. So $x$ is a blocked node in $CS$.

\item  $\overline{x}$ is  a blocked node in $CS$ which makes part from the same tree $T_c$ from which $u$ and $v$ also make part: then there is a node $y' \in T_c$ such that $(y',\overline{x}) \in \bl$. Depending on the location of $y'$ in $T_c$ one can distinguish between the following situations :
\begin{itemize}
\item $y' \not >_{T_c} u$ (Figure \ref{figure:shrink1} a)): in this case $y'$ is on a branch which does not contain $u$ and $v$ (as it is also the case that $y' \not < u$ due to the fact that there is no blocking node $x^{\prime}$ such that $\roo \leq x^{\prime} <v$) and it is not eliminated as a result of applying the transformation, so the path from $\overline{x}$ to $y'$ in $T_c$ is preserved as a result of the transformation. Lemma \ref{lemma:integrityblockingpath} can be applied with the result that $(x,y) \in \bl$ where $y$ is the node in $\EF'$ for which $\overline{y}=y'$
\item $y' \geq_{T_c} u$ and $y' \not \geq v$ (Figure \ref{figure:shrink1} b)): in this case $y'$ is eliminated as a result of applying the transformation, but $\overline{x}$ is also eliminated which contradicts with the existence of $x$ in $CS'$. To see why $\overline{x}$ is also eliminated notice that $y' \not \leq v$ (as again this would contradict with the fact that there is no blocking node $x^{\prime}$ such that $\roo \leq x^{\prime} <v$) and $\overline{x} >y'$. This implies that $\overline{x} >u$ and $\overline{x} \not \leq v$ which suggests that $\overline{x}$ is one of the eliminated nodes, too.
\item $y' \geq v$ (Figure \ref{figure:shrink1} c)): in this case $y'$ is not eliminated as a result of applying the transformation, so the path from $\overline{x}$ to $y'$ in $T_c$ is preserved as a result of the transformation. Lemma \ref{lemma:integrityblockingpath} can be applied with the result that $(x,y) \in \bl$ where $y$ is the node in $\EF'$ for which $\overline{y}=y'$
\end{itemize}

\begin{center}
\begin{figure}[htbp]
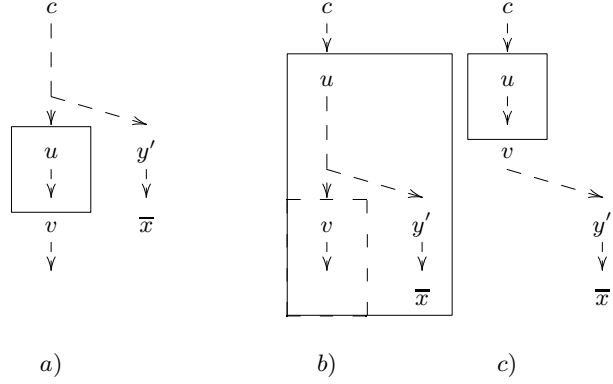

\vspace{3mm}
\Treek[0.5]{1.5}{ \K{$c$} \ARdash{dd}\\
\ARdash{dr}\\
\K{$u$} \ARdash{d} &\K{$y'$} \ARdash{d} \\
\K{$v$} \ARdash{d} &\K{$\overline{x}$}\\
&&\\
a)
\QSS{3,1}{4,1}
}
\Treek[0.5]{1.5}{
\K{$c$} \ARdash{d}\\
\K{$u$} \ARdash{dd} \\
\ARdash{dr}\\
\K{$v$} \ARdash{d} &\K{$y'$} \ARdash{d}\\
&\K{$\overline{x}$}\\
b)
\QS{2,1}{5,2}
\QS[--]{4,1}{5,1}
}
\Treek[0.5]{1.5}{ \K{$c$} \ARdash{d}\\
\K{$u$} \ARdash{d} \\
\K{$v$} \ARdash{dr} \\
&\K{$y'$} \ARdash{d} \\
&\K{$\overline{x}$}\\
c)
\QSS{2,1}{3,1}
}

\vspace{3mm}
\caption{Shrinking a completion structure by eliminating a subtree with a root above any blocking node (the eliminated part is highlighted with continuous line; the part highlighted with dashed line is still kept in)}
\label{figure:shrink1}
\end{figure}
\end{center}

So the conclusion of the analysis above is the existence of a node $y \in T'$ such that $(\overline{y}, \overline{x})\in \bl$. As $connpr_G(\overline{y},\overline{x})=\emptyset$, $connpr_{G'}(y,x)=\emptyset$ as the subtree $T[\overline{y}]$ can be found in $T'$ intact in the form of the subtree $T'[y]$: the eliminated nodes were not part of this subtree as, again, there is no blocking node $x^{\prime}$ in $T$, such that $\roo \leq x^{\prime} <v$.

\item $\overline{x}$ is not a blocked node in $CS$; as $CS$ is complete, no expansion rule can be applied to $\overline{x}$ in $CS$ and, by transfer neither to $x$ in $CS'$ (as they are two nodes which have equal contents which are justified in a similar way).
\end{itemize}
\end{proof}

\begin{lemma}\label{lemma:shrinkunder}
Let $CS=\langle EF,$ $\ct,$ $\st,$ $G,$ $\bl\rangle$ be a complete clash-free relaxed completion structure. If there are three nodes $z$, $u$, and $v$ in $T$ such that $z<u<v$ and there is no blocking node $x^{\prime}$ such that $z < x^{\prime} <v$, and $connpr_G(z,u) \subseteq connpr_G(z,v)$, $collapse_{CS}(u,v)$ returns a complete clash-free relaxed completion structure.
\end{lemma}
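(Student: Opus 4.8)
The plan is to mirror the proof of Lemma~\ref{lemma:shrinkabove}: write $CS'=collapse_{CS}(u,v)=\langle \EF',\ct',\st',G',\bl'\rangle$ and show it is complete and clash-free. Clash-freeness is inherited from $CS$ almost for free, since every node/arc content of $CS'$ is a copy of the content of its counterpart in $CS$ (that is, $\ct'(x)=\ct(\overline{x})$), so no new contradiction can arise; and the absence of positive cycles in $G'$ follows by mapping a hypothetical positive cycle of $G'$ back to $G$ through the counterpart correspondence together with the bridge atoms $f(w,u)$ introduced at the predecessor $w$ of $u$, exactly as in the cycle analysis used for soundness. The substance of the lemma is therefore completeness, i.e.\ that no expansion rule applies to $CS'$.

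For completeness I examine each leaf $x$ of $\EF'$ and show it is either non-expandable or blocked. If its counterpart $\overline{x}$ is non-blocked in $CS$, then, $CS$ being complete, $\overline{x}$ is non-expandable, and since $\ct'(x)=\ct(\overline{x})$ and the outgoing arcs are copied, neither is $x$. If $\overline{x}$ is blocked in $CS$ by an ancestor $y'$ (so $y'<_F\overline{x}$ and $connpr_{G}(y',\overline{x})=\emptyset$), I must produce a blocking node for $x$. I split on the location of $y'$: (1) if $y'$ and $\overline{x}$ both lie outside $T_c[u]$, or (2) if $y'$ lies inside the relocated subtree $T_c[v]$, then the whole tree-path from $y'$ to $\overline{x}$ survives the collapse, so Lemma~\ref{lemma:integrityblockingpath} gives directly that the counterparts form a blocking pair in $CS'$; (3) the case $u\le y'<v$ (and $y'=u$) is impossible, since then $z<y'<v$ would contradict the hypothesis that no blocking node lies strictly between $z$ and $v$ --- note this equally excludes $z<y'<u$, whose nodes are preserved yet still forbidden.

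The one genuinely new case, and the main obstacle, is $y'\le z$: here $y'$ is preserved but the tree-path from $y'$ to $\overline{x}$ runs through the deleted segment between $u$ and $v$, so Lemma~\ref{lemma:integrityblockingpath} no longer applies and I must establish $connpr_{G'}(y',x)=\emptyset$ by hand. Suppose instead there were a $G'$-path from some $p(y')$ to some non-free $q(x)$. Since $z$ and the relocated root $\hat{u}$ (the node at position $u$ with $\overline{\hat{u}}=v$) both lie on the tree-path from $y'$ to $x$, Lemma~\ref{lemma:connection} places unary atoms $r(z)$ and $s(\hat{u})$ on this path, splitting it as $p(y')\rightsquigarrow r(z)\rightsquigarrow s(\hat{u})\rightsquigarrow q(x)$. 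The three segments translate back to $G$ through the counterpart correspondence: the top segment yields $(p(y'),r(z))\in conn_{G}$; the middle segment, using the bridge atom $f(w,u)$ ($w$ the predecessor of $u$) that reattaches $\hat{u}$ at $u$'s old position together with $\ct(u)=\ct(v)$, yields $(r(z),s(u))\in conn_{G}$; and the bottom segment, the relocated subtree being a faithful copy of the old $T_c[v]$, yields $(s(v),q(\overline{x}))\in conn_{G}$.

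It remains to glue the middle segment to the bottom one across the $u$/$v$ mismatch, and this is precisely the role of the hypothesis $connpr_{G}(z,u)\subseteq connpr_{G}(z,v)$. The delicate point is that this inclusion concerns only pairs with a non-free target, so I need $s$ to be non-free; this holds because $s(\hat{u})$ is an interior vertex of the $G'$-path and a free atom, being justified by an empty-bodied free rule, has no outgoing arcs in the dependency graph and can therefore only be a sink (the degenerate case $v=\overline{x}$, where $s=q$ is non-free by assumption, is immediate). Hence $(r,s)\in connpr_{G}(z,u)\subseteq connpr_{G}(z,v)$, giving $(r(z),s(v))\in conn_{G}$; composing $(p(y'),r(z))$, $(r(z),s(v))$ and $(s(v),q(\overline{x}))$ produces a $G$-path from $p(y')$ to the non-free $q(\overline{x})$, i.e.\ $(p,q)\in connpr_{G}(y',\overline{x})$, contradicting $connpr_{G}(y',\overline{x})=\emptyset$. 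Thus $x$ is blocked in $CS'$, which finishes the case analysis and the proof. I expect the freeness bookkeeping at the bridge and the precise translation of the three path segments, rather than the high-level rerouting idea, to be where the real work lies.
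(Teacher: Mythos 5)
Your proof follows essentially the same route as the paper's: the same leaf-by-leaf case analysis reducing all but one configuration of the blocking pair to Lemma~\ref{lemma:integrityblockingpath}, and, for the remaining case where the blocker lies at or above $z$ and the blocked node below $v$, the same key step of splitting a hypothetical $G'$-path at the unary atoms over $z$ and over the relocated node, translating the segments back to $G$, and using $connpr_G(z,u)\subseteq connpr_G(z,v)$ to reassemble a $G$-path from the blocking node to the blocked one, contradicting the blocking condition. Your explicit verification that the intermediate predicate $s$ is non-free (so that the glued pair genuinely lies in $connpr_G(z,u)$ rather than merely in $conn_G$, justified by the observation that free atoms are sinks in the dependency graph) is a detail the paper's proof passes over silently, and is a correct and worthwhile addition rather than a change of approach.
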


\begin{proof}
Like for the lemma above we show that any leaf node in the completion structure $CS'=collapse_{CS}(u,v)$ (or more precisely in the corresponding tree $T'$) cannot be further expanded. Again we consider every such leaf $x$ and we distinguish between three cases as concerns its counterpart in $T$, $\overline{x}$:
\begin{itemize}
\item $\overline{x}$ is  a blocked node in $CS$, which does not make part from the tree $T_c$ from which $u$ and $v$ make part.This case is similar with the first case in the previous lemma.
\item $\overline{x}$ is a blocked node in $CS$ which makes part from the same tree $T_c$ from which $u$ and $v$ make part: then there is a node $y' \in T_c$ such that $(\overline{x}, y') \in \bl$. Using a similar argument as for the previous lemma one concludes that there is a node $y \in T'$ such that $y'=\overline{y}$, or in other words $y'$ has not been eliminated as a result of applying the transformation. In the following we will show that $(y,x) \in bl'$ and $x$ is not further expanded. We will do this on a case-basis considering different locations of $\overline{y}$ and $\overline{x}$ in $T_c$ w.r.t. the nodes $z$, $u$, an $v$ (we consider only those cases in which after the transformation both $\overline{y}$ and $\overline{x}$ are maintained in the structure):
\begin{itemize}
\item $\overline{y} \leq_{T_c} z$ and there is a node $z'$ such that $z'<_{T_c} u$, $z' \geq_{T_c} \overline{y}$, and $\overline{x}>_{T_c} z'$ (Figure \ref{figure:shrink2} a)): in this case the transformation does not remove any node from $path_{T_c}(\overline{y},\overline{x})$ so lemma \ref{lemma:integrityblockingpath} can be applied with the result that $(y,x) \in \bl'
    $.
\item $\overline{y} >_{T_c} v$ (Figure \ref{figure:shrink2} b)): in this case no nodes from the subtree $T_c[\overline{y}]$ are removed during the transformation so using the same argument as above we obtain that $(y,x) \in \bl'$.
\item $\overline{y} \not >_{T_c} z$ and $\overline{y} \not \leq_{T_c} z$ (Figure \ref{figure:shrink2} c)): in this case $\overline{y}$ is not on the same path as $z$, $u$, and $v$ and again the subtree $T_c[\overline{y}]$ is copied intact into $T_c'$, so $(y,x) \in \bl'$.
\item $\overline{y} \leq_{T_c} z$ and $\overline{x} \geq_{T_c} v$: in this case $\overline{y}$, $z$, $u$, $v$ and $\overline{x}$ are all on the same path in $T_c$. Assume by contradiction that $connpr_{G'}(y,x) \neq \emptyset$, or in other words there is a path in $G'$ from a $p(y)$ to some $q(x)$. By lemma \ref{lemma:connection} one obtains that there must be a path $Pt$ between $y$ and $x$ in $\EF{'}$:  note that every such path contains $path_{T'_{c}}(y,x)$. From the same lemma and the previous observation one obtains that there exists a set of unary atoms $l_1, l_2, \ldots, l_n$ in $G'$ with arguments $x_1, x_2, \ldots x_n$, where $path_{T'_{c}}(y,x)=(x_1=y, x_2, \ldots x_n=y)$ such that $(l_i, l_{i+1}) \in connpr_{G'}$, for $1 \leq i <n$. Note that $(l_i, l_{i+1}) \in connpr_{G'}$, for $1 \leq i <n$ implies that $(l_i, l_j) \in connpr_{G'}$, for $1 \leq i <j \leq n$.

    Observe that the counterpart of $z$ from $T_c$ in $T_c'$ is still $z$ and the counterpart of $v$ from $T_c$ in $T_c'$ is $u$, or in other words $\overline{z}=z$ and $\overline{u}=v$. So, $z,u \in path_{T_c'}(x,y)$, or in other words exists $1 \leq j <k \leq n$ such that $x_j=z$ and $x_k=u$. As $(l_1, l_j),$ $(l_j, l_k),$ $(l_k, l_n) \in$ $connpr_{G'}$: $(pred(l_1),$ $ pred(l_j)) \in$ $ connpr_{G'}(y,z)$, $(pred(l_j), pred(l_k)) \in$ $connpr_{G'}(z,u)$, and $(pred(l_k), pred(l_n)) \in$ $connpr_{G'}(u,x)$. 

By definition of $collapse$: $connpr_{G'}(y,u)=$ $connpr_{G}(\overline{y},u)$, $connpr_{G'}(z,u)=$ $connpr_{G}(z,u)$ and $connpr_{G'}(u,y)=$ $connpr_{G}(v,\overline{x})$, so: $(pred(l_1),$ $ pred(l_j)) \in$ $ connpr_{G}(\overline{y},z)$, $(pred(l_j), pred(l_k)) \in connpr_(z,u)$, and $(pred(l_k),$ $ pred(l_n)) \in$ $ connpr_{G'}(v,\overline{x})$. From the lemma condition $connpr_(z,u) \subseteq$ $ connpr_(z,v)$, thus $(pred(l_j), pred(l_k))$ $ \in connpr_{G'}(z,v)$. 

Finally, $(pred(l_1), pred(l_j)) \in $ $ connpr_{G}(\overline{y},z)$, $(pred(l_j),$ $ pred(l_k)) \in$ $ connpr_{G}$ $(z,v)$, and $(pred(l_k), pred(l_n)) \in connpr_{G'}(v,\overline{x})$ implies $(pred(l_1),$ $ pred(l_n)) \in$ $ connpr_{G}(\overline{y},\overline{x})$, which is a contradiction with the fact that $connpr_G(\overline{y}, \overline{x})$ $ = \emptyset$ as $(\overline{y},\overline{x}) \in \bl$. Thus, our assumption is false: $connpr_{G'}$ $(y,x) $ $= \emptyset$, and $(y,x) \in bl'$.
\end{itemize}
\item $\overline{x}$ is not a blocked node in $CS$ (Figure \ref{figure:shrink2} d)); using a similar argument as for the previous lemma one can show that no expansion rule applies to $x$ in $CS'$.
\end{itemize}
\end{proof}

\begin{center}
\begin{figure}[htbp]
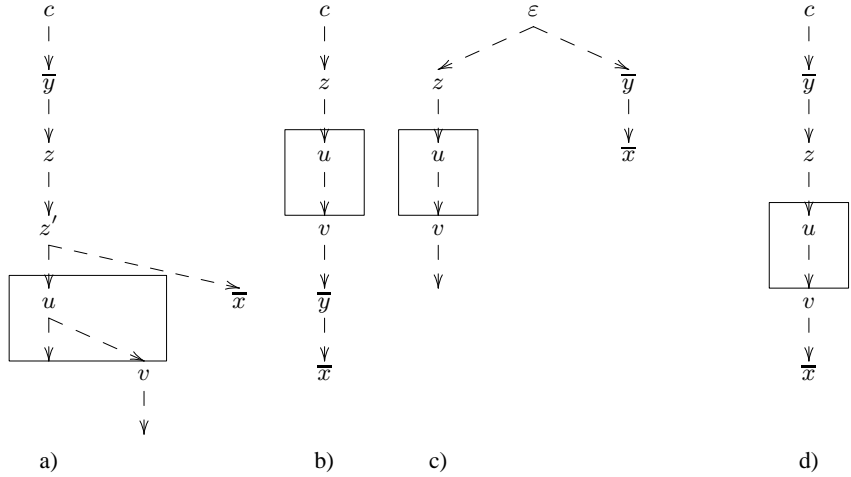

\vspace{3mm}
\Treek[0.5]{1.5}{
\K{$c$} \ARdashc{d}\\
\K{$\overline{y}$}\ARdashc{d}\\
\K{$z$}\ARdashc{d}\\
\K{$z'$}\ARdashc{d}\ARdashc{drr}\\
\K{$u$} \ARdashc{d} \ARdashc{dr}&&\K{$\overline{x}$}\\
&\K{$v$} \ARdashc{d}\\
\Kk{-2}{a)}&\\
\QSSS{5,1}{6,2}
}
\Treek[0.5]{1.5}{
\K{$c$} \ARdashc{d}\\
\K{$z$} \ARdashc{d} \\
\K{$u$} \ARdashc{d} \\
\K{$v$} \ARdashc{d} \\
\K{$\overline{y}$} \ARdashc{d} \\
\K{$\overline{x}$} \\
\Kk{-2}{b)}\\
\QSS{3,1}{4,1}
}
\hspace{3mm}
\Treek[0.5]{1.5}{ &\K{$\roo$} \ARdashc{dl}\ARdashc{dr}\\
\K{$z$} \ARdashc{d} &&\K{$\overline{y}$} \ARdashc{d}\\
\K{$u$} \ARdashc{d} &&\K{$\overline{x}$} \\
\K{$v$}\ARdashc{d}&&&\\
\\
\\
\Kk{-2}{c)}
\QSS{3,1}{4,1}
}
\Treek[0.5]{1.5}{
\K{$c$} \ARdashc{d}\\
\K{$\overline{y}$} \ARdashc{d} \\
\K{$z$} \ARdashc{d} \\
\K{$u$} \ARdashc{d} \\
\K{$v$} \ARdashc{d} \\
\K{$\overline{x}$} \\
\Kk{-2}{d)}\\
\QSS{4,1}{5,1}
}
\vspace{3mm}
\caption{Shrinking a completion structure by eliminating a subtree with a root below a blocking node (the eliminated part is highlighted)}
\label{figure:shrink2}
\end{figure}
\end{center}

Now, we will describe a sequence of transformations on a relaxed clash-free complete completion structure $CS=\langle \EF,$ $\ct,$ $\st,$ $G,$ $\bl $  $\rangle$, $\EF=\langle \F, \ES \rangle$, which returns a complete clash-free completion structure. The transformations which have to be applied to $CS$ are the following (the order in which they are applied is irrelevant):

\begin{itemize}
\item for every two nodes $u$ and $v$ in a tree $T_c \in F$ such that $c<_{T_c}u<_{T_c}v$, $\ct(u)=\ct(v)$, and there is no blocking node $x$, $c \leq_{T_c} x <_{T_c}v$, $collapse_{CS}(u,v)$ (we will call such a transformation a transformation of type 1) ;
\item for every two nodes $u$, and $v$ in a  tree $T_c \in F$ for which there exists a node $z$ in $T_c$ such that $z<_{T_c}u<_{T_c}v$ and there is no blocking node $x$ such that $z <_{T_c} x <_{T_c}v$, and $connpr_G(z,u) \subseteq connpr_(z,v)$, $collapse_{CS}(u,v)$ (we will call such a transformation a transformation of type 2).
\end{itemize}

That the resulted completion structure is complete follows directly from Lemma \ref{lemma:shrinkabove} and Lemma \ref{lemma:shrinkunder}.  We still have to prove the following claim:
\begin{claim}
Let $CS=\langle \EF,$  $\ct,$ $\st,$ $G,$ $\bl \rangle$ be a complete relaxed completion structure to which no transformation of the form described above can be further applied. Then every branch of $CS$ has at most $k=2^p(2^{p^2}-1)+3$ nodes with $p=|\upreds{P}|$.
\end{claim}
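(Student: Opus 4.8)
The plan is to bound the length of an arbitrary branch $IB$ of a tree $T_c\in F$ by cutting it at its blocking nodes and bounding each resulting piece with exactly one of the two transformation rules, exploiting that neither a transformation of type 1 nor of type 2 can be applied to $CS$ any longer. First I would invoke Corollary~\ref{corollary:limitblockingnodes} to fix the blocking nodes $b_1<_{T_c}\cdots<_{T_c}b_m$ occurring on $IB$, so that $m\leq 2^p$ with $p=|\upreds{P}|$; together with the root they split $IB$ into the initial piece lying above $b_1$ and the pieces lying immediately below each $b_i$. The two kinds of pieces are controlled by different mechanisms.

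For the initial piece, the nodes $x$ with $c<_{T_c}x<_{T_c}b_1$, I would use non-applicability of type~1. Since no blocking node lies in $[c,x)$ for any such $x$, two of these nodes with equal content would be eliminated by $collapse_{CS}$ (Lemma~\ref{lemma:shrinkabove}); hence they carry pairwise distinct contents, and as the content of a saturated node is fixed by choosing, for each of the $p$ unary predicates, its positive or its negated occurrence, there are at most $2^p$ of them.

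For each piece below a blocking node I would use non-applicability of type~2. Taking $z=b_i$, any two nodes $u<_{T_c}v$ strictly between $b_i$ and $b_{i+1}$ have no blocking node strictly between $z$ and $v$, so Lemma~\ref{lemma:shrinkunder} would collapse them whenever $connpr_G(b_i,u)\subseteq connpr_G(b_i,v)$; exhaustiveness therefore forces $connpr_G(b_i,u)\not\subseteq connpr_G(b_i,v)$. Reading the piece in increasing depth yields a sequence of values $connpr_G(b_i,\cdot)$, each a subset of $\upreds{P}\times\upreds{P}$ (a set of size $p^2$), in which no earlier value is contained in a later one. The crux, and the step I expect to be the main obstacle, is the combinatorial lemma that such a sequence has length at most $2^{p^2}-1$: the length-$2^{p^2}$ bound for arbitrary subsets is immediate (order by decreasing cardinality), and the sharpening to $2^{p^2}-1$ comes from observing that the empty set cannot occur except possibly as the very last value, since $\emptyset\subseteq connpr_G(b_i,v)$ for any deeper $v$ would re-enable a type-2 collapse (here Lemma~\ref{lemma:connection}, already used in the two shrinking lemmas, guarantees that the $connpr$ argument faithfully mirrors the tree paths).

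Finally I would assemble the pieces: at most $2^p$ pieces below blocking nodes each of size at most $2^{p^2}-1$, the initial piece of size at most $2^p$, and the blocking nodes themselves (recalling from Lemma~\ref{lemma:blockingnodefirst} that they have pairwise distinct contents), yielding $2^p(2^{p^2}-1)+O(2^p)$. The genuinely delicate part of the obstacle is tightening this bookkeeping to the stated constant $2^p(2^{p^2}-1)+3$, by absorbing the blocking nodes and the handful of boundary nodes (root, leaf, and the inclusive/exclusive endpoints of each piece) into the per-piece counts rather than counting them separately. I would close by noting that the resulting bound matches, up to this additive constant, the threshold of the redundancy rule, so no node of the transformed structure is redundant and it is indeed a clash-free \emph{complete} completion structure, completing the completeness proof.
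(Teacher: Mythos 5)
There is a genuine gap, and it begins with what is being counted. The quantity the claim must deliver --- and the one the paper's proof actually bounds --- is the number of nodes \emph{with equal content} on a branch, since that is what the redundancy rule tests (a node is redundant when it has $k$ ancestors of the same content); the depth bound used in the complexity analysis is a factor $2^p$ larger than $k$, so a bound of $k$ on the \emph{total} branch length, which is what you set out to prove, is not even true. Your own assembly makes this visible: the initial segment above the first blocking node can contain up to $2^p$ nodes of pairwise distinct contents and there can be up to $2^p$ blocking nodes, so you land at $2^p(2^{p^2}-1)+\Theta(2^p)$ and hope to absorb the excess into ``$+3$''. That excess is not boundary bookkeeping and cannot be absorbed. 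The paper instead fixes a single content value $s\in 2^{\upreds{P}}$ and runs the entire count for that $s$: at most one node of content $s$ above the first blocking node (else a type-1 collapse applies), at most $2^{p^2}$ or $2^{p^2}-1$ nodes of content $s$ per inter-blocking segment (Lemma~\ref{lemma:countequalnodessegment}), at most one blocking node of content $s$ (Lemma~\ref{lemma:blockingnodefirst}), and then a three-way case analysis on whether $s$ occurs above the first blocking node and whether some blocking node carries content $s$; that case analysis is what produces the exact constant $+3$.

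The second problem is your use of the type-2 transformation. $collapse_{CS}(u,v)$ is only defined for nodes with $\ct(u)=\ct(v)$, so non-applicability of type 2 only yields $connpr_G(z,u)\not\subseteq connpr_G(z,v)$ for pairs of \emph{equal content}; you apply it to arbitrary pairs in a segment, which is unsound, and it also undercuts your ``$\emptyset$ only at the end'' sharpening (a deeper $v$ witnessing $\emptyset\subseteq connpr_G(b_i,v)$ must again have the same content as $u$, and converting $connpr_G(\cdot,u)=\emptyset$ into a blocking pair requires $\ct(u)\subseteq\ct(z)$, which fails for your anchor $z=b_i$). Restricted to a single content class your antichain idea is essentially the paper's argument, except the paper anchors at $z=y_1$, the first node of content $s$ in the segment, obtaining a chain of distinct subsets of $\upreds{P}\times\upreds{P}$ that must exhaust all $2^{p^2}$ values and hence reach $\emptyset$ if the segment had more equal-content nodes than allowed --- and $connpr_G(y_1,y_i)=\emptyset$ together with $\ct(y_i)=\ct(y_1)$ yields a blocking pair, contradicting that $y_i$ is an interior node. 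So your decomposition at blocking nodes and the $connpr_G$-antichain are the right ingredients, but the argument must be carried out per content class throughout.
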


We will analyze every branch of every tree $T_c$ at a time. Consider the current branch is $IB$ and that it contains the blocking nodes $x_1, x_2, \ldots x_n$. From Corollary \ref{corollary:limitblockingnodes} we know that $n \leq 2^p$, where $p=|\upreds{P}|$. The last node of the branch will be denoted with $end$ (Figure \ref{figure:branch}). We split the branch $IB$ in $n+1$ paths and count the maximum number of nodes with a certain content in each of these paths. In order to do this need an additional lemma which is defined next.

\begin{lemma}\label{lemma:countequalnodessegment}
Let $IB$ be a branch in a tree $T_c$ as depicted in Figure \ref{figure:branch}. For a given $s \in 2^{\upreds{P}}$:
 \begin{itemize}
 \item
 for any $1 \leq i <n$, there can be at most $2^{p^2}$ nodes in $path_{T_c}(x_i, x_{i+1})$ with content equal to $s$, in case there is no node $x \in T_c$ such that $c<_{T_c}x\leq_{T_c}x_i$ and $\ct(x)=s$
\item for any $1 \leq i <n$, there can be at most $2^{p^2}-1$ nodes in $path_{T_c}(x_i, x_{i+1})$ with content equal to $s$, except for $x_i$, in case there is a node $x \in T_c$ such that $c<_{T_c}x\leq_{T_c}x_i$ and $\ct(x)=s$
\item there can be at most $2^{p^2}$ nodes in $path_{T_c}(x_n, end)$ with content equal to $s$, except for $x_n$.
 \end{itemize}
\end{lemma}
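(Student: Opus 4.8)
The plan is to bound the number of content-$s$ nodes in a segment by counting how many distinct admissible values $connpr_G(x_i,\cdot)$ can take, exploiting that the final completion structure admits no further collapse transformation. The central observation is that, for a node $w$ with $\ct(w)=s$ lying strictly below a blocking node $x_i$ on the branch, the set $connpr_G(x_i,w)$ consists of pairs $(p,q)$ with $p$ a positive predicate in $\ct(x_i)$ and $q$ a positive non-free predicate in $s$; hence $connpr_G(x_i,w)\subseteq \ct(x_i)^{+}\times s^{+}$, a set of cardinality at most $p^2$ where $p=|\upreds{P}|$, so there are at most $2^{p^2}$ possible values for $connpr_G(x_i,w)$.

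First I would establish pairwise distinctness of these values. Suppose $w<_{T_c}w'$ are two content-$s$ nodes in the same inter-blocking segment $path_{T_c}(x_i,x_{i+1})$, both strictly below $x_i$. Since $x_i$ and $x_{i+1}$ are consecutive blocking nodes on $IB$, there is no blocking node strictly between $x_i$ and $w'$; together with $x_i<_{T_c}w<_{T_c}w'$ this means that the type~2 transformation $collapse_{CS}(w,w')$ with $z=x_i$ would be applicable whenever $connpr_G(x_i,w)\subseteq connpr_G(x_i,w')$. As no further transformation applies, we must have $connpr_G(x_i,w)\not\subseteq connpr_G(x_i,w')$, and in particular $connpr_G(x_i,w)\neq connpr_G(x_i,w')$. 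Thus $w\mapsto connpr_G(x_i,w)$ is injective on the content-$s$ nodes strictly below $x_i$ in the segment, which yields the bound $2^{p^2}$ of the first item; the third item is the same argument with $z=x_i=x_n$, the last blocking node, so that again no blocking node lies strictly below it up to $end$. Note that the hypothesis of the first item ($\ct(x_i)\neq s$) automatically excludes $x_i$ from the count, so all counted nodes are indeed strictly below $x_i$.

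The $-1$ refinement of the second item comes from ruling out the value $\emptyset$. Assume there is an ancestor $x$ with $c<_{T_c}x\leq_{T_c}x_i$ and $\ct(x)=s$, and let $w$ be any content-$s$ node strictly below $x_i$ in the segment. Because $w$ sits strictly inside the branch it is not a blocked leaf, so $(x,w)$ is not a blocking pair; since $\ct(w)=s=\ct(x)$ the containment $\ct(w)\subseteq\ct(x)$ holds, so the only way $(x,w)$ can fail to be a blocking pair is $connpr_G(x,w)\neq\emptyset$. Picking a witnessing $G$-path from some $p(x)$ to some $q(w)$ with $q$ not free, and using $x\leq_{T_c}x_i\leq_{T_c}w$, Lemma~\ref{lemma:connection} guarantees this path contains a unary atom with argument $x_i$, say $r(x_i)$; its suffix from $r(x_i)$ to $q(w)$ witnesses $(r,q)\in connpr_G(x_i,w)$, so $connpr_G(x_i,w)\neq\emptyset$. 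Hence the injective map of the previous paragraph never attains $\emptyset$, and the count over the $2^{p^2}$ possible values drops to $2^{p^2}-1$.

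The main obstacle I expect is the precise verification of the type~2 applicability conditions — that the chosen $z=x_i$ lies strictly above the collapsed pair and, crucially, that no blocking node intervenes strictly between $z$ and $v$ — since this is exactly what forces the partition of the branch at $x_1,\dots,x_n$ and separates the three cases. A secondary point requiring care is the bookkeeping of which endpoint is counted: in the second and third items $x_i$ (resp.\ $x_n$) must be explicitly discarded so that every counted node is strictly below it, which is what legitimizes the type~2 argument with $z=x_i$. Finally, one should keep in mind that $connpr_G$ records only positive, non-free predicates, which is what confines its ground set to $\ct(x_i)^{+}\times s^{+}$ and gives the $2^{p^2}$ ceiling.
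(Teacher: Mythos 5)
Your proof is correct and follows essentially the same route as the paper's: non-applicability of the type-2 collapse forces the sets $connpr_G(z,\cdot)$ at distinct content-$s$ nodes of a segment to be pairwise $\not\subseteq$-related (hence distinct), a pigeonhole over the $2^{p^2}$ subsets of $\upreds{P}\times\upreds{P}$ bounds their number, and the blocking condition (via Lemma~\ref{lemma:connection}) excludes the value $\emptyset$ when a same-content ancestor exists. The differences are minor: you anchor the $connpr_G$ sets at the blocking node $x_i$ rather than at the first content-$s$ node of the segment, you spell out the second and third items which the paper dismisses as ``proved similarly'', and your use of mere non-inclusion is slightly cleaner than the paper's (unjustified) claim of a strictly decreasing chain.
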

\begin{proof}

We will prove that for any $1 \leq i <n$, there can be at most $2^{p^2}$ nodes in $path_{T_c}(x_i, x_{i+1})$ with content equal to $s$ in case there is no node $x \in T_c$ such that $c<_{T_c}x\leq_{T_c}x_i$ and $\ct(x)=s$. Assume by contradiction that there are at least $2^{p^2}+1$ nodes in $path_{T_c}(x_i, x_{i+1})$ with content equal to $s$. Let's call these node $y_1, y_2, \ldots, y_m$, where $m > 2^{p^2}$. It is necessary that $connpr_G(y_1, y_i) \supset connpr_G(y_1, y_{i+1})$ for every $1 < i <m$, otherwise a transformation of type 2 could be further applied to $CS$. As $connpr_G(x,y)\subseteq\upreds{P}\times\upreds{P}$ and $|2^{\upreds{P}\times\upreds{P}}|=2^{p^2}$, and there at least $2^{p^2}$ distinct values for $connpr_G(y_1, y_i)$, when $1 < i <m$, there must be an $1 < i <m$ such that $connpr_G(y_1, y_i)$ $=\emptyset$. But in this case $(y_1, y_i) \in \bl$ (as the two nodes also have equal content) which contradicts with the fact that $y_i \neq end$.
The other cases are proved similarly.
\end{proof}

Now we will proceed to the actual counting. Let $s \in 2^{\upreds{P}}$ be a possible content value for any node in $IB$. We will count the maximum number of nodes with content $s$ in $IB$ - in order to do this we have to distinguish between three different cases as regards $s$:

\begin{center}
\begin{figure}[htbp]
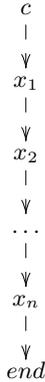

\vspace{3mm}
\Treek[0.5]{1.5}{
\K{$c$} \ARdashc{d}\\
\K{$x_1$}\ARdashc{d}\\
\K{$x_2$}\ARdashc{d}\\
\K{$\ldots$}\ARdashc{d}\\
\K{$x_n$} \ARdashc{d} \\
\K{$end$} \\
}
\caption{A random branch $IB$ in the resulted complete clash-free relaxed completion structure: $x_1$, \ldots, $x_n$ are blocking nodes}
\label{figure:branch}
\end{figure}
\end{center}

\begin{itemize}
\item there is no node $x \in T_c$ with $c<_{T_c}x<_{T_c}x_1$ such that $\ct(x)=s$, and there is no $1 \leq i \leq n$ such that $\ct(x_i)=s$. In this case there is maximum 1 node with content equal to $s$ in $path_{T_c}(c,x_1)$ (the root), maximum $2^{p^2}$ nodes in each $path_{T_c}(x_i,x_i+1)$ and maximum $2^{p^2}$ nodes in $path_{T_c}(x_n,end)$ (according to lemma \ref{lemma:countequalnodessegment}); for the last path there cannot be $2^{p^2}+1$ nodes as that would mean that $end$  is a blocked node with content equal to $s$, so there would be a blocking node with content equal to $s$, which contradicts with the fact the hypothesis there is no blocking node with content equal to $s$). Also there are at most $2^p-1$ blocking nodes (if there would be $2^p$ such nodes, the maximum indicated by corollary \ref{corollary:limitblockingnodes} there would remain no valid value for $s$). Summing all up, in this case there are at most $2^{p^2}(2^p-1)+1$ nodes with content equal to $s$.
\item there is no node $x$ such that $c<_{T_c}x<_{T_c}x_1$ such that $\ct(x)=s$ but there is a node $x_i$, $1 \leq i \leq n$ such that $\ct(x_i)=s$. In this case there is no node $x$ such that $c<_{T_c}x<_{T_c}x_i$ which has content equal to $s$ (lemma \ref{lemma:blockingnodefirst}), and thus $path_{T_c}(c,x_1)$ maximum 1 node with content equal to $s$ (the root). $path_{T_c}(x_i,x_{i+1})$ has maximum $2^{p^2}$ nodes, every path $(x_j, x_{j+1})$, where $i < j <n$ has maximum $2^{p^2}-1$ nodes, and the path $(x_n, end)$ has maximum $2^{p^2}$ nodes (according to lemma \ref{lemma:countequalnodessegment}). Summing all up, in this case there are at most $(2^{p^2}-1)(n-i+1)+3$ nodes with content equal to $s$, where $n$ is the number of blocking nodes. There are at most $2^p$ blocking nodes (corollary \ref{corollary:limitblockingnodes}), so the maximum of the expression is met when $i=1$ and $n=2^p$ and is  $2^p(2^{p^2}-1)+3$.
\item there is a node $x$ such that $c<_{T_c}x<_{T_c}x_1$ and $\ct(x)=s$. In this case $\ct(x_i) \neq s$, for every $1 \leq i \leq n$ (lemma \ref{lemma:blockingnodefirst}). The counting is as follows: $path_{T_c}(c,x_1)$ has maximum 1 node with content equal to $s$ ($x$), otherwise a transformation of type 1 could be applied, $path_{T_c}(x_i,x_{i+1})$ has maximum $2^{p^2}-1$ nodes, $1 \leq i <n$ and the path $(x_n, end)$ has maximum $2^{p^2}$ nodes (according to lemma \ref{lemma:countequalnodessegment}). Also there are at most $2^p-1$ blocking nodes (if there would be $2^p$ such nodes, the maximum indicated by corollary \ref{corollary:limitblockingnodes} there would remain no valid value for $s$). Summing all up, in this case there are at most $(2^{p^2}-1)(2^p-1)+1$ nodes with content equal to $s$.
\end{itemize}

From the three cases the maximum of number of nodes with content equal to a given $s$ in any branch $IB$ of a tree $T_c \in F$ is $2^p(2^{p^2}-1)+3$, which is exactly $k$.

At this point we have a complete relaxed clash-free completion structure with at most $k$ nodes on any branch, thus a complete clash-free completion structure for $p$ w.r.t. $P$.
\end{proof}

\end{document}